\theoremstyle{plain}
\newtheorem{theorem}{Theorem}
\newtheorem{lemma}{Lemma}
\theoremstyle{definition}
\providecommand{\ket}[1]{\lvert#1\rangle}
\providecommand{\bra}[1]{\langle#1\rvert}
\providecommand{\ketbra}[2]{\rvert#1\rangle\!\langle#2\lvert}
\providecommand{\braketinner}[3]{\langle#1|#2|#3\rangle}
\newcommand{\orb}{\operatorname{orb}}
\newcommand\walpha{a}
\newcommand\wbeta{b}
\def\locc{\xrightarrow{\textup{\tiny{LOCC}}}}
\newcommand{\cohat}[1]{\widehat{#1}}%{\operatorname{co}#1}%{#1^{\operatorname{co}}}
\newcommand{\co}[1]{\operatorname{co}#1}%{#1^{\operatorname{co}}}
\def\:{\colon}
\newcommand\D{\mathcal{D}}
\newcommand\U{\operatorname{U}}
\newcommand\Orth{\operatorname{O}}
\newcommand\LU{\operatorname{LU}}
\newcommand{\wer}{\operatorname{wer}}
\newcommand{\isot}{\operatorname{iso}}%{\textsc{iso}}%{\mathrm I}
\DeclareMathOperator{\Tr}{Tr}
\def\CC{\mathbb{C}}
\def\RR{\mathbb{R}}
\def\calT{\mathcal{T}}
\def\calG{\mathcal{G}}
\def\calH{\mathcal{H}}
\providecommand{\abs}[1]{\left\lvert#1\right\rvert}
\providecommand{\norm}[1]{\left\lVert#1\right\rVert}
\providecommand{\vecbf}[1]{\mathbf{#1}}
\renewcommand{\vec}[1]{{\boldsymbol{#1}}}
\begin{document}

\title{Entanglement monotones and transformations of symmetric bipartite states}
\date{\today}
\author{Mark W.\ Girard}
\email{mwgirard (at) ucalgary.ca}
\affiliation{Department of Mathematics and Statistics, University of Calgary}
\affiliation{Institute for Quantum Science and Technology, University of Calgary\\ 2500 University Dr NW, Calgary, AB T2N 1N4, Canada}
\author{Gilad Gour}
\affiliation{Department of Mathematics and Statistics, University of Calgary}
\affiliation{Institute for Quantum Science and Technology, University of Calgary\\ 2500 University Dr NW, Calgary, AB T2N 1N4, Canada}

\begin{abstract}

The primary goal in the study of entanglement as a resource theory is to find conditions that determine when one quantum state can or cannot be transformed into another via local operations and classical communication. This is typically done through entanglement monotones or conversion witnesses. Such quantities cannot be computed for arbitrary quantum states in general, but it is useful to consider classes of symmetric states for which closed-form expressions can be found. In this paper, we show how to compute the convex roof of any entanglement monotone for all Werner states. The convex roofs of the well-known Vidal monotones are computed for all isotropic states, and we show how this method can generalize to other entanglement measures and other types of symmetries as well. We also present necessary and sufficient conditions that determine when a pure bipartite state can be deterministically converted into a Werner state or an isotropic state. 

\end{abstract}

\maketitle

%~~~~~~~~~~~~~~~~~~~~~~~~~~~~~~~~~~~~~~~~~~~~~~~~~~~~~~~

\section{Introduction}

One of the main goals in quantum information theory has been to develop the theory of entanglement as a resource \cite{Vedral1997,Wootters1998a,Ishizaka2005,Plenio2007,Plenio2014}. The resource theory of entanglement arises naturally from the ``distant labs'' setting \cite{Horodecki2009}, in which the class of operations that can be performed is known as LOCC (local operations and classical communication). This is the class of local operations that can be implemented by separated parties acting locally upon their particles in a quantum manner, while coordinating their actions with the use of classical communication. 

The resourcefulness of entanglement is well known, and many famous quantum information processing tasks explicitly require the consumption of entanglement. It is therefore important to quantify the amount of entanglement in a state and to understand the structure of the states that is imposed by the restriction to LOCC. While the structure of entanglement in bipartite pure states has been completely characterized \cite{Nielsen1999}, a complete understanding of entanglement in arbitrary bipartite mixed states remains elusive. 

Entanglement is typically quantified via entanglement monotones---quantities that do not increase on average under local measurements and classical communication \cite{Vidal2000a}. Entanglement monotones on pure states can be obtained from functions of the vectors of Schmidt coefficients. Without loss of generality, we may write all bipartite pure states in Schmidt form as $\ket{\psi}=\sum_{i}\sqrt{\lambda_i}\ket{ii}$ for some Schmidt vector $\vec{\lambda}$. Every entanglement monotone corresponds to a function $f$ from the simplex of probability vectors to the real numbers, where $f$ is both symmetric and concave \cite{Vidal2000a}. Each such function defines an entanglement monotone on pure states $\ket{\psi}$, $E_f(\psi) = f(\vec{\lambda})$, by evaluating $f$ on the vector of Schmidt coefficients $\vec\lambda$ of $\ket{\psi}$, and every entanglement monotone can be obtained this way. Some well-known entanglement monotones include the entropy of entanglement $E(\psi) = H(\vec{\lambda})$ (where $H(\vec{\lambda})=-\sum_{i}\lambda_i\log\lambda_i$ is the Shannon entropy) and the Renyi $\alpha$ entropies of entanglement $E_\alpha(\psi) = H_\alpha(\vec{\lambda})$ (where $H_\alpha(\vec{\lambda})=\frac{1}{1-\alpha}\log\bigl(\sum_{i}\lambda_i^\alpha\bigr)$ is the Renyi $\alpha$ entropy of a probability distribution for $\alpha>0$ and $\alpha\neq 1$). Another class of important monotones for bipartite pure states was introduced by Vidal \cite{Vidal1999} and is defined as follows. Assuming $\ket{\psi}\in\CC^d\otimes\CC^d$ with $d\geq 2$, for each $k\in\{1,\dots,d-1\}$ we can define the monotones as
\begin{equation}\label{eq:vidalmonotones}
 E_k(\psi) = 1-\sum_{i=1}^k \lambda_i = \sum_{i=k+1}^d\lambda_i,
\end{equation}
where  the Schmidt coefficients of $\ket{\psi}$ are in decreasing order $\lambda_1\geq \cdots\geq\lambda_d$. For each $k$, $E_k(\psi)$ is the sum of the $d-k$ smallest Schmidt coefficients of $\ket{\psi}$. 

Entanglement monotones on pure states can be extended to arbitrary mixed states by a convex roof construction \cite{Vidal2000a,Uhlmann2010,Bucicovschi2011}. Given an entanglement monotone $E$ on pure states, its convex roof on mixed states is defined as
\[
 \widehat{E}(\rho) = \inf_{\{p_i,\ket{\psi_i}\}}\sum_{i}p_i E(\psi_i),
\]
where the infimum is taken over all pure state decompositions of $\rho=\sum_i p_i \ket{\psi_i}\bra{\psi_i}$. While there are many known entanglement monotones for bipartite pure states, evaluating the entanglement of arbitrary mixed states is in general not possible. In this paper, we show how to compute convex roof entanglement monotones on certain symmetric classes of entangled states. In particular, we compute the convex roofs of the Renyi entropies and the Vidal monotones on Werner and isotropic states \cite{Vollbrecht2001,Terhal2000}. 

Symmetry plays a very important role in many quantum information tasks. Restricting our attention to highly symmetric states not only simplifies many computations, but yields valuable information about the structure of bipartite entanglement. There is strong evidence that certain symmetric states may provide an example of bound entangled states that have negative partial transposes \cite{DiVincenzo2000}. We can restrict our attention only to states that are symmetric in some manner, for example the well-known Werner and isotropic states, and exploit that symmetry to compute the convex roof of certain entanglement monotones on those families of states. For example, the entanglement of formation has been computed for Werner states \cite{Vollbrecht2001} and isotropic states \cite{Terhal2000}. Convex roofs of some generalizations of the concurrence \cite{Gour2005} have been computed for isotropic states as well \cite{Eltschka2015,Sentis2016}.

In this work, we expand on existing methods \cite{Vollbrecht2001,Terhal2000} to compute the convex roofs of many more entanglement monotones for these classes of symmetric states and more. In particular, we show that our method can be used to compute the convex roof on Werner states for \emph{all} possible entanglement monotones on pure states.  We also compute the convex roof of the Vidal monotones in Eq.~\eqref{eq:vidalmonotones} and certain other entanglement monotones for isotropic states. We also extend these methods to compute the convex roof on larger classes of symmetric states.

While entanglement monotones are important for quantifying entanglement within states, it is also important to characterize which transformations between states can be performed via LOCC \emph{deterministically}. For bipartite pure states, this is completely characterized by majorization of the vectors of Schmidt coefficients \cite{Nielsen1999}, or equivalently by the Vidal monotones \cite{Vidal2000b}. Only a finite number of entanglement measures are needed to determine the convertibility of bipartite pure states, but an infinite number of entanglement measures are needed to completely determine convertibility of mixed states \cite{Gour2005a}. To characterize the convertibility of mixed states, we can instead make use of entanglement \emph{conversion witnesses} \cite{Girard2015a,Gour2013}. An entanglement conversion witness is a function of two bipartite quantum states whose value ``detects'' when one state can be converted into another. For example, a \emph{no-go} entanglement conversion witness is a function $W(\rho,\sigma)$ such that $W(\rho,\sigma)<0$ implies that $\rho$ cannot be converted to $\sigma$ with a deterministic LOCC operation. Similarly, a \emph{go} entanglement conversion witness is a function $W(\rho,\sigma)$ such that $W(\rho,\sigma)\geq0$ implies the existence of a deterministic LOCC protocol that converts $\rho$ into $\sigma$. A witness is \emph{complete} if it is both a go and a no-go witness.

In Ref.~\cite{Gheorghiu2008}, it was shown that a bipartite pure state $\ket{\psi}$ can be converted into a bipartite mixed state $\rho$ if and only if
\[
 E_k(\psi) \geq\sum_{i}p_iE_k(\varphi_i)
\]
holds for all $k$ and all decompositions $\rho=\sum_i p_i \ketbra{\varphi_i}{\varphi_i}$. This necessary and sufficient condition for LOCC transformation can be encoded into the following complete witness:
\[
 W(\psi,\rho) = \max_{\{p_i,\varphi_i\}}\min_k \biggl( E_k(\psi) -\sum_{i}p_iE_k(\varphi_i)\biggr).
\]
It holds that $W(\psi,\rho)\geq0$ if and only if $\ket{\psi}$ can be converted into $\rho$ via LOCC. Although this function cannot be computed for arbitrary mixed states, we can make extensive use of symmetry to compute it in the case when $\rho$ is highly symmetric (e.g.~Werner or isotropic). In the final section of this paper, we show how to compute a class of entanglement transformation witnesses for pure to mixed bipartite state conversion in the case when the target mixed state is symmetric. 

The remainder of this paper is structured as follows. The necessary background for constructing convex roof functions, the definition of the Werner and isotropic states, and other preliminary matter are presented in Sec.~\ref{sec:prelim}. Convex roofs of certain entanglement monotones are evaluated on Werner and isotropic states in Sec.~\ref{sec:convexroofs}. An entanglement transformation witness for pure to mixed state conversion is presented in Sec.~\ref{sec:witnesses}, where it is also shown how to evaluate this witness when the target state is a Werner state or an isotropic state. %Concluding remarks are given in section~\ref{sec:conclude}.

%%%%%%%%%%%%%%%%%%%%%%%%%%%%%%%%%%%%%%%%%%%%%%%%%%%%%%%%%%%%%%%%

\section{Preliminaries}
\label{sec:prelim}

In this section, we review the notion of a convex roof of an arbitrary function. The  details from Ref.~\cite{Vollbrecht2001} that are necessary for computing the convex roofs of functions under generalized symmetry are summarized. We also review the types of bipartite symmetries that we will analyze, in particular the Werner and isotropic states and generalizations of these symmetries.

%~~~~~~~~~~~~~~~~~~~~~~~~~~~~~~~~~~~~~~~~~~~~~~~~~~~~~~~
\subsection{Convex roofs and symmetry}
In the following, we use the notation $\overline{\RR}=\RR\cup\{+\infty\}$. Let $K$ be a compact set, $M\subset K$, and let $f\:M\rightarrow\RR$. The \emph{convex roof} of $f$ over $K$ is the function $\cohat{f}:K\rightarrow\overline{\RR}$ defined as
\begin{equation}\label{eq:cohatf}
 \cohat{f}(x) = \inf\biggl\{\sum_{i}p_if(y_i)\,\bigg|\,y_i\in M,\, \sum_{i}p_i y_i =x \biggr\},
\end{equation}
for any $x\in\co(M)$ in the convex hull of $M$. The infimum in Eq.~\eqref{eq:cohatf} is taken over all convex combinations with $p_i\geq 0$ and $\sum_{i}p_{i}=1$. Note that $\widehat{f}(x)=\infty$ if $x\notin\co(M)$.

Let $\calG$ be a compact group with a $\calG$-action $g\cdot x$ on~$K$ that preserves convex combinations (i.e.\ $g\cdot(tx+(1-t)y)=tg\cdot x + (1-t)g\cdot y$ for any $x,y\in K$ and any $t\in[0,1]$). Then the $\calG$-\emph{twirling} operator $\calT_\calG:K\rightarrow K$ is defined as
\begin{equation}
 \calT_\calG (x) = \int_\calG dg\, g\cdot x,
\end{equation}
for all $x\in K$, where the integral is taken over the Haar measure of the group. If $\calT_\calG(y)=x$ then we say that~$y$ \emph{twirls} to $x$ under $\calG$. The $\calG$-invariant elements $x\in K$ are exactly those that satisfy $\calT_\calG(x)=x$, and the subset of $\calG$-invariant elements of $K$ is denoted as $\calT_\calG(K)$.

Given any function $f\:M\rightarrow \RR$ on a subset $M\subset K$, we define the function $f_\calG\:\calT_\calG(K)\rightarrow\overline{\RR}$ as
\begin{equation}\label{eq:fG}
 f_\calG(x) = \inf \left\{f(y)\,\middle|\,y\in M,\, \calT_\calG(y)=x \right\}
\end{equation}
for all $x\in\calT_\calG(K)$. As the following theorem shows, this definition allows us to find a different expression for the convex roof of a function $f\:M\rightarrow\overline{\RR}$ evaluated on $\calG$-invariant elements of $K$. This is the primary tool that we will use to compute convex roof entanglement monotones on the Werner and isotropic states.

\begin{theorem}[Sec.\ IV.A in Ref.~\cite{Vollbrecht2001}]\label{thm:coco}
 Let $\calG$ be a compact group and $K$ be a compact convex set with a $\calG$-action that preserves convex combinations, and let $f\:M\rightarrow\overline{\RR}$ be a function on a subset $M\subset K$. It holds that
 \begin{equation}
  \cohat{f}(x) = \cohat{f_\calG}(x)
 \end{equation}
for all $x\in\calT_\calG(K)$.
\end{theorem}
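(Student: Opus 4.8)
The plan is to prove the two inequalities separately, after first recasting $\cohat{f_\calG}$ in a more workable form. Unfolding the definitions~\eqref{eq:cohatf} and~\eqref{eq:fG}, and using that $\calT_\calG$ is affine and fixes every point of $\calT_\calG(K)$, I would first show that
\[
 \cohat{f_\calG}(x)=\inf\bigl\{\cohat f(w)\ \big|\ w\in\co{(M)},\ \calT_\calG(w)=x\bigr\}
\]
for every $x\in\calT_\calG(K)$. For ``$\ge$'', take an $\epsilon$-optimal decomposition $x=\sum_iq_iz_i$ with $z_i\in\calT_\calG(K)$, choose $y_i\in M$ with $\calT_\calG(y_i)=z_i$ and $f(y_i)\le f_\calG(z_i)+\epsilon$, and group the $y_i$ into $w=\sum_iq_iy_i\in\co{(M)}$: then $\calT_\calG(w)=\sum_iq_iz_i=x$ while $\cohat f(w)\le\sum_iq_if(y_i)\le\cohat{f_\calG}(x)+2\epsilon$. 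For ``$\le$'', map any decomposition $w=\sum_ip_iy_i$ to the points $z_i=\calT_\calG(y_i)\in\calT_\calG(K)$ and use $f_\calG(z_i)\le f(y_i)$. Once this identity is available, one direction of the theorem is immediate: taking $w=x$ (admissible when $x\in\co{(M)}$, since $\calT_\calG(x)=x$, and otherwise $\cohat f(x)=+\infty$) gives $\cohat{f_\calG}(x)\le\cohat f(x)$.

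The substance of the theorem is the reverse inequality $\cohat f(x)\le\cohat{f_\calG}(x)$, which by the identity above is equivalent to the statement that twirling never increases the convex roof, namely $\cohat f(\calT_\calG(w))\le\cohat f(w)$ for every $w\in\co{(M)}$. I would deduce this from three properties of $\cohat f$: (i) $\cohat f$ is convex on $\co{(M)}$, a standard fact obtained by concatenating near-optimal decompositions of the individual points; (ii) $\cohat f$ is $\calG$-invariant, $\cohat f(g\cdot w)=\cohat f(w)$ for all $g\in\calG$, which holds because the $\calG$-action maps $M$ into itself and $f$ is $\calG$-invariant on $M$ --- features shared by every family of states treated here, where the symmetric pure-state sets are $\calG$-invariant and the monotones depend only on the Schmidt vector --- so that $y_i\mapsto g\cdot y_i$ is a cost-preserving bijection between decompositions of $w$ and of $g\cdot w$; and (iii) $\cohat f$ is lower semicontinuous, valid whenever $M$ is compact and $f$ is lower semicontinuous, in particular for the Vidal monotones and the Renyi entropies. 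Given these, $\calT_\calG(w)=\int_\calG dg\,(g\cdot w)$ is a $\calG$-average of the orbit of $w$, and Jensen's inequality for the convex, lower semicontinuous function $\cohat f$ yields $\cohat f(\calT_\calG(w))\le\int_\calG dg\,\cohat f(g\cdot w)=\cohat f(w)$, the final equality being $\calG$-invariance. Equivalently, one may approximate the Haar integral defining $\calT_\calG(w)$ by finite convex combinations of orbit points --- honest convex combinations of elements of $M$, of the same cost as a near-optimal decomposition of $w$, because $M$ and $f$ are $\calG$-invariant --- and pass to the limit using lower semicontinuity of $\cohat f$.

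The main obstacle, I expect, is property (iii) and its use: showing that $\cohat f$ inherits lower semicontinuity from compactness of $M$ and continuity of $f$ (equivalently, that the convex-roof infimum is unchanged if one allows continuous rather than only finite convex combinations), and then justifying Jensen's inequality in this setting. The $\calG$-invariance of $M$ and of $f$ is used essentially and must be kept explicit, but apart from that the argument draws only on elementary properties of the twirling map and of convex roofs; the residual bookkeeping (uniform handling of the value $+\infty$, and the compactness and $\calT_\calG$-invariance of $\co{(M)}$ in finite dimension) is routine.
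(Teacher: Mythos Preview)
The paper does not supply its own proof of this statement; it is quoted from Vollbrecht--Werner and used as a black box, so there is nothing to compare your argument against line by line. Your two-inequality strategy via the identity $\cohat{f_\calG}(x)=\inf\{\cohat f(w):w\in\co(M),\ \calT_\calG(w)=x\}$ is the standard route and is sound.

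The point most worth making explicit is one you flag only parenthetically: as written in the paper, the theorem is missing the hypotheses that $M$ is $\calG$-stable and that $f$ is $\calG$-invariant on $M$. These are not cosmetic. Without them the crucial inequality $\cohat f(\calT_\calG(w))\le\cohat f(w)$ can fail outright: take $M=\{y_0\}$ with $y_0$ not a fixed point of $\calG$, so that $\cohat f(y_0)=f(y_0)$ is finite while $\cohat f(\calT_\calG(y_0))=+\infty$, and the claimed equality $\cohat f=\cohat{f_\calG}$ breaks at $x=\calT_\calG(y_0)$. In every application in the paper $M$ is the set of pure states and $f$ is a function of Schmidt coefficients, both invariant under the local-unitary subgroups considered, so the extra hypotheses are always met; but your proof should state them as assumptions rather than as incidental features. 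With those hypotheses in hand, your Jensen/approximation step is fine in this finite-dimensional setting (compact $M$, continuous $f$ imply $\cohat f$ is lower semicontinuous on $\co(M)$), and the remaining bookkeeping is indeed routine.
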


To compute the convex roof $\cohat{f}$ of a function $f$ on the $\calG$-invariant elements of $K$, the result of Theorem \ref{thm:coco} implies that we can simplify the computation by first minimizing $f$ over all $y\in M$ that twirl to $x$. Computing the convex roof of the resulting function yields the desired result. This computation is simplified greatly if $f_\calG$ is already convex as a function of $\calG$-invariant elements, in which case $\cohat{f}(x)$ reduces to $f_\calG(x)$. Note that both $f_\calG$ and $\widehat{f_\calG}$ are functions on the convex subset $\calT_\calG(K)\subset K$ of elements that are invariant under the action of $\calG$.

One basic feature of convex roof functions is the existence of `linear sections' in the roof function whenever the infimum in Eq.~\eqref{eq:cohatf} is found at a non-trivial convex combination. The result of Lemma \ref{lem:nonsymmelems} (which is proven in Ref.~\cite{Vollbrecht2001}) allows us to compute convex roof functions on some elements that are not necessarily symmetric with respect to the group action.
\begin{lemma}\label{lem:nonsymmelems}
Suppose $x=\sum_ip_ix_i\in K$ is a convex combination of elements $x_i\in M$ with $p_i>0$ for each $i$ such that $\widehat{f}(x)=\sum_ip_if(x_i)$ is minimized. Then $\widehat{f}$ is linear on the convex hull of $\{x_i\}$. That is, it holds that
 \begin{equation}
  \widehat{f}\Bigl(\sum_i t_i x_i\Bigr) = \sum_{i} t_i f(x_i)
 \end{equation}
for all $t_i\in[0,1]$ satisfying $\sum_i t_i = 1$.
\end{lemma}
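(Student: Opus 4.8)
The plan is to exploit the defining property of the convex roof together with the hypothesis that the infimum at $x$ is attained. First I would observe that, since $\widehat f(x)=\sum_i p_i f(x_i)$ is the minimum, any other decomposition of $x$ must give a value at least this large; in particular $\widehat f(x)\le \widehat f(x)$ is not new information, but what matters is that \emph{refining} this decomposition cannot help. The key idea is to show that if $\widehat f$ failed to be affine on $\co\{x_i\}$, then by strict convexity somewhere one could produce a \emph{strictly better} decomposition of $x$ itself, contradicting minimality.

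Concretely, I would proceed as follows. Fix weights $t_i\in[0,1]$ with $\sum_i t_i=1$ and set $z=\sum_i t_i x_i$. Upper bound: the tuple $\{(t_i,x_i)\}$ is itself a valid decomposition of $z$ into elements of $M$, so $\widehat f(z)\le \sum_i t_i f(x_i)$ directly from Eq.~\eqref{eq:cohatf}. This half is immediate and requires no hypothesis. Lower bound: this is where the optimality of the decomposition of $x$ enters, and I expect it to be the main obstacle. The point is that $x$ lies in the relative interior of $\co\{x_i\}$ (all $p_i>0$), so any point $z\in\co\{x_i\}$ can be written as $x = \alpha z + (1-\alpha) w$ for some $\alpha\in(0,1]$ and some $w\in\co\{x_i\}$ — explicitly, choose $\alpha$ small enough that $p_i - \alpha t_i\ge 0$ for all $i$, and let $w = \frac{1}{1-\alpha}\sum_i(p_i-\alpha t_i)x_i$. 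Now take any near-optimal decomposition of $z$ and any near-optimal decomposition of $w$; combining them with weights $\alpha$ and $1-\alpha$ gives a decomposition of $x$, whence by minimality
\[
 \sum_i p_i f(x_i) = \widehat f(x) \le \alpha\,\widehat f(z) + (1-\alpha)\,\widehat f(w).
\]
On the other hand $\widehat f$ is convex (being a convex roof), so $\widehat f(w)\le \frac{1}{1-\alpha}\sum_i(p_i-\alpha t_i)f(x_i)$ using that $\{(p_i-\alpha t_i)/(1-\alpha)\}$ is a valid decomposition of $w$. Substituting this bound and cancelling $\sum_i p_i f(x_i)$ from both sides yields $\alpha\sum_i t_i f(x_i)\le \alpha\,\widehat f(z)$, i.e.\ $\sum_i t_i f(x_i)\le \widehat f(z)$, which is exactly the reverse inequality.

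Combining the two bounds gives $\widehat f(z)=\sum_i t_i f(x_i)$ for all such weight tuples, which is the asserted affinity (``linearity'') of $\widehat f$ on $\co\{x_i\}$. The main subtlety to handle carefully is the passage from ``near-optimal'' decompositions to genuine inequalities: one should either phrase everything with $\varepsilon$'s and let $\varepsilon\to0$, or invoke compactness of $K$ to guarantee the infimum in Eq.~\eqref{eq:cohatf} is attained so that exact optimal decompositions exist. A secondary point worth checking is that $\widehat f(z)$ and $\widehat f(w)$ are finite (they are, since $z,w\in\co M$ and each $f(x_i)$ is finite as $x_i\in M$), so none of the arithmetic above is of the form $\infty-\infty$. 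Everything else is a routine manipulation of convex combinations.
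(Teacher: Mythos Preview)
Your argument is correct. The paper does not actually supply its own proof of this lemma: it merely states the result and attributes the proof to Ref.~\cite{Vollbrecht2001}, so there is no in-paper argument to compare against. Your proof is the standard one for this fact about convex roofs: the upper bound $\widehat f(z)\le\sum_i t_i f(x_i)$ is immediate from the definition, and for the lower bound you correctly use that $p_i>0$ for all $i$ to write $x=\alpha z+(1-\alpha)w$ with $w\in\co\{x_i\}$, then combine the optimality of the given decomposition of $x$ with the trivial upper bound on $\widehat f(w)$ to squeeze out $\sum_i t_i f(x_i)\le \widehat f(z)$.

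Two minor cosmetic points. First, where you write ``$\widehat f$ is convex (being a convex roof), so $\widehat f(w)\le \frac{1}{1-\alpha}\sum_i(p_i-\alpha t_i)f(x_i)$'': this inequality is not really a consequence of convexity of $\widehat f$ but rather directly of the definition of $\widehat f$ as an infimum, applied to the explicit decomposition $w=\sum_i\frac{p_i-\alpha t_i}{1-\alpha}x_i$ into elements of $M$. The conclusion is the same. Second, your remark about handling the infimum via $\varepsilon$'s is the right way to phrase the step $\widehat f(x)\le\alpha\,\widehat f(z)+(1-\alpha)\,\widehat f(w)$ without assuming minimizers exist; compactness of $K$ alone does not guarantee attainment of the infimum in Eq.~\eqref{eq:cohatf} unless $f$ is also lower semicontinuous on $M$, so the $\varepsilon$-route is cleaner.
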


In our analysis here, we compute the convex roof of entanglement monotones on pure states for Werner and isotropic states. The minimizing sets will usually be an entire orbit of some pure state under the local-unitary group action. Every pure state in these orbits has the same amount of entanglement under any entanglement monotone, since they differ only by a local unitary. Hence the convex roof of any entanglement monotone will be constant on the convex hull of these orbits. This gives a fairly large class of non-symmetric states for which we can compute the exact value of many different entanglement monotones.

%~~~~~~~~~~~~~~~~~~~~~~~~~~~~~~~~~~~~~~~~~~~~~~~~~~~~~~~
\subsection{Bipartite entanglement symmetry}
\label{sec:bipartitesymmetries}

In this following section, we recall some well-known examples of groups that are used in the study of bipartite quantum entanglement. Let $d\geq 2$ be an integer and consider bipartite states on $\CC^d\otimes\CC^d$. The convex set of interest here is the set of normalized density operators $\D(\CC^d\otimes\CC^d) = \{\rho\,|\, \rho\geq 0,\,\Tr\rho=1\}$. We are interested in computing the convex roof of entanglement monotones that are defined on the pure states
\[
 \bigl\{\ketbra{\psi}{\psi}\,\big|\, \ket{\psi}\in\CC^d\otimes\CC^d,\, \norm{\ket{\psi}}=1\bigr\}\subset \D(\CC^d\otimes\CC^d).
\]
It is well known that any entanglement monotone on pure states must be a symmetric, concave function of the Schmidt coefficients of the pure states. The primary examples of symmetric states that we study in this paper are the well-known Werner states \cite{Werner1989} and isotropic states \cite{Horodecki1999}. 

For the remainder of this paper we assume that $d\geq 2$ and we only consider bipartite states on $\CC^d\otimes\CC^d$. We consider classes of states that are symmetric with respect to different subgroups of the group of local unitaries
\[
 \LU = \{U\otimes V \, |\, U,V\in\U(d)\}.
\]
Given a subgroup $\calG\subset\LU$, determining which states are invariant under $\calG$ amounts to computing the commutant of $\calG$,
\[
 \calG' =\{A\in\mathcal{B}(\CC^d\otimes\CC^d)\,|\, [A,g]=0 \text{ for all }g\in\calG\},
\]
where $\mathcal{B}(\CC^d\otimes\CC^d)$ denotes the set of linear operators on the tensor product space (i.e.\ the set of $d^2\times d^2$ matrices). The commutant $\calG'$ is the subspace of operators that commute with every element of $\calG$. The twirling operator $\calT_\calG$ can be viewed as the projection operator onto the commutant of $\calG$. To determine $\calG'\cap \D(\CC^d\otimes\CC^d)$, i.e.\ the family of states that are invariant under this action, it is useful to find an orthogonal basis of operators for $\calG'$ and express the states as combinations of those basis elements. Finally, note that for any $\calG\subseteq\LU$ the twirling operation $\calT_\calG$ is an LOCC operation, since it consists of a convex mixture of local unitary channels. 

%~~~~~~~~~~~~~~~~~~~~~~~~~~~~
\subsubsection{Werner states}
The $d\times d$ \emph{Werner states} \cite{Werner1989} are those that commute with all unitaries of the form $U\otimes U$ for some~$U\in\U(d)$. That is, Werner states are those which are invariant under the subgroup $\{U\otimes U\,|\, U\in\U(d)\}$. The corresponding twirling operator is
 \[
  \calT_{\wer} (\rho) = \int_{\U(d)} dU\, U\otimes U \rho (U\otimes U)^\dagger ,
 \]
where the integral is taken over the Haar measure of the group $\U(d)$ of $d\times d$ unitary matrices. The commutant of this group is spanned by $\{\mathds{1},W\}$, where $\mathds{1}$ is the identity operator and $W$ is the swap operator defined by
$W = \sum_{i,j=1}^d\ketbra{ij}{ji}$. The swap operator is both unitary and Hermitian, having eigenvalues $1$ and $-1$ and satisfying $W^2=\mathds{1}$. Let $W_{\!\!+}$ and $W_{\!\!-}$ denote the projectors onto the subspaces spanned by the positive and negative eigenvectors of $W$, respectively, such that $W=W_{\!\!+}-W_{\!\!-}$. The Werner states can then be parametrized by
\begin{equation}
 \rho_{\wer}(\walpha) = \walpha \tfrac{1}{\binom{d}{2}} W_{\!\!-} + (1-\walpha)\tfrac{1}{\binom{d+1}{2}}W_{\!\!+}
\end{equation}
for $\walpha\in[0,1]$. These states are entangled for $\walpha\in[\tfrac{1}{2}, 1]$ and separable otherwise \cite{Vollbrecht2001,Watrous2016}. Furthermore, it holds that $\mathcal{T}_{\wer}(\sigma) = \rho_{\wer}(\Tr[\sigma W_{\!\!-}])$ for all states $\sigma$.

%~~~~~~~~~~~~~~~~~~~~~~~~~~~~
\subsubsection{Isotropic states}
The $d\times d$ \emph{isotropic states} \cite{Horodecki1999} are those invariant under the subgroup $\{U\otimes \overline{U}\,|\, U\in\U(d)\}$. The corresponding twirling operator is
\[
 \calT_{\isot} (\rho) = \int_{\U(d)} dU\, U\otimes \overline{U} \rho (U \otimes\overline{U})^\dagger .
\]
The commutant of this group is spanned by $\{\mathds{1},\Phi_d\}$, where $\Phi_d = \frac{1}{d}\sum_{i,j=1}^d \ketbra{ii}{jj}$ is the projection operator onto the maximally entangled pure state $\frac{1}{\sqrt{d}}\sum_{i=1}^d\ket{ii}$ of two qudits. This commutant is exactly the partial transpose of the space from the Werner states \cite{Audenaert2002}. The isotropic states can be parametrized by
\begin{equation}\label{eq:isotropicstatesdef}
 \rho_{\isot}(\wbeta) = \wbeta \Phi_d + (1-\wbeta) \frac{\mathds{1}-\Phi_d}{d^2-1}
\end{equation}
for $\wbeta\in[0,1]$. The isotropic states are entangled for $\wbeta\in[\tfrac{1}{d}, 1]$ and separable otherwise \cite{Vollbrecht2001,Watrous2016}. Furthermore, it holds that $\mathcal{T}_{\isot}(\sigma) = \rho_{\isot}(\Tr[\sigma \Phi_d])$ for all states $\sigma$.

%~~~~~~~~~~~~~~~~~~~~~~~~~~~~
\subsubsection{\texorpdfstring{$OO$}{OO}-invariant states}

One way to generalize the isotropic and Werner states to larger classes of symmetric states is to consider the $OO$-invariant states \cite{Vollbrecht2001}. These are the states that are invariant under $\{U\otimes U\,|\,U\in\Orth(d)\}$, where $\Orth(d)\subset\U(d)$ is the group of orthogonal operators. Since the orthogonal matrices are the unitaries that satisfy $\overline{U}=U$, this group is a subgroup of both the isotropic group and the Werner group of local unitaries. The corresponding $OO$-twirling operator is defined as
\[
 \calT_{\Orth} (\rho) =\int_{\Orth(d)} dU\, U\otimes U \rho (U\otimes U)^\dagger .
\]
The commutant of this group is spanned by $\{\mathds{1},W,\Phi_d\}$ \cite[section II D]{Vollbrecht2001}. The $OO$-invariant states can be parametrized as
\begin{equation}\label{eq:OOinvstates}
 \rho_{\Orth}(\walpha,\wbeta)  = \walpha\tfrac{1}{\binom{d}{2}} W_{\!\!-} + \wbeta\Phi_d + (1-\walpha-\wbeta) \tfrac{1}{\binom{d+1}{2}-1}(\mathds{1}-\Phi_d-W_{\!\!-})
\end{equation}
for $\walpha,\wbeta\in[0,1]$ satisfying $\walpha+\wbeta\leq 1$. The $OO$-invariant states that are separable (and also positive under partial transposition) \cite{Vollbrecht2001} are those in the rectangle $(\walpha,\wbeta)\in[0,\frac{1}{2}]\times[0,\frac{1}{d}]$. The Werner states are $OO$-invariant states for which $\wbeta=\frac{2(1-\walpha)}{d(d+1)}$ and the isotropic states are those for which $\wbeta=1-\frac{2(d+1)}{d}\walpha$. A schematic of the $OO$-invariant states is shown in Fig~\ref{fig:ooinv1}. 

The entanglement of formation and the asymptotic relative entropy of entanglement of $OO$-invariant states have been computed \cite{Vollbrecht2001,Audenaert2002}. In Sec.~\ref{sec:convexroofs}, we show how to compute almost any convex roof monotone on the $OO$-invariant states.

\begin{figure}\centering
\includegraphics{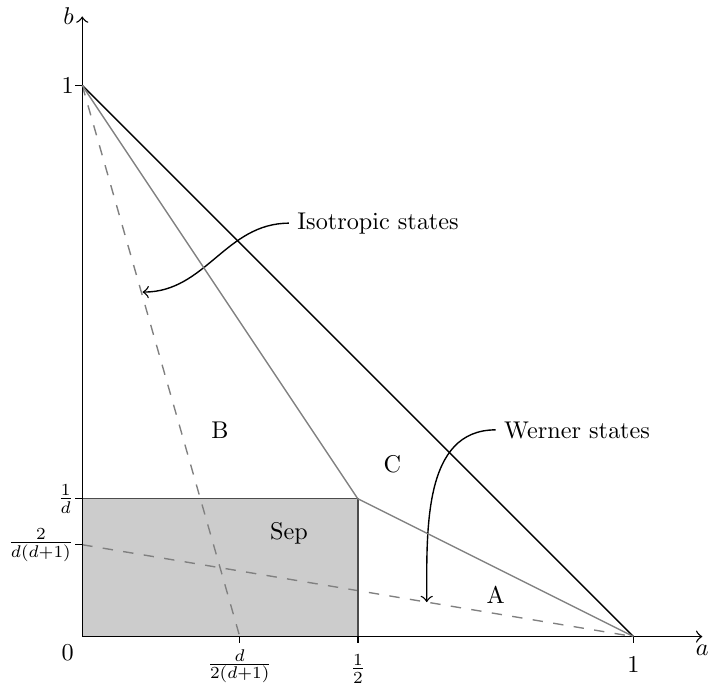}
\caption{Schematic of the $OO$-invariant states $\rho_{\Orth}(\walpha,\wbeta)$, as defined in Eq.~\eqref{eq:OOinvstates}. The shaded region represents the separable states. The one-dimensional subfamilies of Werner and isotropic states are also shown. Convex roof entanglement monotones can be computed for states in regions A and B, as discussed in Sec.~\ref{sec:othersyms}. It remains unknown how to compute convex roofs on states in region C for an arbitrary entanglement monotone.}
\label{fig:ooinv1}
\end{figure}

%~~~~~~~~~~~~~~~~~~~~~~~~~~~~
\subsubsection{Phase-permutation-invariant states}
\label{sec:phaseperminv}

Other subgroups of $\U(d)$ lead to further generalizations of the Werner and isotropic states. One possible subgroup that leads to two-parameter families of symmetric states is the following. Consider the subgroup of `phase-permutation' unitary matrices defined by
\begin{equation}\label{eq:Gunitarygroup}
 G=\{P_\pi U \,|\, \pi\in\mathcal{S}_d,\, U\in\U(d)\text{ is diagonal}\},
\end{equation}
where $\mathcal{S}_d$ is the symmetric group and $ P_\pi =\sum_{i=1}^d \ket{\pi(i)}\bra{i}$ is the permutation matrix for $\pi\in\mathcal{S}_d$. If we denote the group of diagonal unitary matrices by $N\simeq\U(1)^{\times d}$, we see that $N$ is a normal subgroup of~$G$. The group $G$ of phase-permutation unitaries can be viewed as the semi-direct product $G=N\rtimes\mathcal{P}_d$, where $\mathcal{P}_d=\{P_\pi\,|\, \pi\in\mathcal{S}_d\}$ denotes the group of $d\times d$ permutation matrices. This is also the subgroup of unitaries that have exactly one nonzero entry in each row and column. 

%~~~~~~~~~~~~~~~~~~~~~~~~~~~~
\vspace{10pt}
\paragraph{Phase-permutation Werner states}Consider the family of Werner-type states that are invariant under $\{U\otimes U\,|\,U\in G\}$, where $G$ is the group of phase-permutation matrices defined in Eq.~\eqref{eq:Gunitarygroup}. Such states will be referred to in this paper as \emph{phase-permutation Werner states}. This class of states was first introduced in Ref.~\cite{DiVincenzo2000} and used in Ref.~\cite{Girard2015a}. The corresponding twirling operation is
\[
 \calT^G_{\wer} (\rho) =\int_{G} dU\, U\otimes U \rho (U\otimes U)^\dagger .
\]
The commutant of this group is spanned by $\{\mathds{1},W,Q\}$ \cite[Sec.~II]{DiVincenzo2000}, where $Q$ is the projection operator
\begin{equation}\label{eq:Qdef}
 Q=\sum_{i=1}^d \ket{ii}\bra{ii}
\end{equation}
that satisfies $[Q,W_{\!\!\pm}]=0$, $QW_{\!\!-}=0$, and $QW_{\!\!+}=Q$. This family of states can be parametrized by
\begin{equation}
 \rho^G_{\wer}(\walpha,\wbeta)  = \walpha\frac{1}{\binom{d}{2}} W_{\!\!-} + \wbeta\frac{1}{\binom{d}{2}}(W_{\!\!+}-Q) + (1-\walpha-\wbeta) \frac{1}{d}Q
\end{equation}
for $\walpha,\wbeta\in[0,1]$ satisfying $\walpha+\wbeta\leq 1$. For all states~$\rho$, it holds that $\calT^G_{\wer} (\rho)=\rho^G_{\wer}(\walpha,\wbeta)$, where $\walpha=\Tr[\rho W_{\!\!-}]$ and $\wbeta=\Tr[\rho (W_{\!\!+}-Q)]$. The Werner states form a subfamily of this class. A schematic of the phase-permutation Werner states is depicted in Fig.~\ref{fig:pphaspermwer1}.

\begin{figure}[]
\centering
\includegraphics{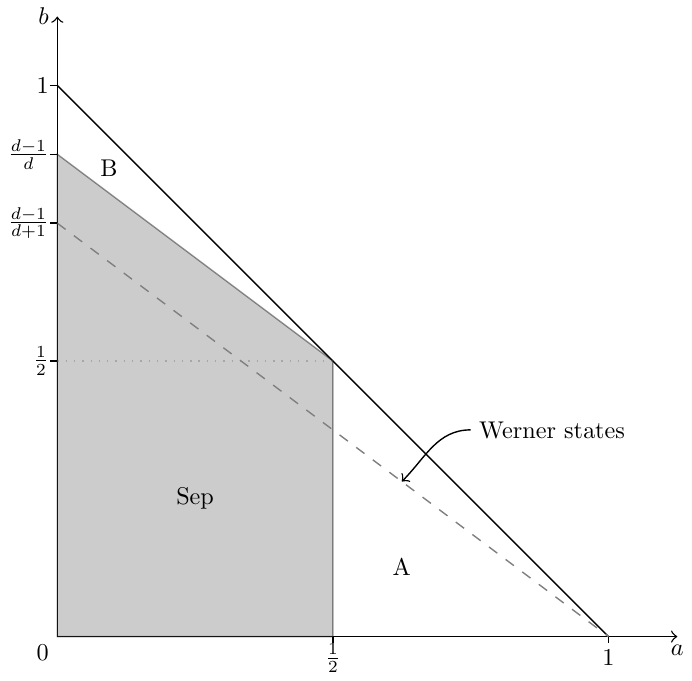}
\caption{Schematic of the phase-permutation Werner states. The separable region is shown in gray. The one-dimensional family of states with $\wbeta=\frac{d-1}{d+1}(1-\walpha)$ is made up of the well-known Werner states. As shown in Sec.~\ref{sec:othersyms}, the convex roof of any entanglement monotone can be computed for any state in region A. It remains unknown how to compute convex roofs on states in region B for an arbitrary entanglement monotone.}
\label{fig:pphaspermwer1}
\end{figure}

%~~~~~~~~~~~~~~~~~~~~~~~~~~~~
\vspace{10pt}
\paragraph{Phase-permutation isotropic states}
Similarly, we can consider the family of isotropic-type states that are invariant under $\{U\otimes \overline{U}\,|\,U\in G\}$. We refer to these as the \emph{phase-permutation isotropic states}. These states have been studied by others \cite{Eltschka2013,Eltschka2015,Sentis2016} who have called them the \emph{axisymmetric states}. The corresponding twirling operation is
\[
 \calT^G_{\isot} (\rho) =\int_{G} dU\, U\otimes \overline{U} \rho (U \otimes \overline{U})^\dagger .
\]
The commutant of this group is spanned by $\{\mathds{1},\Phi_d,Q\}$. The elements of this commutant are exactly obtained from the partial transposes of the elements of the commutant of the phase-permutation Werner group presented in the previous paragraph. The family of phase-permutation isotropic states can be parametrized as
\begin{equation}\label{eq:phasepermisostates}
 \rho^G_{\isot}(\walpha,\wbeta)  = \wbeta\Phi_d + \walpha\tfrac{1}{d-1}(Q-\Phi_d) + (1-\walpha-\wbeta) \tfrac{1}{d(d-1)}(\mathds{1}-Q)
\end{equation}
for $\walpha,\wbeta\in[0,1]$ satisfying $\walpha+\wbeta\leq 1$. For all states~$\rho$, it holds that $\calT^G_{\isot} (\rho)=\rho^G_{\isot}(\walpha,\wbeta)$ where $\wbeta=\Tr[\rho\Phi_d]$ and $\walpha=\Tr[\rho (Q-\Phi_d)]$. The isotropic states form a subfamily of this class. A schematic of the phase-permutation isotropic states is depicted in Fig.~\ref{fig:pphaspermiso1}. 

\begin{figure}\centering
\includegraphics{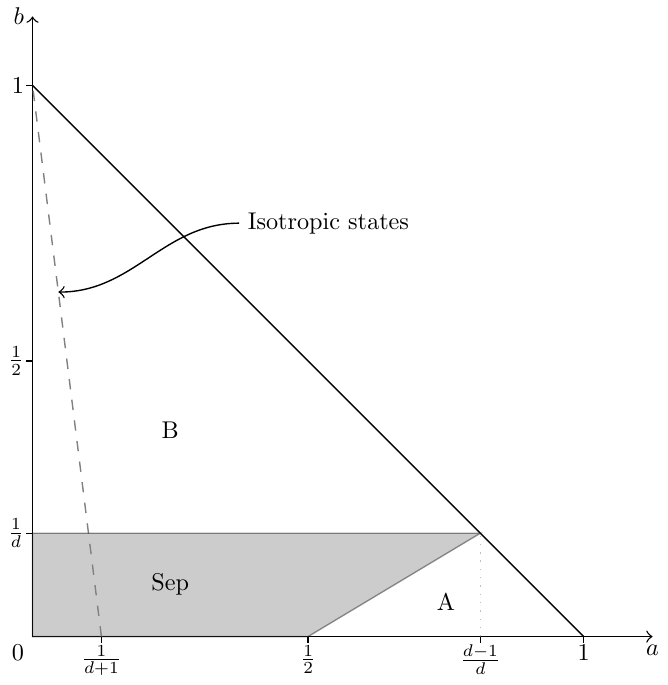}
\caption{Schematic of the phase-permutation isotropic states. The separable region is shown in gray. The one-dimensional family of states with $\wbeta=1-(d+1)\walpha$ is made up of the isotropic states. As shown in Sec.~\ref{sec:othersymms}, the convex roof of any entanglement monotone can be computed for any state in region~B. It remains unknown how to compute convex roofs on states in region A for an arbitrary entanglement monotone.}
\label{fig:pphaspermiso1}
\end{figure}

%%%%%%%%%%%%%%%%%%%%%%%%%%%%%%%%%%%%%%%%%%%%%%%%%%%%%%%%%%%%%%%%

\section{Convex roof entanglement monotones for symmetric states}
\label{sec:convexroofs}
In this section, we compute the convex roofs of entanglement monotones evaluated on Werner and isotropic states. For Werner states, we compute this for any monotone. For isotropic states, we compute the convex roofs of the Vidal monotones and generalize the computation to certain classes of other monotones. 

%~~~~~~~~~~~~~~~~~~~~~~~~~~~~~~~~~~~~~~~~~~~~~~~~~~~~~~~
\subsection{Werner states}
In this subsection we present a general method for computing convex roofs of entanglement monotones evaluated on the Werner states of a $d\times d$ bipartite system. For any $\walpha\in[0,1]$, consider the minimum entanglement of all pure states that twirl to $\rho_{\wer}(\walpha)$ under this action as in Eq.~\eqref{eq:fG}. Given an arbitrary entanglement monotone $E$ on pure states, define the function $E_{\wer}\:[0,1]\rightarrow \RR$ as
\begin{equation}\label{eq:Ewerpsi}
 E_{\wer}(\walpha) =\min \{E(\psi)\,|\, \braketinner{\psi}{W_{\!\!-}}{\psi}=\walpha\}.
\end{equation}
If we can evaluate Eq.~\eqref{eq:Ewerpsi} for a given entanglement monotone $E$, then we may make use of Theorem \ref{thm:coco} to compute the convex roof of $E$ on Werner states by computing $\cohat{E_{\wer}}$. This result is greatly simplified if $E_{\wer}$ is already convex as a function of $\walpha$. 

\begin{theorem}\label{lem:Ewerequal}
 Let $E$ be an entanglement monotone on pure states. For all $\walpha\in[0,1]$, it holds that
 \begin{equation}\label{eq:Ewerpsialpha}
  E_{\wer}(\walpha) = E(\psi_\walpha),
 \end{equation}
where $E_{\wer}$ is the function as defined in Eq.~\eqref{eq:Ewerpsi}, and $\ket{\psi_\walpha}$ are the pure states defined by
\begin{equation}\label{eq:psigam1}
 \ket{\psi_\walpha} = \left(\sqrt{1-2\walpha}\,\ket{1} + \sqrt{2\walpha}\,\ket{2}\right)\otimes \ket{2}
\end{equation}
whenever $\walpha\in[0,\tfrac{1}{2}]$ and 
\begin{equation}\label{eq:psigam2}
 \ket{\psi_\walpha} = \sqrt{\tfrac{1}{2}+\sqrt{\walpha(1-\walpha)}}\,\ket{12} - \sqrt{\tfrac{1}{2}-\sqrt{\walpha(1-\walpha)}}\,\ket{21}
\end{equation}
whenever $\walpha\in[\tfrac{1}{2},1]$.
\end{theorem}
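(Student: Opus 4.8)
The plan is to reduce the minimization in Eq.~\eqref{eq:Ewerpsi} to a minimization over the Schmidt coefficients of $\ket{\psi}$, and then to show that the optimal Schmidt vector is supported on at most two coordinates so that the extremal states are exactly those in Eqs.~\eqref{eq:psigam1}--\eqref{eq:psigam2}. The key observation is that $\braketinner{\psi}{W_{\!\!-}}{\psi} = \tfrac12\bigl(1 - \braketinner{\psi}{W}{\psi}\bigr)$, and that for a state written in Schmidt form $\ket{\psi} = \sum_i \sqrt{\lambda_i}\,\ket{e_i}\otimes\ket{f_i}$ the expectation $\braketinner{\psi}{W}{\psi}$ depends on the local bases only through the overlaps $\braket{f_i}{e_j}$. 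First I would establish the sharp two-sided bound on $\braketinner{\psi}{W}{\psi}$ in terms of the ordered Schmidt vector $\vec\lambda$: writing $\braketinner{\psi}{W}{\psi} = \sum_{i,j}\sqrt{\lambda_i\lambda_j}\,\braket{e_i}{f_j}\braket{e_j}{f_i}$, one gets $\bigl|\braketinner{\psi}{W}{\psi}\bigr| \le \sum_i \lambda_i = 1$ with the value $+1$ attained when $\ket{e_i}=\ket{f_i}$ (a product-with-itself choice) and, more delicately, the most negative attainable value being $-\!\bigl(\sum_i 2\sqrt{\lambda_{2i-1}\lambda_{2i}}\bigr)$ by pairing Schmidt vectors into antisymmetric two-dimensional blocks. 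This pins down the set of achievable values of $\walpha = \braketinner{\psi}{W_{\!\!-}}{\psi}$ for a given $\vec\lambda$: an interval whose endpoints are $\tfrac12(1-1)=0$ and $\tfrac12\bigl(1+\sum_i 2\sqrt{\lambda_{2i-1}\lambda_{2i}}\bigr)$.

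Next I would exploit concavity and symmetry of $f$ (the function defining $E$, via $E(\psi)=f(\vec\lambda)$). For $\walpha\in[0,\tfrac12]$ the constraint $\braketinner{\psi}{W_{\!\!-}}{\psi}=\walpha$ is compatible with \emph{any} Schmidt vector (since the achievable interval always contains $0$), so $E_{\wer}(\walpha)$ is bounded below by $\min\{f(\vec\lambda) : \braketinner{\psi}{W_{\!\!-}}{\psi}=\walpha \text{ for some local bases}\}$; the real content is that the minimum of a symmetric concave $f$ subject to "$\walpha$ is achievable" is attained at an extreme Schmidt vector, namely one of the form $(1-2\walpha,\,2\walpha,\,0,\dots,0)$, because decreasing the "spread"/number of nonzero Schmidt coefficients can only decrease a symmetric concave function, while the $\walpha=\braketinner{\psi}{W_{\!\!-}}{\psi}$ constraint forces the pair-overlap sum to be at least $2\walpha$, which for a two-term vector $(1-2\walpha,2\walpha)$ is exactly $2\sqrt{2\walpha(1-2\walpha)}\ge 2\walpha$ on $[0,\tfrac12]$ — wait, I must instead realize $\walpha$ exactly, which is why the symmetric $\ket{2}\otimes\ket{2}$ choice together with a rank-one perturbation gives $\braketinner{\psi}{W_{\!\!-}}{\psi}$ ranging over $[0,\tfrac12]$ as the mixing parameter varies; I would verify directly that the state in Eq.~\eqref{eq:psigam1} has $\braketinner{\psi_\walpha}{W_{\!\!-}}{\psi_\walpha}=\walpha$ and Schmidt vector $(1-2\walpha,2\walpha)$. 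For $\walpha\in[\tfrac12,1]$ the lower endpoint of the achievable interval exceeds $0$, so the constraint now genuinely restricts $\vec\lambda$: it forces $\sum_i 2\sqrt{\lambda_{2i-1}\lambda_{2i}} \ge 2\walpha-1$. Here I would argue that a symmetric concave $f$ is minimized, subject to this lower bound on the "pairing functional", by a rank-two vector with both pair-overlap contributions concentrated, i.e.\ $\bigl(\tfrac12+\sqrt{\walpha(1-\walpha)},\,\tfrac12-\sqrt{\walpha(1-\walpha)}\bigr)$, which saturates $2\sqrt{\lambda_1\lambda_2}=2\sqrt{\walpha(1-\walpha)}$; one checks this is the Schmidt vector of Eq.~\eqref{eq:psigam2} and that this state indeed has $\braketinner{\psi_\walpha}{W_{\!\!-}}{\psi_\walpha}=\walpha$.

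The combinatorial heart of the proof, and the step I expect to be the main obstacle, is the optimization "minimize symmetric concave $f(\vec\lambda)$ subject to a fixed value of the pairing functional $\sum_i 2\sqrt{\lambda_{2i-1}\lambda_{2i}}$ (and to that value being realizable as $\braketinner{\psi}{W_{\!\!-}}{\psi}$ by some local bases)." The two ingredients are: (i) a geometric/operator-theoretic lemma characterizing exactly which values $\braketinner{\psi}{W_{\!\!-}}{\psi}$ are attainable for a fixed Schmidt vector — this is a calculation with the overlap matrices $M_{ij}=\braket{e_i}{f_j}$, where one uses that $M$ is unitary and that the quadratic form $\sum \sqrt{\lambda_i\lambda_j} M_{ij}M_{ji}$ is extremized at signed-permutation-type $M$; and (ii) a majorization argument showing that among all Schmidt vectors consistent with the constraint, the optimum for every symmetric concave $f$ is the rank-$\le 2$ vector identified above — i.e.\ that this vector is majorized by every feasible one. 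I would package (ii) as a standalone lemma proving a majorization relation, which makes the conclusion uniform over all entanglement monotones $E$ simultaneously and thereby yields Eq.~\eqref{eq:Ewerpsialpha} for the specific states in Eqs.~\eqref{eq:psigam1}--\eqref{eq:psigam2}. Finally I would note that since all states on a single LU-orbit (fixed Schmidt vector) give the same $E$, the choice of $\ket{\psi_\walpha}$ among the minimizers is immaterial, and picking the explicit representatives above makes the statement concrete.
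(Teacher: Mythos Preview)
Your outline has a genuine gap in both of the two ``main ingredients'' you identify.

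\textbf{Ingredient (i) is stated incorrectly.} The claim that the minimum of $\braketinner{\psi}{W}{\psi}$ over local unitaries at fixed Schmidt vector equals $-\sum_i 2\sqrt{\lambda_{2i-1}\lambda_{2i}}$ is false as written. Take $d=3$ and $\vec\lambda=(\tfrac13,\tfrac13,\tfrac13)$: your formula gives $-\tfrac23$, but in fact $\braketinner{\psi}{W}{\psi}=\tfrac1d\Tr[U\overline{U}]$ for the maximally entangled state, and since the eigenvalues of the unitary $U\overline{U}$ are closed under conjugation with product $1$, one has $\Tr[U\overline{U}]\ge -1$ in $d=3$, so $\braketinner{\psi}{W}{\psi}\ge-\tfrac13$. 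The block-pairing picture forgets that any unpaired Schmidt term contributes \emph{positively}, and the assertion that the quadratic form is extremized at ``signed-permutation-type $M$'' is neither proved nor true in general. A corrected version of (i) would already be nontrivial to establish.

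\textbf{Ingredient (ii) is not established and does not obviously follow from (i).} Even granting a correct sharp upper bound on the achievable $\walpha$, you give no argument for why that bound forces $\lambda_1\le\tfrac12+\sqrt{\walpha(1-\walpha)}$ (which, since $\vec\lambda^{\walpha}$ has at most two nonzero entries, is exactly the majorization you need). In $d=2$ this reduces to the involution $g(g(x))=x$ for $g(x)=\tfrac12+\sqrt{x(1-x)}$, but for $d\ge3$ the constraint involves all of $\vec\lambda$ and you have not shown how to extract a bound on $\lambda_1$ alone from it.

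The paper bypasses both steps. Writing $\ket{\psi}=(U\otimes I)\sum_i\sqrt{\lambda_i}\ket{ii}$ and expanding, it produces weights $p_{ij}=\tfrac12(\lambda_i|U_{ij}|^2+\lambda_j|U_{ji}|^2)$ together with numbers $\walpha_{ij}\in[0,1]$ and rank-$\le2$ vectors $\vec\mu^{(ij)}$ such that simultaneously $\walpha=\sum_{ij}p_{ij}\walpha_{ij}$ and $\vec\lambda=\sum_{ij}p_{ij}\vec\mu^{(ij)}$. For each pair $(i,j)$ it proves the elementary inequality $\max(\vec\mu^{(ij)})\le\tfrac12+\sqrt{\walpha_{ij}(1-\walpha_{ij})}$, and then convexity of $\max$ together with concavity of $t\mapsto\sqrt{t(1-t)}$ immediately gives $\lambda_1\le\tfrac12+\sqrt{\walpha(1-\walpha)}$. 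This is a direct decomposition argument; there is no need to characterize the full achievable range of $\walpha$.

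A smaller slip: for $\walpha\in[0,\tfrac12]$ the state in Eq.~\eqref{eq:psigam1} is a \emph{product} state $|v\rangle\otimes|2\rangle$, so its Schmidt vector is $(1,0,\dots,0)$, not $(1-2\walpha,2\walpha)$. That case of the theorem is therefore trivial: $E(\psi_\walpha)=0\le E(\psi)$ for every feasible $\psi$.
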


Note that the pure states $\ket{\psi_\walpha}$ twirl to the Werner state $\rho_{\wer}(\walpha)$. Indeed, a straightforward calculation shows that $\braketinner{\psi_{\walpha}}{W_{\!\!-}}{\psi_{\walpha}} = \walpha$ for all $\walpha$. In particular, Theorem \ref{lem:Ewerequal} states that the pure states $\ket{\psi_\walpha}$ are in fact optimal in the computation in Eq.~\eqref{eq:Ewerpsi} for \emph{every} possible entanglement monotone. This is a generalization of the statement in Ref.~\cite[Sec.\ IV.C]{Vollbrecht2001}, where the convex roof of the entanglement of formation was computed for Werner states. The proof of Theorem~\ref{lem:Ewerequal}, which can be found in Appendix \ref{sec:werentproof}, is quite technical and follows the method used in Ref.~\cite{Vollbrecht2001}.

From Theorems \ref{thm:coco} and \ref{lem:Ewerequal}, it follows that $\cohat{E}(\rho_{\wer}(\walpha)) = \cohat{E_{\wer}}(\walpha)$. The family of Werner states is convex and 
\[
 t\rho_{\wer}(a_1) + (1-t)\rho_{\wer}(a_2) = \rho_{\wer}\bigl(ta_1+(1-t)a_2\bigr).
\]
Hence the computation of $\cohat{E_{\wer}}(\walpha)$ is greatly simplified if $E_{\wer}$ is already convex as a function of $\walpha$ (as it is for the entanglement of formation). Otherwise there are simple procedures for computing the convex roof of a function of a single variable. Even if the convex roof of $E_{\wer}$ as a function of $\walpha$ cannot be computed for a particular entanglement monotone $E$, the formula in Eq.~\eqref{eq:Ewerpsialpha} still gives an upper bound for $\cohat{E}$ on Werner states since $\cohat{E}(\rho_{\wer}(\walpha))=\cohat{E_{\wer}}(\walpha)\leq E_{\wer}(\walpha)$ always holds. 

%~~~~~~~~~~~~~~~~~~~~~~~~~~~~
\subsubsection{Entanglement of formation}
The entanglement of formation~\cite{Bennett1996b} is one well-known convex roof entanglement monotone. This is defined as $E_F(\rho) = \cohat{E}(\rho)$ for mixed states $\rho$, where $E$ is the entropy of entanglement on pure states $E(\psi)=H(\vec{\lambda})$, $H$ is the Shannon entropy, and $\vec{\lambda}$ is the vector of Schmidt coefficients of $\ket{\psi}$. When $\walpha\in[\frac{1}{2},1]$, the entropy of entanglement of $\ket{\psi_\walpha}$ is given by 
\begin{equation}\label{eq:entformpsialpha}
 E(\psi_\walpha) = h\bigl(\tfrac{1}{2}-\sqrt{\walpha(1-\walpha)}\bigr),
\end{equation}
where $h(t)=-t\log t- (1-t)\log(1-t)$ is the binary entropy function. Note that the function in Eq.~\eqref{eq:entformpsialpha} is convex as a function of $\walpha$, so it follows that 
 \begin{equation}
  E_F(\rho_{\wer}(\walpha)) = \left\{\begin{array}{ll}
                                     0, & \walpha\in[0,\frac{1}{2}]\\
                                     h\bigl(\tfrac{1}{2}-\sqrt{\walpha(1-\walpha)}\bigr), & \walpha\in[\frac{1}{2},1].
                                    \end{array}
\right.
 \end{equation}
This matches the result found in Ref.~\cite{Vollbrecht2001}.

%~~~~~~~~~~~~~~~~~~~~~~~~~~~~
\subsubsection{Vidal monotones}

Consider now the Vidal monotones $E_k$ on pure states. Evaluating the convex roof of these monotones on the Werner states can be done easily, because $E_{k,\wer}(\walpha)$ is already convex as a function of $\walpha$. 
\begin{theorem}
 Consider the convex roof of the Vidal monotones $E_k$ on Werner states. The first Vidal monotone reduces to
 \begin{equation}
  \cohat{E_1}(\rho_{\wer}(\walpha)) = \left\{\begin{array}{ll}
                                     0, & \walpha\in[0,\frac{1}{2}]\\
                                     \frac{1}{2}-\sqrt{\walpha(1-\walpha)}, & \walpha\in[\frac{1}{2},1].
                                    \end{array}
\right.
 \end{equation}
 For $k>1$, $\cohat{E_k}(\rho_{\wer}(\walpha)) =0$ for all $\walpha$.
\end{theorem}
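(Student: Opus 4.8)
The plan is to reduce the claim to a one-variable convexity check using the tools already in place. By Theorem~\ref{thm:coco} for the Werner twirling group together with Theorem~\ref{lem:Ewerequal}, we have $\cohat{E_k}(\rho_{\wer}(\walpha)) = \cohat{E_{k,\wer}}(\walpha)$ with $E_{k,\wer}(\walpha) = E_k(\psi_\walpha)$, where $\ket{\psi_\walpha}$ is the explicit pure state of Eqs.~\eqref{eq:psigam1}--\eqref{eq:psigam2}. So the whole problem splits into two steps: (i) read off the Schmidt coefficients of $\ket{\psi_\walpha}$ in each regime and evaluate $E_k$ on them; and (ii) check that the resulting function of $\walpha$ is convex on $[0,1]$, since for a convex $f$ on an interval Jensen's inequality gives $\cohat{f}=f$ (the trivial one-term decomposition $y_1=x$, $p_1=1$ attains the value, and $\sum_i p_i f(y_i)\ge f(\sum_i p_i y_i)$ bounds it below).

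For step (i): when $\walpha\in[0,\tfrac12]$ the state $\ket{\psi_\walpha}=(\sqrt{1-2\walpha}\,\ket{1}+\sqrt{2\walpha}\,\ket{2})\otimes\ket{2}$ is a product state, hence has Schmidt rank one and $E_k(\psi_\walpha)=0$ for every $k\ge 1$. When $\walpha\in[\tfrac12,1]$ the state is already written in Schmidt form with exactly two nonzero Schmidt coefficients $\lambda_1=\tfrac12+\sqrt{\walpha(1-\walpha)}$ and $\lambda_2=\tfrac12-\sqrt{\walpha(1-\walpha)}$, which are nonnegative, sum to one, and satisfy $\lambda_1\ge\lambda_2$. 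Hence $E_1(\psi_\walpha)=1-\lambda_1=\tfrac12-\sqrt{\walpha(1-\walpha)}$, while $E_k(\psi_\walpha)=\sum_{i=k+1}^d\lambda_i=0$ for all $k\ge 2$. This at once disposes of the case $k>1$: there $E_{k,\wer}\equiv 0$, which is convex, so $\cohat{E_k}(\rho_{\wer}(\walpha))=0$ for all $\walpha$.

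For $k=1$ it remains to carry out step (ii): verify that the function equal to $0$ on $[0,\tfrac12]$ and to $\tfrac12-\sqrt{\walpha(1-\walpha)}$ on $[\tfrac12,1]$ is convex on all of $[0,1]$. On $[\tfrac12,1]$ this is clear because $\walpha\mapsto\sqrt{\walpha(1-\walpha)}=\sqrt{\tfrac14-(\walpha-\tfrac12)^2}$ is concave, so its negative is convex, and on $[0,\tfrac12]$ the function is affine. The one point needing care, such as it is, is the junction $\walpha=\tfrac12$: I would check that the two pieces agree in value (both equal $0$) and that no downward kink occurs there, by noting the left derivative is $0$ while the right derivative $-\tfrac{1-2\walpha}{2\sqrt{\walpha(1-\walpha)}}$ also tends to $0$ as $\walpha\downarrow\tfrac12$. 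With this, $E_{1,\wer}$ is convex, so $\cohat{E_{1,\wer}}=E_{1,\wer}$, which is exactly the stated piecewise formula. The argument is short overall because Theorem~\ref{lem:Ewerequal} already supplies the optimal pure states; essentially nothing beyond this convexity bookkeeping remains.
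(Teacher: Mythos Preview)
Your proposal is correct and follows essentially the same approach as the paper: reduce via Theorems~\ref{thm:coco} and~\ref{lem:Ewerequal} to $E_k(\psi_\walpha)$, observe that the Schmidt rank of $\ket{\psi_\walpha}$ is at most two so $E_k$ vanishes for $k>1$, and check that $E_{1,\wer}(\walpha)=\tfrac12-\sqrt{\walpha(1-\walpha)}$ is already convex. Your treatment is in fact slightly more careful than the paper's, which asserts convexity without explicitly checking the junction at $\walpha=\tfrac12$.
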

 In particular, the convex roof of the $k\textsuperscript{th}$ Vidal monotone vanishes for all Werner states when $k\neq 1$. Indeed, it holds that $E_k(\psi_\walpha)=0$ for all $\walpha$ if $k>1$, since the Schmidt vector of $\ket{\psi_\walpha}$ has at most two nonzero components. For $\walpha\in[\frac{1}{2},1]$, note that
\[
 E_1(\psi_\walpha) = \frac{1}{2}-\sqrt{\walpha(1-\walpha)},
\]
which is already convex as a function of $\walpha$.

%~~~~~~~~~~~~~~~~~~~~~~~~~~~~
\subsubsection{Renyi entropies}
The result of Theorem \ref{lem:Ewerequal} can also be used to compute the convex roofs of Renyi entropies \cite{Hastings2010} of entanglement evaluated on Werner states. For $\alpha>0$ with $\alpha\neq 1$, the Renyi-$\alpha$ entropy of entanglement is defined as $E_\alpha(\vec{\lambda})=\frac{1}{1-\alpha}\log\bigl(\sum_{i}\lambda_i^\alpha\bigr)$ for pure states with Schmidt vector $\vec{\lambda}$. These are in fact valid entanglement monotones on pure states when $\alpha\in[0,1]$ \cite{Vidal2000a}. The form of Eq.~\eqref{eq:Ewerpsi} for these monotones reduces to $E_{\alpha,\wer}(\walpha)=0$ when $a\in[0,\frac{1}{2}]$ and
\begin{multline}\label{eq:renyiwerner}
 E_{\alpha,\wer}(\walpha) = \\ \tfrac{1}{1-\alpha}\log\left(\left(\tfrac{1}{2}+\sqrt{\walpha(1-\walpha)}\right)^\alpha+\left(\tfrac{1}{2}-\sqrt{\walpha(1-\walpha)}\right)^\alpha\right)
\end{multline}
 when $\walpha\in[\frac{1}{2},1]$. Numerical evidence suggests that Eq.~$\eqref{eq:renyiwerner}$ is strictly convex whenever $\alpha>1$, and that Eq.~$\eqref{eq:renyiwerner}$ is strictly concave on the interval $\walpha\in[\frac{1}{2},1]$ whenever $\alpha<\frac{1}{2}$. Thus $\cohat{E_{\alpha}}(\rho_{\wer}(\walpha)) =E_{\alpha,\wer}(\walpha)$ for $\alpha>1$ and $\cohat{E_{\alpha}}(\rho_{\wer}(\walpha)) =\max\{0,(2\walpha-1)\log 2\}$ for $\alpha<\frac{1}{2}$.

%~~~~~~~~~~~~~~~~~~~~~~~~~~~~~~~~~~~~~~~~~~~~~~~~~~~~~~~
\subsection{Isotropic states}
In this section we present a general method for computing convex roofs of entanglement monotones evaluated on the isotropic states of a $d\times d$ bipartite system. In particular, we show explicit formulas for the convex roofs of the Vidal monotones, as we did for the Werner states in the previous section. Using majorization, the result for the Vidal monotones is used to find a simple lower bound for any entanglement monotone on isotropic states. An outline for computing the convex roof of the Renyi entropies on isotropic states is also presented. Detailed proofs can be found in Appendix \ref{sec:isoentproof}.

The isotropic states $\rho_{\isot}(\wbeta)$ defined in Eq.~\eqref{eq:isotropicstatesdef} are the states invariant under the action $U\cdot\rho=U\otimes \overline{U}\rho (U\otimes \overline{U})^\dagger $ from the $d$-dimensional unitary matrices $U$. Similar to our analysis of Werner states, for any $\wbeta\in[0,1]$ we consider the minimum entanglement of all pure states that twirl to $\rho_{\isot}(\wbeta)$ under this action as follows. Given an arbitrary entanglement monotone $E$ on pure states, define the function $E_{\isot}\:[0,1]\rightarrow \RR$ by
\begin{equation}\label{eq:Eisotpsi}
 E_{\isot}(\wbeta) =\min \bigl\{E(\psi)\,\big|\, \braketinner{\psi}{\Phi_d}{\psi}=\wbeta\bigr\}.
\end{equation}
If we can determine a closed-form expression of Eq.~\eqref{eq:Eisotpsi} for a given entanglement monotone~$E$, we can make use of Theorem \ref{thm:coco} to compute the convex roof of~$E$ on isotropic states by computing $\cohat{E_{\isot}}$. This result is greatly simplified if $E_{\isot}$ is already convex as a function of $\wbeta$. We use the result of the following lemma to simplify computations.

\begin{lemma}\label{lem:Eisoequal}
 Let $E$ be an entanglement monotone on pure states. For all $\wbeta\in[\frac{1}{d},1]$, it holds that
 \begin{equation}\label{eq:Eisoequal}
  E_{\isot}(\wbeta) = \min\biggl\{E(\vec\lambda)\,\bigg|\, \sum_{i=1}^d\sqrt{\lambda_i} =\sqrt{d\wbeta}\biggr\},
 \end{equation}
where the infimum is taken over all Schmidt vectors satisfying the condition. Furthermore, $E_{\isot}(\wbeta)=0$ whenever $\wbeta\in[0,\frac{1}{d}]$.
\end{lemma}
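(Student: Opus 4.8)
The plan is to convert the minimization over pure states defining $E_{\isot}(\wbeta)$ into a minimization over Schmidt vectors, using that an entanglement monotone is constant on local-unitary orbits and so equals $E(\vec\lambda)$ for a pure state with Schmidt vector $\vec\lambda$, and then to pin down exactly which Schmidt vectors are compatible with the constraint $\braketinner\psi{\Phi_d}\psi=\wbeta$.

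First I would compute $\braketinner\psi{\Phi_d}\psi$ in terms of the coefficient matrix: writing $\ket\psi=\sum_{ij}c_{ij}\ket{ij}$ with $C=(c_{ij})$, one gets $\braketinner\psi{\Phi_d}\psi=\abs{\braket\Phi\psi}^2=\tfrac1d\abs{\Tr C}^2$. The singular values of $C$ are $\sqrt{\lambda_1},\dots,\sqrt{\lambda_d}$, and since $(A\otimes B)\ket\psi$ has coefficient matrix $ACB^{T}$, as $\ket\psi$ ranges over all unit vectors with a fixed Schmidt vector $\vec\lambda$ — i.e.\ over one local-unitary orbit — the matrix $C$ ranges over all $U\operatorname{diag}(\sqrt{\lambda_i})V^{\dagger}$ with $U,V\in\U(d)$. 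So the values attained by $\braketinner\psi{\Phi_d}\psi$ on that orbit form $\bigl\{\tfrac1d\abs{\Tr(\operatorname{diag}(\sqrt{\lambda_i})W)}^2 : W\in\U(d)\bigr\}$. Now $\abs{\Tr(\operatorname{diag}(\sqrt{\lambda_i})W)}=\bigl\lvert\sum_i\sqrt{\lambda_i}W_{ii}\bigr\rvert\le\sum_i\sqrt{\lambda_i}$, with equality at $W=\mathds{1}$, while the cyclic-shift unitary $\sum_i\ketbra{i+1\bmod d}{i}$ makes the trace vanish; since $\U(d)$ is connected this set of values is exactly $[0,\tfrac1d(\sum_i\sqrt{\lambda_i})^2]$. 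Hence $\vec\lambda$ is realized by some $\ket\psi$ with $\braketinner\psi{\Phi_d}\psi=\wbeta$ precisely when $\sum_i\sqrt{\lambda_i}\ge\sqrt{d\wbeta}$, and therefore
\[
 E_{\isot}(\wbeta)=\inf\Bigl\{E(\vec\lambda)\ \Big|\ \textstyle\sum_i\sqrt{\lambda_i}\ge\sqrt{d\wbeta}\Bigr\}.
\]

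Next, for $\wbeta\in[\tfrac1d,1]$ I would argue the inequality constraint can be tightened to equality. One direction is trivial; for the other, take any $\vec\lambda$ (ordered decreasingly) with $\sum_i\sqrt{\lambda_i}>\sqrt{d\wbeta}$ and follow the path $\vec\lambda(t)$ with $\lambda_1(t)=\lambda_1+t(1-\lambda_1)$ and $\lambda_i(t)=(1-t)\lambda_i$ for $i\ge2$, $t\in[0,1]$. This stays in the probability simplex (in decreasing order), runs from $\vec\lambda$ to $(1,0,\dots,0)$, and has partial sums non-decreasing in $t$, so $\vec\lambda(t)$ majorizes $\vec\lambda$ for all $t$; meanwhile $t\mapsto\sum_i\sqrt{\lambda_i(t)}$ is continuous with value $>\sqrt{d\wbeta}$ at $t=0$ and value $1\le\sqrt{d\wbeta}$ at $t=1$ — the one place the hypothesis $\wbeta\ge\tfrac1d$ is used — so it equals $\sqrt{d\wbeta}$ at some $t^{\ast}$. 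Because an entanglement monotone is symmetric and concave in $\vec\lambda$, hence Schur-concave, $E(\vec\lambda(t^{\ast}))\le E(\vec\lambda)$, so no value is lost by imposing $\sum_i\sqrt{\lambda_i}=\sqrt{d\wbeta}$; this yields Eq.~\eqref{eq:Eisoequal} (and the infimum is attained when $E$ is continuous, the constraint set being compact).

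Finally, for $\wbeta\in[0,\tfrac1d]$ I would note that a product state $\ket a\otimes\ket b$ has $\braketinner{a\otimes b}{\Phi_d}{a\otimes b}=\tfrac1d\abs{\textstyle\sum_k\braket k a\braket k b}^2\in[0,\tfrac1d]$, attaining every value in that range (e.g.\ $\ket1\otimes(\cos\phi\ket1+\sin\phi\ket2)$), while $E=0$ on product states; combined with $E\ge0$ for any entanglement monotone (equivalently, $(1,0,\dots,0)$ majorizes every Schmidt vector, so $E(\vec\lambda)\ge E(1,0,\dots,0)=0$ by Schur-concavity), this gives $E_{\isot}(\wbeta)=0$. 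The hard part is the equality-versus-inequality step: the right tool is Schur-concavity together with an explicit majorization path whose ``size'' $\sum_i\sqrt{\lambda_i}$ can be dialed down continuously to any target value $\ge1$, and this is precisely where $\wbeta\ge\tfrac1d$ is essential — for $\wbeta<\tfrac1d$ the equation $\sum_i\sqrt{\lambda_i}=\sqrt{d\wbeta}$ has no solutions at all, forcing the separate product-state argument in that regime.
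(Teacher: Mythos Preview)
Your argument is correct and is essentially the paper's own proof, organized slightly differently. The paper packages the inequality $\braketinner{\psi}{\Phi_d}{\psi}\le\frac{1}{d}(\sum_i\sqrt{\lambda_i})^2$ and the convex path $\vec\lambda'(p)=(1-p)\vec\lambda+p(1,0,\dots,0)$ into an auxiliary lemma (replacing any admissible $\ket\psi$ by a diagonal $\ket{\psi'}$ with the same overlap and a majorizing Schmidt vector), whereas you first determine the full range of $\braketinner{\psi}{\Phi_d}{\psi}$ over the LU orbit, rewrite $E_{\isot}(\wbeta)$ as an infimum over $\sum_i\sqrt{\lambda_i}\ge\sqrt{d\wbeta}$, and then tighten to equality via the very same path; the product-state argument for $\wbeta\le\tfrac{1}{d}$ is likewise the same as the paper's, just with a different explicit witness.
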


A closed-form expression for $E_{\isot}$ in the right-hand side of Eq.~\eqref{eq:Eisoequal} can actually be computed for specific monotones~$E$, which we show in the remainder of this section. In particular, we compute $E_{\isot}$ in the cases when $E$ is a Vidal monotone or an entropy-type monotone. The proof of Lemma~\ref{lem:Eisoequal}, which is a generalization of the result in Ref.~\cite{Terhal2000}, is quite technical and can be found in Appendix~\ref{sec:isoentproof}. 

%~~~~~~~~~~~~~~~~~~~~~~~~~~~~
\subsubsection{Vidal monotones}
Here we present the results for evaluating the convex roofs of the Vidal monotones \eqref{eq:vidalmonotones} on isotropic states. The Schmidt vector that minimizes $E_{k,\isot}$ in Eq.~\eqref{eq:Eisoequal} will be of the form.
\begin{equation}\label{eq:lambdaisotoptmain}
 \vec{\lambda} = \bigl(\underbrace{t,\dots,t}_{k},\underbrace{\tfrac{1-kt}{d-k},\dots,\tfrac{1-kt}{d-k}}_{d-k}\bigr)
\end{equation}
with $t\geq \frac{1-kt}{d-k}$. This allows us to compute the convex roofs of the Vidal monotones on isotropic states. %The details of the proof of the following theorem can be found in Appendix \ref{sec:isoentproof}.

\begin{theorem}\label{thm:vidaliso}
Consider the convex roof of the Vidal monotones $E_k$ on the isotropic states of $\CC^d\otimes\CC^d$. For $k\in\{1,\dots,d-1\}$ and $\wbeta\in[0,1]$, it holds that
   \begin{multline}\label{eq:vidalisot}
  \cohat{E_k}(\rho_{\isot}(\wbeta)) =\\
                          \left\{\begin{array}{ll}
                           0, &  \wbeta\in[0, \frac{k}{d}]\\
                          \tfrac{1}{d}\left(\sqrt{(1-\wbeta)k}-\sqrt{\wbeta(d-k)}\right)^2, & \wbeta\in[\frac{k}{d}, 1].
                         \end{array}\right.
 \end{multline}
 \end{theorem}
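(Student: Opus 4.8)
The plan is to combine Lemma~\ref{lem:Eisoequal} with Theorem~\ref{thm:coco} and the convexity structure of the isotropic family. By Lemma~\ref{lem:Eisoequal}, for $\wbeta\in[\tfrac{1}{d},1]$ we have $E_{k,\isot}(\wbeta) = \min\{E_k(\vec\lambda) : \sum_i\sqrt{\lambda_i}=\sqrt{d\wbeta}\}$, where $E_k(\vec\lambda)=\sum_{i=k+1}^d\lambda_i$ is the sum of the $d-k$ smallest entries. So the first task is to solve this finite-dimensional optimization: minimize the tail sum $\sum_{i>k}\lambda_i$ subject to $\sum_i\lambda_i=1$ and $\sum_i\sqrt{\lambda_i}=c:=\sqrt{d\wbeta}$, over probability vectors with $\lambda_1\ge\cdots\ge\lambda_d$. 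I would argue that at the optimum $\vec\lambda$ takes the two-level form in Eq.~\eqref{eq:lambdaisotoptmain}: the top $k$ coordinates are all equal to some $t$, and the bottom $d-k$ are all equal to $\frac{1-kt}{d-k}$. The heuristic is that, holding the values of $\sqrt{\lambda_i}$-sum fixed, "flattening" within the top block and within the bottom block is favorable — one can use a convexity/majorization argument or a Lagrange-multiplier computation ($\lambda_i$ is then determined by a quadratic in $\sqrt{\lambda_i}$, whose two roots give exactly two distinct values, and the ordering forces the larger root on the top block). Once the form is pinned down, the constraint $k\sqrt t + (d-k)\sqrt{\tfrac{1-kt}{d-k}} = \sqrt{d\wbeta}$ determines $t$ as a function of $\wbeta$, and then $E_{k,\isot}(\wbeta)=1-kt$. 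Solving the (quadratic-in-$\sqrt{t}$-type) constraint and simplifying should yield $E_{k,\isot}(\wbeta)=\tfrac{1}{d}\bigl(\sqrt{(1-\wbeta)k}-\sqrt{\wbeta(d-k)}\bigr)^2$, valid on the range of $\wbeta$ where the ordering constraint $t\ge\frac{1-kt}{d-k}$ holds; checking where that range starts should give the threshold $\wbeta=\tfrac{k}{d}$ (at which $t=\tfrac1d$, i.e. $\vec\lambda$ is uniform and $E_k=\tfrac{d-k}{d}$ — wait, at the threshold the two blocks coincide, $t=\tfrac{1-kt}{d-k}$ forces $t=\tfrac1d$, and indeed $\sqrt{d\wbeta}=\sqrt d$ only at $\wbeta=1$; the correct boundary is where $E_{k,\isot}$ as written vanishes, i.e. $\sqrt{(1-\wbeta)k}=\sqrt{\wbeta(d-k)}$, giving $\wbeta=\tfrac{k}{d}$, at which point the optimal $\vec\lambda$ has its top $k$ entries carrying all the weight).

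Having computed $E_{k,\isot}$ on $[\tfrac1d,1]$, and knowing $E_{k,\isot}(\wbeta)=0$ on $[0,\tfrac1d]$ from Lemma~\ref{lem:Eisoequal}, the next step is to pass to the convex roof. By Theorem~\ref{thm:coco}, $\cohat{E_k}(\rho_{\isot}(\wbeta))=\cohat{E_{k,\isot}}(\wbeta)$, and since the isotropic family is a line segment with $t\rho_{\isot}(\wbeta_1)+(1-t)\rho_{\isot}(\wbeta_2)=\rho_{\isot}(t\wbeta_1+(1-t)\wbeta_2)$, the convex roof over the symmetric family is just the ordinary convex hull (lower convex envelope) of the one-variable function $E_{k,\isot}$ on $[0,1]$. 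So I must show that the function defined piecewise by $0$ on $[0,\tfrac{k}{d}]$ and $\tfrac1d(\sqrt{(1-\wbeta)k}-\sqrt{\wbeta(d-k)})^2$ on $[\tfrac{k}{d},1]$ is already convex on all of $[0,1]$ — then the convex roof equals the function itself, which is exactly the claimed Eq.~\eqref{eq:vidalisot}. Note that on $[0,\tfrac1d]$ this piecewise function is identically $0$ (since $\tfrac{k}{d}\ge\tfrac1d$), consistent with Lemma~\ref{lem:Eisoequal}; I should double-check the sub-interval $[\tfrac1d,\tfrac{k}{d}]$ where the lemma gives the minimization formula but the answer there is still $0$, meaning the minimizing Schmidt vector has rank $\le k$ — this is where the ordering constraint in Eq.~\eqref{eq:lambdaisotoptmain} is not binding and the optimal $\vec\lambda$ puts zero weight on the tail.

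Convexity on $[\tfrac{k}{d},1]$ I would verify by a direct second-derivative computation: writing $g(\wbeta)=\sqrt{(1-\wbeta)k}-\sqrt{\wbeta(d-k)}$, one has $E_{k,\isot}=\tfrac1d g^2$, and $(g^2)''=2(g')^2+2gg''$; since $g$ is concave-looking but actually each square-root term needs care, it is cleaner to substitute $\wbeta=\sin^2\theta\cdot(\text{something})$ or simply compute that $\tfrac{d^2}{d\wbeta^2}\bigl[(1-\wbeta)k+\wbeta(d-k)-2\sqrt{\wbeta(1-\wbeta)k(d-k)}\bigr]=\tfrac{k(d-k)}{2}(\wbeta(1-\wbeta))^{-3/2}>0$, which handles the interior; then I check the one-sided derivatives match up (in fact are both $0$) at the junction $\wbeta=\tfrac{k}{d}$ so there is no convexity violation there, and I check the behavior at the endpoints $\wbeta=\tfrac1d$ (value $0$, matching) and $\wbeta=1$ (value $\tfrac{k}{d}$, which is $E_k$ of the maximally entangled state, as it should be). I expect the main obstacle to be the first step — rigorously establishing that the minimizer has the two-block form of Eq.~\eqref{eq:lambdaisotoptmain} rather than some more irregular vector; the cleanest route is probably a Schur-convexity / majorization argument showing that among all $\vec\lambda$ with a fixed value of $\sum\sqrt{\lambda_i}$, the tail sum $\sum_{i>k}\lambda_i$ is minimized by vectors that are "as majorized as possible" consistent with the constraint, forcing equal values within each block — but making this precise, and handling the boundary case $\wbeta\in[\tfrac1d,\tfrac{k}{d}]$ where rank drops below $d$, will require the careful technical argument deferred to Appendix~\ref{sec:isoentproof}.
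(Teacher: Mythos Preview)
Your proposal is correct and follows essentially the same route as the paper: reduce via Theorem~\ref{thm:coco} and Lemma~\ref{lem:Eisoequal} to computing $E_{k,\isot}$, use Lagrange multipliers (as in the paper's Theorem~\ref{thm:isoentk}) to pin down the two-block form~\eqref{eq:lambdaisotoptmain}, solve for $t$ to get the closed form, and then check convexity of the resulting one-variable function so that $\cohat{E_{k,\isot}}=E_{k,\isot}$. Two small corrections: the paper handles the sub-interval $[\tfrac1d,\tfrac{k}{d}]$ by a short continuity argument on rank-$\le k$ Schmidt vectors (Lemma~\ref{lem:tkd1}) rather than via the ordering constraint, and your second-derivative constant should be $\tfrac{\sqrt{k(d-k)}}{2}$ rather than $\tfrac{k(d-k)}{2}$, though neither point affects the structure of the argument.
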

\begin{proof}
Note that $\cohat{E_k}(\rho_{\isot}(\wbeta))=\cohat{E_{k,{\isot}}}(\wbeta)$ by Theorem \ref{thm:coco}, where $E_{k,\isot}$ is the function as defined in Eq.~\eqref{eq:Eisotpsi} and the entanglement monotone used is $E=E_k$. An explicit form of \eqref{eq:Eisotpsi} for the Vidal monotones is computed in Eq.~\eqref{eq:ekisotformula} of Theorem~\ref{thm:isoentk} in Appendix \ref{sec:isoentproof}. It is clear that $E_{k,\isot}(\wbeta)$ in~\eqref{eq:ekisotformula} is convex as a function of $\wbeta$ (which may be confirmed by examining its second derivative). Thus $E_{k,\isot} = \cohat{E_{k,\isot}}$, which concludes the proof. 
\end{proof}

The convex roofs of the Vidal monotones can be trivially computed for $k\geq d$, in which case $E_{k,\isot}(\wbeta)=0$ for all $k\geq d$ and any $\wbeta$. A plot of the Vidal monotones~\eqref{eq:vidalisot} evaluated on isotropic states $\rho_{\isot}(\wbeta)$ with $d=5$ is shown in Fig\ \ref{fig:vidaliso}. 

It is perhaps interesting to note that the equation
\[
 y=\left(\sqrt{(1-x)\tfrac{k}{d}}-\sqrt{(1-\tfrac{k}{d})x}\right)^2
\]
is part of the unique ellipse that is tangent to the $x$-axis at the point $(\tfrac{k}{d},0)$, tangent to the $y$-axis at the point $(0,\tfrac{k}{d})$, and goes through the point $(1,1-\tfrac{k}{d})$. 

\begin{figure}
 \centering
 \includegraphics[width=.48\textwidth]{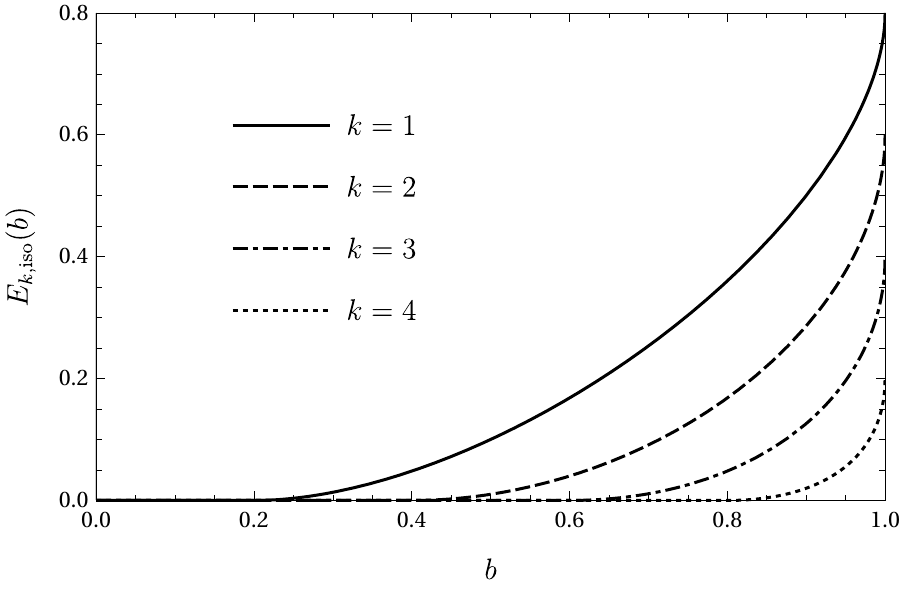}
 \caption{The convex roof of the Vidal monotones $E_1$, $E_2$, $E_3$, and $E_4$ evaluated on isotropic states with dimension $d=5$.}
 \label{fig:vidaliso}
\end{figure}

The resulting computations of computing the Vidal monotones on isotropic states can also be used to construct a lower bound for any arbitrary entanglement monotone evaluated on isotropic states. For any~$d\geq2$ and any $\wbeta\in[0,1]$, define the following Schmidt vector:
\begin{equation}\label{eq:lambdaalphaiso}
 \vec{\lambda}^\wbeta = \begin{pmatrix}
                         1-E_{1,\isot}(\wbeta)\\
                         E_{1,\isot}(\wbeta)-E_{2,\isot}(\wbeta)\\
                         \vdots\\
                         E_{d-2,\isot}(\wbeta)-E_{d-1,\isot}(\wbeta)\\
                         E_{d-1,\isot}(\wbeta)
                        \end{pmatrix}.
\end{equation}
%where we note that $E_{d,\isot}(\wbeta)=0$. 
For each $k$ it holds that $E_{k}(\vec{\lambda}^\wbeta)=E_{k,\isot}(\wbeta)$. By construction, it holds that $E_{k}(\psi)\geq E_{k}(\vec{\lambda}^\wbeta)$ for any pure state $\ket{\psi}$ that twirls to $\rho_{\isot}(\wbeta)$ (i.e.\ satisfying $\braketinner{\psi}{\Phi_d}{\psi}=\wbeta$). Thus $\vec{\lambda}\prec\vec{\lambda}^\wbeta$ where $\vec{\lambda}$ is a Schmidt vector of any pure state that twirls to $\rho_{\isot}(\wbeta)$. This implies that we can use $\vec{\lambda}^\wbeta$ to construct a lower bound for any entanglement monotone~$E$ evaluated on isotropic states. In particular, it holds that
\begin{equation}
 \cohat{E}(\rho_{\isot}(\wbeta))\geq E(\vec{\lambda}^\wbeta)
\end{equation}
for the convex roof of any possible entanglement monotone $E$ evaluated on isotropic states. 

%~~~~~~~~~~~~~~~~~~~~~~~~~~~~
\subsubsection{Generalized entropy measures}

It is also possible to study the convex roof of generalized measures of entropy, as studied in Ref.~\cite{Berry2003}, rather than entanglement measures. Generalized entropy measures are functions of the form $H_f(\vec{\lambda}) = \sum_{i}f(\lambda_i)$ for functions $f$ that satisfy the following conditions:
\begin{enumerate}[1.]
 \item $f(0)=0$
 \item $f$ is either strictly concave or strictly convex on the interval $[0,1]$ and 
 \item the first derivative $f'$ exists and is continuous on the interval $(0,1)$.
\end{enumerate} 
This includes the entropy of entanglement when $f(x)=-x\log x$, as well as quantities that are related to the Renyi entropies when $f(x)=x^\alpha$. In \cite{Berry2003} it was shown how to compute the minimum and maximum values of one generalized entropy $H_f(\vec{\lambda})$ for all Schmidt vectors $\vec{\lambda}$ with some other generalized entropy $H_g(\vec{\lambda})=c$ held constant. It turns out that the Schmidt vectors minimizing or maximizing these quantities will either be of the form
\begin{equation}\label{eq:veclamt1}
 \vec{\lambda} = \bigl(t,\tfrac{1-t}{d-1},\dots,\tfrac{1-t}{d-1}\bigr)
\end{equation}
where $t\geq\tfrac{1-t}{d-1}$, or
\begin{equation}\label{eq:veclamt2}
 \vec{\lambda} = \bigl(t,\dots,t,1-kt,0,\dots,0\bigr)
\end{equation}
where $t\geq 1-kt$, and there are $k= \lfloor\tfrac{1}{t}\rfloor$ probabilities equal to $t$. We can then make use of the following theorem.

\begin{theorem}[Theorem 1 in Ref.~\cite{Berry2003}]\label{thm:berry}
 Let $f\:[0,1]\rightarrow \RR$ and $g\:[0,1]\rightarrow\RR$ both satisfy conditions (i)-(iii) above.
 \begin{enumerate}
  \item If $f'\circ (g')^{-1}$ is strictly convex (concave), then the maximum (minimum) $H_f$ that can be achieved for fixed $H_g$ is obtained by a probability distribution of the form in Eq.~\eqref{eq:veclamt1}.
  \item  If $f'\circ (g')^{-1}$ is strictly convex (concave), then the minimum (maximum) $H_f$ that can be achieved for fixed $H_g$ is obtained by a probability distribution of the form in Eq.~\eqref{eq:veclamt2}. 
 \end{enumerate}
\end{theorem}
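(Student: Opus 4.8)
The plan is to read this as a constrained optimization on the probability simplex $\Delta_d=\{\vec\lambda:\lambda_i\ge 0,\ \sum_i\lambda_i=1\}$: extremize $H_f(\vec\lambda)=\sum_if(\lambda_i)$ subject to the single scalar constraint $H_g(\vec\lambda)=c$. Since $g$ is strictly convex or strictly concave and differentiable on $(0,1)$, its derivative $g'$ is strictly monotone, hence invertible, so $\phi:=f'\circ(g')^{-1}$ is well defined and, by hypothesis, strictly convex or strictly concave. Replacing $f$ by $-f$ interchanges $\phi$ with $-\phi$ and maxima with minima, so it suffices to treat the case where $\phi$ is strictly convex and show that the maximum of $H_f$ is attained at a vector of the form~\eqref{eq:veclamt1} and the minimum at one of the form~\eqref{eq:veclamt2}; the strictly concave case then follows symmetrically. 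Existence of both extrema is immediate from compactness of $\{H_g=c\}\cap\Delta_d$ and continuity of $H_f$.

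First I would characterize the critical points lying in the relative interior of a face of the simplex, i.e.\ those with a fixed support $T$ and $\lambda_i>0$ for all $i\in T$. Applying the Lagrange multiplier theorem with the two equality constraints $\sum_i\lambda_i=1$ and $\sum_i g(\lambda_i)=c$, such a point admits scalars $\mu,\nu$ with $f'(\lambda_i)=\mu+\nu g'(\lambda_i)$ for every $i\in T$. Substituting $x_i=g'(\lambda_i)$ turns this into $\phi(x_i)=\mu+\nu x_i$, i.e.\ all the points $(x_i,\phi(x_i))$ lie on one common line. A line meets the graph of a strictly convex (or strictly concave) function in at most two points, so the $x_i$, and hence the $\lambda_i$, take at most two distinct values on $T$. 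Consequently every local extremum of $H_f$ on the constraint set has the shape: $m$ entries equal to some $u$, $\ell=n-m$ entries equal to some $v\neq u$, and $d-n$ entries equal to $0$, for some $1\le m\le n\le d$ (with $m=n$, a single nonzero value, allowed as a degenerate case).

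It then remains to decide, among these finitely many families of configurations — within each family the two values being tied together, for the given $c$, by $mu+\ell v=1$ and $m\,g(u)+\ell\,g(v)=c$ — which realizes the global maximum and which the global minimum. Ordering the values so that $u>v\ge 0$, for the maximum I would use strict convexity of $\phi$ once more, in the guise of a Jensen-type comparison between a configuration and the one obtained by merging or splitting a block of equal values at fixed $H_g$, to show that neither block can be refined and no entry vanishes; recording that the singleton block carries the larger value reproduces~\eqref{eq:veclamt1} with $t=u$ and the stated inequality $t\ge\tfrac{1-t}{d-1}$. For the minimum the same comparison pushes in the opposite direction, driving the smaller value down to $0$ as far as the constraints permit and leaving $k$ equal copies of the larger value $t$, a single remainder $1-kt$, and the rest zero — this is~\eqref{eq:veclamt2}, with $k=\lfloor 1/t\rfloor$ forced by $t\ge 1-kt\ge 0$ together with extremality (one packs in as many copies of $t$ as possible). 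Configurations in which $f'(0^+)$ or $g'(0^+)$ is infinite — as for $f(x)=-x\log x$ — would be handled by perturbing each vanishing entry to $\varepsilon>0$, invoking the interior analysis, and letting $\varepsilon\to 0$.

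The main obstacle is precisely this last part. The Lagrange computation cheaply delivers ``at most two distinct positive values,'' but converting that into the sharp dichotomy — \emph{maximum} at~\eqref{eq:veclamt1} versus \emph{minimum} at~\eqref{eq:veclamt2} — requires carefully excluding the intermediate two-value configurations and controlling the non-smooth boundary of $\Delta_d$, including the possibly-infinite one-sided derivatives of $f$ and $g$ at $0$. I expect this to need a somewhat delicate monotonicity argument along the one-parameter families of admissible configurations, combined with the limiting argument above, rather than one clean inequality; this is the part that the proof in Ref.~\cite{Berry2003} carries out in detail.
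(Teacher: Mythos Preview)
The paper does not contain a proof of this theorem at all: it is quoted verbatim as Theorem~1 of Ref.~\cite{Berry2003} and then \emph{used} (to compute $H_{f,\isot}$ and the generalized concurrences), with only the parenthetical remark that $g'$ is invertible because $g$ is strictly convex or concave. There is therefore no ``paper's own proof'' to compare your proposal against.

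That said, your sketch is the standard Lagrange-multiplier argument and is essentially what Berry's original proof does: the stationarity condition $f'(\lambda_i)=\mu+\nu\,g'(\lambda_i)$, together with strict convexity or concavity of $\phi=f'\circ(g')^{-1}$, forces at most two distinct positive values among the $\lambda_i$; the remaining work is a case analysis on the finitely many two-value configurations (and their boundary degenerations) to decide which gives the maximum and which the minimum. You have correctly identified that last step as the delicate one and correctly deferred it to \cite{Berry2003}. Since the present paper also defers the entire proof to that reference, your proposal in fact goes further than the paper does.
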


Note that $g$ in Theorem \ref{thm:berry} is either strictly concave or convex, so it must hold that $g'$ is in fact invertible on the interval~$(0,1)$. 

Given a function $f$ that satisfies the conditions above, we can define an entropy measure on pure states by $S_f(\psi)=H_f(\vec{\lambda})$, where $\vec{\lambda}$ here is the vector of Schmidt coefficients of $\ket{\psi}$. This can be extended to mixed states via the convex roof construction. Evaluating the convex roof of such an entropy measure on isotropic states $\rho_{\isot}(\wbeta)$ amounts to minimizing $H_f(\vec{\lambda})$ subject to the constraint $\sum_{i}\sqrt{\lambda_i} = \sqrt{d\wbeta}$. 
In particular we can evaluate functions of the form
\begin{equation}\label{eq:HfBerrytheorem}
 H_{f,\isot}(\wbeta) = \inf\biggl\{H_f(\vec{\lambda}) \,\bigg|\, \sum_{i=1}^d\sqrt{\lambda_i}=\sqrt{d\wbeta} \biggr\}
\end{equation}
for $\wbeta\in[\frac{1}{d},1]$. The constraint in Eq.~\eqref{eq:HfBerrytheorem} can be rewritten as $\sqrt{d\wbeta}=H_g(\vec{\lambda})$, where we choose $g(x)=\sqrt{x}$. If $f$ satisfies the conditions in Theorem \ref{thm:berry}, then we may use Theorem \ref{thm:berry} to compute the value in Eq.~\eqref{eq:HfBerrytheorem}. Note that $(g')^{-1}(x)=\frac{1}{4x^2}$, so it suffices to check if $f'(\frac{1}{4x^2})$ is either strictly concave or convex as a function of $x$. 

Using $\vec{\lambda}$ of the form in Eq.~\eqref{eq:veclamt1}, solving for $t$ with respect to the constraint $\sum_{i=1}^d\sqrt{\lambda_i}=\sqrt{d\wbeta}$ such that $H_f(\vec{\lambda})$ is minimized yields
\begin{equation}\label{eq:tgen1}
 t=1-\frac{1}{d}\left(\sqrt{1-\wbeta}-\sqrt{\wbeta(d-1)}\right)^2.
\end{equation}
Therefore, if $f'(\frac{1}{4x^2})$ is strictly concave, it follows that $H_{f,\isot}(\wbeta) = f(t) +(d-1)f(\frac{1-t}{d-1})$, where the value of $t$ is taken from Eq.~\eqref{eq:tgen1}. 

Using $\vec{\lambda}$ of the form in Eq.~\eqref{eq:veclamt2}, solving for $t$ with respect to the constraint $\sum_{i=1}^d\sqrt{\lambda_i}=\sqrt{d\wbeta}$ such that $H_f(\vec{\lambda})$ is minimized yields
\begin{equation}\label{eq:tgen2}
 t=\frac{\left(\sqrt{d\wbeta k}+\sqrt{k+1-d\wbeta}\right)^2}{k(k+1)^2},
\end{equation}
where $k=\lfloor d\wbeta\rfloor$. It follows that, if $f'(\frac{1}{4x^2})$ is strictly convex then $H_{f,\isot}(\wbeta) = \lfloor d\wbeta\rfloor f(t) +f(1-\lfloor d\wbeta\rfloor t)$, where the value of $t$ is taken from \eqref{eq:tgen2}. Example values of $t$ in Eqs.~\eqref{eq:tgen1} and \eqref{eq:tgen2} as functions of $\wbeta$ for $d=5$ are plotted in Fig\ \ref{fig:valuesoftopt}. 

\begin{figure}
 \includegraphics[width=.48\textwidth]{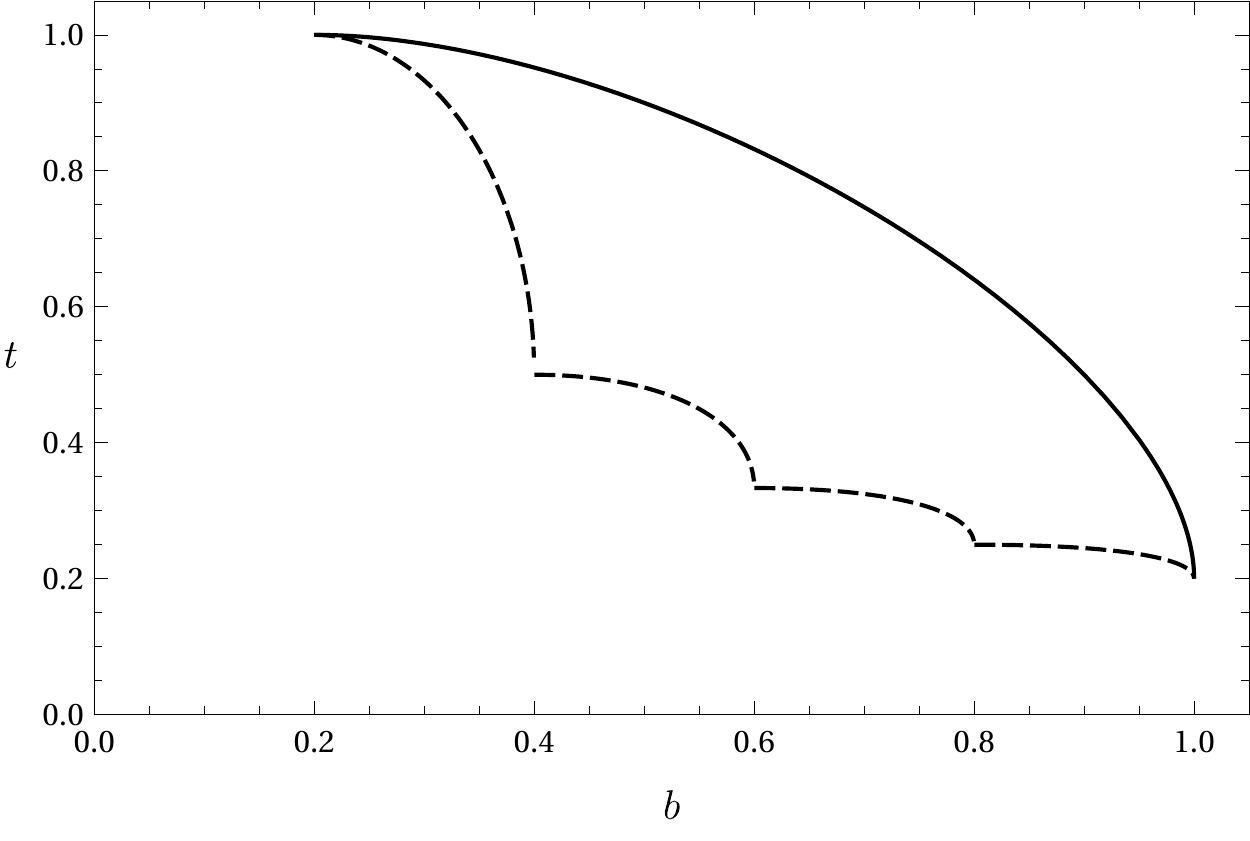}
 \caption{Example values of $t$ from Eqs.~\eqref{eq:tgen1} (solid line) and Eq.~\eqref{eq:tgen2} (dashed line) as functions of $\wbeta$ for $d=5$.}
 \label{fig:valuesoftopt}
\end{figure}

%~~~~~~~~~~~~~~~~~~~~~~~~~~~~
\subsubsection{Generalized concurrences}

Using the methods above, it is also possible to compute convex roofs of some of the \emph{generalized concurrence monotones} \cite{Gour2005}. These are defined as follows. For $k=1,2,\dots,d$,  let $S_k$ be the $k\textsuperscript{th}$ elementary symmetric polynomial of $d$ variables. That is,
\begin{align*}
 S_1(\vec{\lambda}) = \sum_{i=1}^d\lambda_i,\,\,\,  S_2(\vec{\lambda}) = \sum_{i<j}\lambda_i\lambda_j, \,\,
 \cdots, \,\,S_d(\vec{\lambda})=\prod_{i=1}^d\lambda_i.
\end{align*}
Note that $S_k(\frac{1}{d},\dots,\frac{1}{d})=\frac{1}{d^k}\binom{d}{k}$. The generalized concurrence monotones are defined by
\[
C_k(\vec{\lambda}) =\tfrac{d}{\binom{d}{k}^{1/k}}S_k(\vec{\lambda})^{1/k}.
\]
These symmetric functions are also concave \cite{Gour2005} and thus are valid entanglement monotones on pure states. Each $C_k$ achieves a maximum value of 1 on the maximally entangled pure state of two qudits. Note that $C_d$ is sometimes called the $G$-concurrence~\cite{Gour2005}.

Here, we compute the convex roofs of $C_2$ and $C_d$ on isotropic states. For $\wbeta\in[\tfrac{1}{d},1]$, we minimize $C_2$ and $C_d$ over all Schmidt vectors that satisfy $\sum_{i=1}^d\lambda_i=1$ and $\sum_{i=1}^d\sqrt{\lambda_i}=\sqrt{d\wbeta}$.   

We first compute $\widehat{C_2}$ for isotropic states. Note that 
\begin{align*}
 S_2(\vec{\lambda}) %&= \Bigl(\sum_{i=1}^d\lambda_i\Bigr)^2 - \sum_{i=1}^d \lambda_i^2\\
                    &=  \frac{1}{2}\Bigl(1- \sum_{i=1}^d \lambda_i^2\Bigr).
\end{align*}
Hence, minimizing $S_2(\vec{\lambda})$ is equivalent to maximizing $\sum_{i=1}^d \lambda_i^2$. By Theorem~\ref{thm:berry}, the optimal value  of this will be achieved by the Schmidt vector of the form in~\eqref{eq:veclamt1} with the value~$t$ from~\eqref{eq:tgen1}.  Thus
\begin{equation}\label{eq:C2t}
 C_{2,\isot}(\wbeta) = \frac{\sqrt{d}}{d-1}\sqrt{(1-t)(d(1+t)-2)},
\end{equation}
with $t$ from~\eqref{eq:tgen1} and $\wbeta\in[\tfrac{1}{d},1]$. The function in Eq.~\eqref{eq:C2t} is strictly concave as a function of $\wbeta$, thus its convex roof is the linear function
\begin{equation}
 \widehat{C_{2,\isot}}(\wbeta)=\left\{\begin{array}{ll}
                                  0, & 0\leq \wbeta\leq \tfrac{1}{d}\\
                                  \frac{d\wbeta-1}{d-1},& \tfrac{1}{d}\leq \wbeta\leq 1.
                                 \end{array}\right.
\end{equation}
Hence, the convex roof of the 2-concurrence on isotropic states reduces to $\widehat{C_2}(\rho_{\isot}(\wbeta))=\widehat{C_{2,\isot}}(\wbeta)$. This agrees with the result from Ref.~\cite{Eltschka2015}.

To compute the convex hull of the $G$-concurrence $\widehat{C_d}$ for isotropic states, note that 
\begin{align*}
\log S_d(\vec{\lambda}) &= \sum_{i=1}^d \log\lambda_i.
\end{align*}
Thus minimizing $S_d(\vec{\lambda})$ is equivalent to maximizing $\sum_{i=1}^d \log \lambda_i$. By Theorem~\ref{thm:berry}, the optimal value will be achieved by the Schmidt vector of the form in~\eqref{eq:veclamt2} with the value~$t$ from Eq.~\eqref{eq:tgen2}. Thus $C_{d,\isot}(\wbeta)=0$ for $\wbeta\leq 1-\frac{1}{d}$, and
\begin{equation}\label{eq:Ckt}
 C_{d,\isot}(\wbeta) = d\left(t^{d-1}-(d-1)t^d\right)^{1/d}
\end{equation}
for $\wbeta>1-\frac{1}{d}$, where
\[
 t=\frac{1}{d(d-1)}\left(\sqrt{(d-1)\wbeta}+\sqrt{1-\wbeta}\right)^2.
\]
The expression in Eq.~\eqref{eq:Ckt} is strictly concave as a function of $\wbeta$; thus its convex roof is just the linear function
\begin{equation}
 \widehat{C_{d,\isot}}(\wbeta)=\left\{\begin{array}{ll}
                                  0, & 0\leq \wbeta\leq 1- \tfrac{d}{d}\\
                                  d\wbeta -d+1,& 1-\tfrac{1}{d}\leq \wbeta\leq 1.
                                 \end{array}\right.
\end{equation}
Hence, the convex roof of the $G$-concurrence on isotropic states reduces to $\widehat{C_d}(\rho_{\isot}(\wbeta))=\widehat{C_{d,\isot}}(\wbeta)$. This agrees with the result from Ref.~\cite{Sentis2016}.

%~~~~~~~~~~~~~~~~~~~~~~~~~~~~~~~~~~~~~~~~~~~~~~~~~~~~~~~
\subsection{Extension to some non-symmetric states}
\label{sec:othersymms}

Here we show how to use the results from the previous sections to compute convex roof entanglement monotones for some states that are not necessarily symmetric.

For a subgroup $\calG\subset\LU$ of local unitaries and an entanglement monotone $E$ on pure states, recall that we can define the function
\[
E_{\calG}(\rho) := \min\{E(\psi)\,|\,\calT_\calG(\ketbra{\psi}{\psi})=\rho\}
\]
on $\calG$-invariant states $\rho$, where the minimization is taken over all pure states that twirl to $\rho$. A pure state $\ket{\psi}$ is said to minimize the entanglement of $\rho$ (with respect to $\calG$ and $E$) if $\calT_\calG(\ketbra{\psi}{\psi})=\rho$ and $E(\psi)=E_{\calG}(\rho)$. We also consider the orbit of $\ket{\psi}$ under the group $\calG$, which we denote as
\[
 \orb_\calG(\psi)= \left\{g\ketbra{\psi}{\psi}g^\dagger \,|\,g\in\calG\right\}.
\]

\begin{theorem}\label{thm:EGorb}
Let $\calG\subset\LU$ be a subgroup of local unitaries, let $\rho$ be a $\calG$-invariant state, and let $\ket{\psi}$ be a pure state that minimizes the entanglement of $\rho$ with respect to $E$ as defined in the preceding paragraph. If $\widehat{E_\calG}(\rho)=E_\calG(\rho)$, then 
\begin{equation}
 \widehat{E}(\sigma) = E(\psi) \quad \text{ for all } \sigma\in\co\left(\orb_\calG(\psi)\right),
\end{equation}
where $\co$ denotes the convex hull.
\end{theorem}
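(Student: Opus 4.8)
The plan is to prove the two inequalities $\widehat{E}(\sigma)\le E(\psi)$ and $\widehat{E}(\sigma)\ge E(\psi)$ separately for an arbitrary element $\sigma\in\co(\orb_\calG(\psi))$. Fix such a $\sigma$ and write $\sigma=\sum_i t_i\, g_i\ketbra{\psi}{\psi}g_i^\dagger$ with $g_i\in\calG$, $t_i\ge0$, and $\sum_i t_i=1$; this is the generic form of a point in the convex hull of the orbit.

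The upper bound is immediate. Since each $g_i$ is a local unitary, the pure state $g_i\ket{\psi}$ has the same Schmidt coefficients as $\ket{\psi}$, so $E(g_i\psi)=E(\psi)$; the family $\{t_i,g_i\ket{\psi}\}$ is then a valid pure-state decomposition of $\sigma$, whence $\widehat{E}(\sigma)\le\sum_i t_i E(g_i\psi)=E(\psi)$. For the lower bound I would first note that $\sigma$ twirls to $\rho$: by translation-invariance of the Haar measure, $\calT_\calG(g_i\ketbra{\psi}{\psi}g_i^\dagger)=\calT_\calG(\ketbra{\psi}{\psi})=\rho$ for every $i$, and since $\calT_\calG$ is affine on $K$ this gives $\calT_\calG(\sigma)=\sum_i t_i\rho=\rho$. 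Because $\calT_\calG$ is an LOCC operation (a convex mixture of local-unitary channels) and $\widehat{E}$ is an entanglement monotone, $\widehat{E}$ cannot increase under $\calT_\calG$, so $\widehat{E}(\sigma)\ge\widehat{E}(\calT_\calG(\sigma))=\widehat{E}(\rho)$. It then remains to identify $\widehat{E}(\rho)$: by Theorem~\ref{thm:coco} (applied with the monotone $E$ and the group $\calG$) we have $\widehat{E}(\rho)=\widehat{E_\calG}(\rho)$; by the hypothesis of the theorem $\widehat{E_\calG}(\rho)=E_\calG(\rho)$; and since $\ket{\psi}$ minimizes the entanglement of $\rho$ we have $E_\calG(\rho)=E(\psi)$. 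Combining, $\widehat{E}(\sigma)\ge E(\psi)$, so $\widehat{E}(\sigma)=E(\psi)$.

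The only point that requires care — and which I expect to be the main (if modest) obstacle — is the monotonicity claim $\widehat{E}(\calT_\calG(\sigma))\le\widehat{E}(\sigma)$, because the convex roof in Eq.~\eqref{eq:cohatf} is an infimum over countable convex combinations whereas $\calT_\calG$ is a genuine Haar integral. One can either invoke the general fact that the convex roof of a pure-state entanglement monotone is an LOCC monotone, or argue directly as follows: for any decomposition $\sigma=\sum_j q_j\ketbra{\phi_j}{\phi_j}$, applying $\calT_\calG$ and using $\calT_\calG(\sigma)=\rho$ writes $\rho=\sum_j q_j\tau_j$ as a convex combination of the $\calG$-invariant states $\tau_j:=\calT_\calG(\ketbra{\phi_j}{\phi_j})$; convexity of $\widehat{E}$ gives $\widehat{E}(\rho)\le\sum_j q_j\widehat{E}(\tau_j)$, and for each $j$ Theorem~\ref{thm:coco}, the elementary bound $\widehat{E_\calG}\le E_\calG$, and the fact that $\ket{\phi_j}$ is admissible in the minimum defining $E_\calG(\tau_j)$ give $\widehat{E}(\tau_j)=\widehat{E_\calG}(\tau_j)\le E_\calG(\tau_j)\le E(\phi_j)$; taking the infimum over all decompositions of $\sigma$ then yields $\widehat{E}(\rho)\le\widehat{E}(\sigma)$, which combined with $\widehat{E}(\rho)=E(\psi)$ and the upper bound finishes the argument. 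It is worth remarking that this theorem is essentially the continuous-orbit analogue of the linear-section phenomenon recorded in Lemma~\ref{lem:nonsymmelems}: the orbit $\orb_\calG(\psi)$ plays the role of the minimizing set $\{x_i\}$, and $\widehat{E}$ is not merely linear but constant on its convex hull precisely because local unitaries preserve every entanglement monotone.
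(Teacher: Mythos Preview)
Your proof is correct and follows essentially the same approach as the paper's: the upper bound via a decomposition into local-unitary rotates of $\ket{\psi}$, and the lower bound via $\calT_\calG(\sigma)=\rho$, the LOCC monotonicity of $\widehat{E}$ under $\calT_\calG$, and the chain $\widehat{E}(\rho)=\widehat{E_\calG}(\rho)=E_\calG(\rho)=E(\psi)$. The paper states these steps tersely (e.g.\ it simply asserts $\rho=\calT_\calG(\sigma)$ and $\widehat{E}(\rho)=\widehat{E_\calG}(\rho)$), whereas you supply the justifications and an additional direct argument for the monotonicity step; these elaborations are sound but do not constitute a different route.
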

\begin{proof}
 Suppose the conditions of the theorem are satisfied and let $\sigma\in\co\left(\orb_\calG(\psi)\right)$. Since $\calG$ is a subgroup of local unitaries, it holds that $E(g\ket{\psi})=E(\psi)$ for all $g\in\calG$. It follows that $\widehat{E}(\sigma)\leq E(\psi)$ from the definition of the convex roof. Furthermore, since $\calT_\calG$ is an LOCC channel, it holds that $\widehat{E}(\calT_\calG(\sigma))\leq \widehat{E}(\sigma)$. Note that $\rho=\calT_\calG(\sigma)$ and $\widehat{E}(\rho)=\widehat{E_\calG}(\rho)$. The result follows.
\end{proof}

Theorem \ref{thm:EGorb} allows us to compute the convex hull on a larger class of non-symmetric states if we can find $\calG$-invariant states such that $\widehat{E_\calG}(\rho)=E_\calG(\rho)$. On the other hand, if $\widehat{E_\calG}(\rho)<E_\calG(\rho)$, we can still compute $\widehat{E}$ on a larger class of non-symmetric states under certain conditions. See Appendix \ref{sec:extensions} for details. 

%~~~~~~~~~~~~~~~~~~~~~~~~~~~
\subsection{Convex roofs on other symmetries}\label{sec:othersyms}

In Ref.~\cite{Vollbrecht2001}, it was shown how to extend the convex roof formula for the entanglement of formation $E_F$ from the Werner and isotropic states to a larger family of $OO$-invariant states. Here, we show that this can in fact be done for any entanglement monotone. Furthermore, we extend the convex roof formulas to the phase-permutation invariant states as well. 

Let $\calG$ and $\calH$ be subgroups of the local unitaries and $\calH\subset\calG$. The commutants of $\calG$ and $\calH$ satisfy $\calG'\subset\calH'$, so the family of $\calG$-invariant states forms a subset of the $\calH$-invariant states. If it is known how to compute the convex roofs of entanglement monotones on $\calG$-invariant states, then we can apply the result of Theorem \ref{thm:EGorb} to compute convex roofs on some $\calH$-invariant states that are also in the convex hull of the orbit of some minimizing pure state. That is, we can evaluate the convex roofs of entanglement monotones on states that are in the intersection
\[
 \calT_\calH(\D)\cap \co\left(\orb_\calG(\psi)\right)
\]
if $\ket{\psi}$ is a minimizing pure state for a $\calG$-invariant state, where $\orb_\calG(\psi)$ is the orbit of $\ket{\psi}$ is denoted by
\[
 \orb_\calG(\psi)=\{g\ketbra{\psi}{\psi}g^{-1}\,|\,g\in\calG\}.
\]

The minimizing pure states for Werner states are always the states $\ket{\psi_a}$ as defined in Eq.~\eqref{eq:psigam2}. We first show which of the phase-permutation Werner states $\rho_{\wer}^G(a,b)$ and $OO$-invariant states $\rho_{\Orth}(a,b)$ are in the orbits of these minimizing pure states. These are exactly the states depicted in regions A of Figs.~\ref{fig:ooinv1} and \ref{fig:pphaspermwer1}. This allows us to extend the formulas for convex roof entanglement monotones from the Werner states to this larger family of states. The proof of the following lemma can be found in Appendix~\ref{sec:extensions}.

\begin{lemma}\label{lem:coorbwer}
 Let $a\in[\frac{1}{2},1]$. Then
 \begin{enumerate}
  \item $\rho_{\wer}^G(a,b)\in\co(\orb_{\wer}(\psi_a))$ for all $b\in[0,1-a]$; and 
  \item $\rho_{\Orth}(a,b)\in\co(\orb_{\wer}(\psi_a))$ for all $b\in[0,\frac{2}{d}(1-a)]$. 
 \end{enumerate}
That is, all states in region \textup{A} of Fig~\ref{fig:ooinv1} and region \textup{A} of Fig~\ref{fig:pphaspermwer1} are in the convex hulls of the orbits of the corresponding minimizing pure states for $\rho_{\wer}(a)$.  
\end{lemma}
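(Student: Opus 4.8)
The plan is to realise, for each of the two families, the two extreme states of the relevant line segment as $\calG$-twirls of suitable elements of $\orb_{\wer}(\psi_a)$, and then to obtain every intermediate state by convexity. Two preliminary observations set this up. First, $\orb_{\wer}(\psi_a)$ is compact (it is the continuous image of $\U(d)$ under $U\mapsto(U\otimes U)\ketbra{\psi_a}{\psi_a}(U\otimes U)^\dagger$), so $\co(\orb_{\wer}(\psi_a))$ is compact, hence closed; and since $\orb_{\wer}(\psi_a)$ is invariant under conjugation by every $U\otimes U$, so is $\co(\orb_{\wer}(\psi_a))$. Consequently any twirl $\calT_{\Orth}(\sigma)$ or $\calT^G_{\wer}(\sigma)$ of an element $\sigma\in\orb_{\wer}(\psi_a)$ — being a Haar average of conjugates $(U\otimes U)\sigma(U\otimes U)^\dagger\in\orb_{\wer}(\psi_a)$ — lies in $\co(\orb_{\wer}(\psi_a))$. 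Second, writing $\ket{\psi_a}=c_+\ket{12}-c_-\ket{21}$ with $c_\pm^2=\tfrac12\pm\sqrt{a(1-a)}$, one has $c_+c_-=a-\tfrac12$ and hence $(c_+-c_-)^2=2(1-a)$; moreover $\braketinner{\psi_a}{W_{\!\!-}}{\psi_a}=a$, $\braketinner{\psi_a}{W_{\!\!+}}{\psi_a}=1-a$, and $\braketinner{\psi_a}{Q}{\psi_a}=\braketinner{\psi_a}{\Phi_d}{\psi_a}=0$, the last two because $\ket{\psi_a}$ is supported on $\mathrm{span}\{\ket{12},\ket{21}\}$. Finally, $W_{\!\!\pm}$ commute with every $U\otimes U$, so $\Tr[(U\otimes U)\ketbra{\psi_a}{\psi_a}(U\otimes U)^\dagger W_{\!\!\pm}]$ does not depend on $U$.

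For part 2, recall (in the same way as the formulas already recorded for $\calT_{\wer}$ and $\calT_{\isot}$, using that $\calT_{\Orth}$ projects onto the commutant spanned by $\{\mathds1,W,\Phi_d\}$) that $\calT_{\Orth}(\sigma)=\rho_{\Orth}(\Tr[\sigma W_{\!\!-}],\Tr[\sigma\Phi_d])$ for every state $\sigma$. Applying this to $\ketbra{\psi_a}{\psi_a}$ gives $\calT_{\Orth}(\ketbra{\psi_a}{\psi_a})=\rho_{\Orth}(a,0)$, so the endpoint $\rho_{\Orth}(a,0)$ is in $\co(\orb_{\wer}(\psi_a))$. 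For the other endpoint I would take the local unitary $U'$ acting on $\mathrm{span}\{\ket1,\ket2\}$ by $\ket1\mapsto\tfrac1{\sqrt2}(\ket1+i\ket2)$, $\ket2\mapsto\tfrac1{\sqrt2}(i\ket1+\ket2)$ and as the identity on the orthogonal complement. Writing $\ket\phi=\tfrac1{\sqrt d}\sum_k\ket{kk}$ (so $\Phi_d=\ketbra\phi\phi$), a short computation gives $(U'\otimes U')\ket{\psi_a}=\tfrac i2(c_+-c_-)(\ket{11}+\ket{22})+\tfrac12(c_++c_-)(\ket{12}-\ket{21})$, whence $\abs{\braket\phi{(U'\otimes U')\psi_a}}^2=\tfrac1d(c_+-c_-)^2=\tfrac{2(1-a)}{d}$; together with the $U$-independence of the $W_{\!\!-}$-trace this yields $\calT_{\Orth}\bigl((U'\otimes U')\ketbra{\psi_a}{\psi_a}(U'\otimes U')^\dagger\bigr)=\rho_{\Orth}\bigl(a,\tfrac2d(1-a)\bigr)\in\co(\orb_{\wer}(\psi_a))$. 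Since $\rho_{\Orth}(a,\cdot)$ is affine, $\rho_{\Orth}(a,b)=(1-s)\rho_{\Orth}(a,0)+s\,\rho_{\Orth}\bigl(a,\tfrac2d(1-a)\bigr)$ with $s=\tfrac{bd}{2(1-a)}\in[0,1]$ for $b\in[0,\tfrac2d(1-a)]$, and convexity of $\co(\orb_{\wer}(\psi_a))$ finishes part 2 (the degenerate case $a=1$, where $b=0$ only, is immediate).

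Part 1 is the mirror image, using $\calT^G_{\wer}(\sigma)=\rho^G_{\wer}(\Tr[\sigma W_{\!\!-}],\Tr[\sigma(W_{\!\!+}-Q)])$. Here $\braketinner{\psi_a}{W_{\!\!+}-Q}{\psi_a}=1-a$, so $\calT^G_{\wer}(\ketbra{\psi_a}{\psi_a})=\rho^G_{\wer}(a,1-a)$, which supplies the endpoint $b=1-a$. For the endpoint $b=0$ I would conjugate $\ket{\psi_a}$ by the local unitary $U''$ carrying the orthonormal basis $\ket{e_1}=\tfrac1{\sqrt2}(\ket1+\ket2)$, $\ket{e_2}=\tfrac1{\sqrt2}(\ket1-\ket2)$, $\ket{e_k}=\ket k$ ($k\ge3$) to the computational basis; then $(U''\otimes U'')^\dagger Q(U''\otimes U'')=\sum_i\ketbra{e_ie_i}{e_ie_i}$ and $\braketinner{\psi_a}{\sum_i\ketbra{e_ie_i}{e_ie_i}}{\psi_a}=\sum_i\abs{\braket{e_ie_i}{\psi_a}}^2=\tfrac12(c_+-c_-)^2=1-a$, so $\Tr\bigl[(U''\otimes U'')\ketbra{\psi_a}{\psi_a}(U''\otimes U'')^\dagger(W_{\!\!+}-Q)\bigr]=(1-a)-(1-a)=0$ and hence $\calT^G_{\wer}\bigl((U''\otimes U'')\ketbra{\psi_a}{\psi_a}(U''\otimes U'')^\dagger\bigr)=\rho^G_{\wer}(a,0)\in\co(\orb_{\wer}(\psi_a))$. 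Affineness of $\rho^G_{\wer}(a,\cdot)$ then gives $\rho^G_{\wer}(a,b)=\bigl(1-\tfrac b{1-a}\bigr)\rho^G_{\wer}(a,0)+\tfrac b{1-a}\,\rho^G_{\wer}(a,1-a)$ for $b\in[0,1-a]$ when $a<1$ (and $a=1$ is immediate as before), completing part 1. I expect the only genuinely non-formal steps to be identifying the two rotated copies $U'\ket{\psi_a}$ and $U''\ket{\psi_a}$ of $\ket{\psi_a}$ and checking the two overlap computations; guessing the right $U'$ that pushes the $\Phi_d$-overlap all the way up to the boundary value $\tfrac2d(1-a)$ is the main point to get right, and everything else is convexity together with the twirl formulas already in hand.
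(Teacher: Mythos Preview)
Your proof is correct and follows the same strategy as the paper: reduce to the two endpoint states by convexity and realise each as the appropriate subgroup-twirl of an element of $\orb_{\wer}(\psi_a)$ (either $\ket{\psi_a}$ itself or a rotated copy via a $2\times 2$ local unitary). Your explicit unitaries differ only cosmetically from the paper's choice, and you are somewhat more careful in spelling out why such twirls remain in $\co(\orb_{\wer}(\psi_a))$; incidentally, in part~1 you correctly assign $\ket{\psi_a}$ to the endpoint $b=1-a$ and the rotated copy to $b=0$, whereas the paper's printed proof has these two labels interchanged.
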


A similar statement can be made for isotropic states. Here, however, the form of the Schmidt coefficients of the minimizing pure state $\ket{\phi_b}=\sum_{i=1}^d\sqrt{\lambda_i}\ket{ii}$ for the isotropic state $\rho_{\isot}(b)$ will depend on which entanglement monotone $E$ is being considered. As above, the convex roof of $E$ can be evaluated
on any state in the convex hull of the orbit of $\ket{\phi_b}$. In the following lemma, we show which phase-permutation isotropic states and which $OO$-invariant states are in the convex hulls of these orbits. For any $E$, all phase-permutation isotropic states
$\rho_{\isot}(a,b)$ in region B of Fig.~\ref{fig:pphaspermiso1} are in the convex hull of the orbit of the minimizing pure state
$\ket{\phi_b}$. In most cases, all $OO$-invariant states $\rho_{\Orth}(a,b)$ in region B of Fig.~\ref{fig:ooinv1} are also in the
convex hull of the orbit of $\ket{\phi_b}$. The proof of the following lemma can be found in Appendix~\ref{sec:extensions}.

\begin{lemma}\label{lem:coorbiso}
 Let $E$ be an entanglement monotone on pure states and let $b\in[\frac{1}{d},1]$. Let $\ket{\phi_b}=\sum_{i=1}^d\sqrt{\lambda_i}\ket{ii}$ be the pure state that minimizes $E$ for $\rho_{\isot}(b)$. Then
 \begin{enumerate}
  \item $\rho_{\isot}^G(a,b)\in\co(\orb_{\isot}(\phi_b))$ for all $a\in[0,1-b]$, and
  \item If $\vec{\lambda}$ is of the form in either Eq.~\eqref{eq:lambdaisotoptmain} or Eq.~\eqref{eq:veclamt2}, then $\rho_{\Orth}(a,b)\in\co(\orb_{\isot}(\phi_b))$ for all $a\in[0,\frac{d(1-b)}{2(d-1)}]$. 
 \end{enumerate}
That is, all states in region \textup{B} of Fig~\ref{fig:ooinv1} and region \textup{B} of Fig~\ref{fig:pphaspermiso1} are in the convex hulls of the orbits of the corresponding minimizing pure states for $\rho_{\isot}(b)$.  
\end{lemma}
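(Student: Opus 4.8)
The plan is to verify each claimed containment by writing the target symmetric state as an explicit convex combination of elements in the orbit $\orb_{\isot}(\phi_b)$ under the isotropic group $\{U\otimes\overline U\,|\,U\in\U(d)\}$, and checking the boundary cases on $a$. Recall that $\rho_{\isot}(b) = \calT_{\isot}(\ketbra{\phi_b}{\phi_b})$, so $\rho_{\isot}(b)$ itself is always in $\co(\orb_{\isot}(\phi_b))$; the point is to enlarge this to the two-parameter families. Since the convex hull of the orbit, intersected with a given symmetry subspace, is itself a convex set invariant under the relevant subgroup, it suffices to identify its extreme points among symmetric states and show they bracket the claimed intervals for $a$.

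First I would handle part (1), the phase-permutation isotropic states. The key observation is that $\ket{\phi_b}=\sum_i\sqrt{\lambda_i}\ket{ii}$ is already $Q$-supported in a useful way: acting by $P_\pi\otimes\overline{P_\pi}=P_\pi\otimes P_\pi$ permutes the Schmidt basis, and acting by diagonal unitaries $D\otimes\overline D$ rotates the relative phases of the $\ket{ii}$ terms. Averaging $\ketbra{\phi_b}{\phi_b}$ over diagonal phases alone already kills all off-diagonal $\ketbra{ii}{jj}$ terms and yields a diagonal operator supported on $\{\ket{ii}\}$, namely $\sum_i\lambda_i\ketbra{ii}{ii}$, which lies in the span of $\{Q,\Phi_d\}$-type elements but is not yet $\mathcal P_d$-symmetric unless $\vec\lambda$ is constant. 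Then averaging over permutations gives $\frac1d Q$. So both $\frac1d Q$ and $\rho_{\isot}(b)$ (full twirl) are in $\co(\orb_{\isot}(\phi_b))$; but I actually want states \emph{outside} the $\{\mathds 1,\Phi_d\}$ line, i.e.\ with nonzero $a=\Tr[\rho(Q-\Phi_d)]$. The trick is to average over a proper subgroup of $G$: conjugating $\ketbra{\phi_b}{\phi_b}$ by those $U\otimes\overline U$ with $U\in G$ but \emph{not} fully twirling over all of $\U(d)$ produces states in the three-dimensional commutant $\{\mathds1,\Phi_d,Q\}$. Concretely, mixing $\rho_{\isot}(b)$ (obtained by full $\U(d)$-twirl, which is a valid limit of convex combinations of orbit elements) with $\frac1d Q$ sweeps out a segment; I expect this segment, parametrized appropriately, to realize exactly $\rho_{\isot}^G(a,b)$ for $a\in[0,1-b]$. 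One must check that the endpoint $a=1-b$ corresponds to $\frac1d Q$ (it does, since $\Tr[\frac1dQ\,\Phi_d]=\frac1d\cdot\frac1d\cdot d \cdot\frac1d$... one computes $\Tr[Q\Phi_d]=1$, so $\Tr[\frac1dQ\,\Phi_d]=\frac1d$, giving $b$-value $\frac1d$; more care is needed here) and that $a=0$ gives $\rho_{\isot}(b)$. Actually the cleaner route: show $\co(\orb_{\isot}(\phi_b))\cap\calT^G_{\isot}(\D)$ is a triangle or segment with vertices $\rho_{\isot}(b)$, $\frac1dQ$, and possibly a third point, then read off the allowed $a$-range.

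For part (2), the $OO$-invariant states, the subtlety flagged in the statement — ``if $\vec\lambda$ is of the form \eqref{eq:lambdaisotoptmain} or \eqref{eq:veclamt2}'' — signals that the argument needs the minimizing Schmidt vector to have a flat block structure so that $\ket{\phi_b}$ has enough stabilizer inside $\Orth(d)$. When $\vec\lambda=(t,\dots,t,\frac{1-kt}{d-k},\dots)$, the state $\ket{\phi_b}$ is invariant under $O\otimes O$ for $O$ block-orthogonal preserving the two flat blocks, so one can twirl over a larger subgroup and land in the $\{\mathds1,W_{\!\!-},\Phi_d\}$ commutant. I would average $\ketbra{\phi_b}{\phi_b}$ over the relevant block-orthogonal subgroup, express the result in the $\rho_{\Orth}(a,b)$ parametrization of Eq.~\eqref{eq:OOinvstates}, and then, as in part (1), mix with $\rho_{\isot}(b)$ to sweep the segment $a\in[0,\frac{d(1-b)}{2(d-1)}]$. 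The endpoint $\frac{d(1-b)}{2(d-1)}$ should correspond to the fully orthogonally-twirled $\ketbra{\phi_b}{\phi_b}$ (maximal $W_{\!\!-}$-weight), and $a=0$ again to $\rho_{\isot}(b)$; checking that the $b$-coordinate stays fixed at its target value along the whole segment is the bookkeeping to get right.

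The main obstacle, I expect, is not any single computation but keeping the two-parameter coordinates $(a,b)$ consistent: each time one twirls $\ketbra{\phi_b}{\phi_b}$ over a subgroup one must recompute $\Tr[\cdot\,W_{\!\!-}]$, $\Tr[\cdot\,\Phi_d]$, $\Tr[\cdot\,Q]$ and confirm the result lands on the claimed edge of region B with the correct $b$ and the full advertised range of $a$. A secondary subtlety is that the full-group twirl $\calT_{\isot}$ is not literally a finite convex combination of orbit elements but a Haar integral; one should note (as the paper does elsewhere) that $\calT_{\isot}(\ketbra{\phi_b}{\phi_b})$ nonetheless lies in the closed convex hull of the orbit, and since $\co(\orb_{\isot}(\phi_b))$ is compact (continuous image of a compact group times a simplex factor, or: the orbit is compact so its convex hull is closed) this closure issue is harmless. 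The degenerate-$\vec\lambda$ hypothesis in (2) is exactly what guarantees the block-orthogonal averaging is available, so I would state clearly where that hypothesis is used and note that without it one only gets the weaker containment implied by the full isotropic twirl.
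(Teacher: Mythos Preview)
Your plan has a structural error that propagates through both parts. The operator $\Phi_d$ is invariant under every $U\otimes\overline U$, so $\Tr[\sigma\,\Phi_d]=b$ for \emph{every} $\sigma\in\co(\orb_{\isot}(\phi_b))$. This immediately resolves your stated worry about ``keeping $b$ fixed'', but it also shows that $\tfrac1d Q$ (which has $\Tr[\tfrac1d Q\,\Phi_d]=\tfrac1d$) is \emph{not} in the convex hull of the orbit when $b>\tfrac1d$, so you cannot use it as an endpoint. Relatedly, your claim that averaging over diagonal phases kills the off-diagonal $\ketbra{ii}{jj}$ terms is false for the isotropic action: since $D\otimes\overline D\ket{ii}=\ket{ii}$ for any diagonal unitary $D$, the state $\ket{\phi_b}$ is fixed pointwise by the diagonal subgroup, and averaging over it does nothing. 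You have confused the $U\otimes U$ and $U\otimes\overline U$ actions here. Finally, $\rho_{\isot}(b)$ is not the $a=0$ point of either family: in the $\rho_{\isot}^G$ coordinates it sits at $a=\tfrac{1-b}{d+1}$, and in the $\rho_{\Orth}$ coordinates at $a=\tfrac{d(1-b)}{2(d+1)}$, both strictly interior.

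What the paper actually does is find, for each boundary value of $a$, a single explicit unitary $U$ such that the $\calH$-twirl of $(U\otimes\overline U)\ketbra{\phi_b}{\phi_b}(U\otimes\overline U)^\dagger$ lands at that boundary point, and then invokes convexity. For part~(1): the $a=1{-}b$ endpoint comes from $\phi_b$ itself (since $\bra{\phi_b}Q\ket{\phi_b}=1$), while the $a=0$ endpoint requires the discrete Fourier transform $U=\tfrac{1}{\sqrt d}\sum_{j,k}\omega^{jk}\ketbra{j}{k}$, which gives $\bra{\phi_b}(U\otimes\overline U)^\dagger Q(U\otimes\overline U)\ket{\phi_b}=b$. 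For part~(2): the $a=0$ endpoint again comes from $\phi_b$ itself (since $\bra{\phi_b}W_{\!-}\ket{\phi_b}=0$), but your proposed block-orthogonal averaging also fixes $\ket{\phi_b}$ (for real $O$, $O\otimes\overline O=O\otimes O$ and $\sum_i\ket{ii}$ is invariant), so it cannot produce any nonzero $W_{\!-}$ weight. The paper instead uses explicit \emph{complex} $2\times2$ rotations $U_{j,k}$ pairing a ``large'' Schmidt slot with a ``small'' one, computes $\bra{\phi_b}(U\otimes\overline U)^\dagger W_{\!-}(U\otimes\overline U)\ket{\phi_b}$ as a sum of terms $\tfrac12(\sqrt{\lambda_j}-\sqrt{\lambda_k})^2$, and shows this sum is at least $\tfrac{d(1-b)}{2(d-1)}$. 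It is precisely in bounding this sum that the block-constant form of $\vec\lambda$ in Eq.~\eqref{eq:lambdaisotoptmain} or Eq.~\eqref{eq:veclamt2} is used; the hypothesis enters the computation, not a stabilizer argument.
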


For every entanglement monotone considered in this work, the Schmidt coefficients of the minimizing pure states have this desired form. This allows us to extend the convex roofs of these entanglement monotones from the isotropic states to this larger family of states. 

If $E_{\wer}(a)$ and $E_{\isot}(b)$ are already convex as functions of $a$ and $b$, then Lemmas \ref{lem:coorbwer} and \ref{lem:coorbiso}, together with Theorem \ref{thm:EGorb}, allow us to extend these convex roof formulas to any state in regions \textup{A} of Figs.~\ref{fig:ooinv1} and \ref{fig:pphaspermwer1} and regions \textup{B} of Figs.~\ref{fig:ooinv1} and \ref{fig:pphaspermiso1}. It is noteworthy that the value of the convex roof for any entanglement monotone for these states depends only on one of the expectations $\Tr[\rho W_{\!\!-}]$ or $\Tr[\rho \Phi_d]$. As an example, the convex roofs of the Vidal monotones on the $OO$-invariant states with dimension $d=5$ are shown in Fig~\ref{fig:OO-vidals}. 
\begin{figure}
 \includegraphics[width=.9\columnwidth]{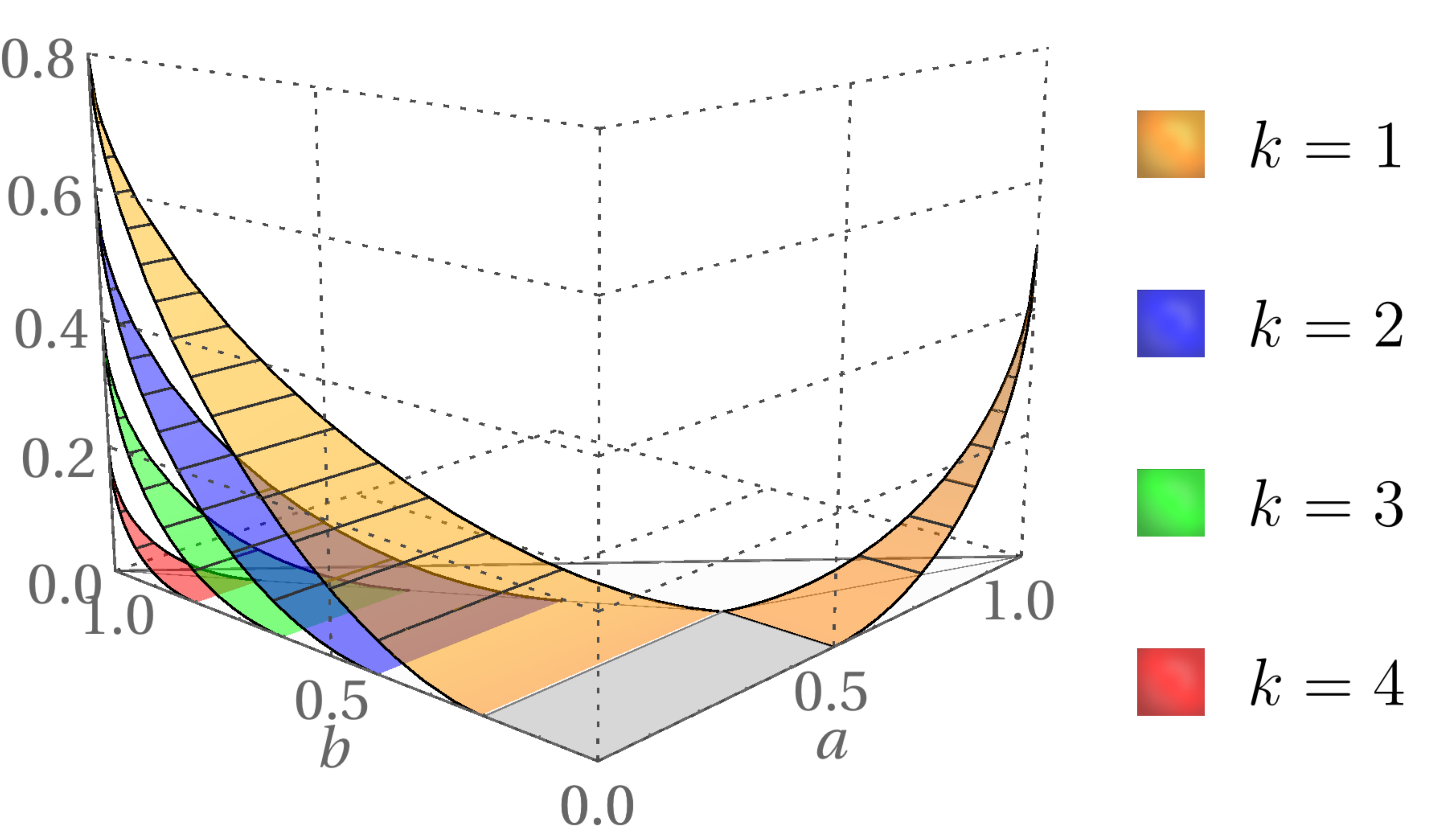}
 \caption{Convex roofs of the Vidal monotones $E_k$ (for $d=5$ and $k=1,2,3,4$) evaluated on regions A and B of the $OO$-invariant states. For the surfaces on the left-hand side, $k$ varies from 1 to 4 from the upper to the lower level. Only $E_1$ is non-vanishing on the right-hand side.}
 \label{fig:OO-vidals}
\end{figure}

If $E_{\wer}(a)$ and $E_{\isot}(b)$ are not convex, e.g.~if there is some value $b$ so that $\cohat{E_{\isot}}(b)<E_{\isot}(b)$, then we may still extend the formula to all of these states as long as $E$ is continuous.

%%%%%%%%%%%%%%%%%%%%%%%%%%%%%%%%%%%%%%%%%%%%%%%%%%%%%%%%%%%%

\section{Conversion witnesses}
\label{sec:witnesses}

It was shown in Ref.~\cite{Gheorghiu2008} that a pure state $\ket{\psi}\in\CC^d\otimes\CC^d$ can be converted into an arbitrary mixed state $\rho\in\D(\CC^d\otimes\CC^d)$ if and only if there exists an pure state decomposition $\{p_i,\ket{\varphi_i}\}$ of $\rho$ that satisfies
\[
 E_k(\psi)\geq \sum_{i}p_i E_k(\varphi_i)
\]
for all positive integers $k$, where $\rho=\sum_{i}p_i\ketbra{\varphi_i}{\varphi_i}$. This necessary and sufficient condition for LOCC transformation can be encoded into the following complete witness:
\begin{equation}\label{eq:Wconversion}
 W(\psi,\rho) = \max_{\{p_i,\ket{\varphi_i}\}} \min_{k} \biggl(E_k(\psi)-\sum_{i}p_iE_k(\varphi_i)\biggr),
\end{equation}
where the maximum is taken over all pure state decompositions $\rho=\sum_i p_i\ketbra{\varphi_i}{\varphi_i}$. The function $W$ is a complete witness in the sense that $W(\psi,\rho)\geq0$ if and only if $\ket{\psi}$ can be converted into $\rho$ via LOCC. Although this function cannot be computed for arbitrary mixed states, it can be simplified for certain classes of mixed states $\rho$. In particular, we compute $W(\psi,\rho)$ explicitly in the case when $\rho$ is a state on $\CC^2\otimes\CC^d$ for any $d$ (i.e.\ in the case when at least one subsystem is a qubit).  We can also make extensive use of symmetry to compute $ W(\psi,\rho)$ in the case when $\rho$ is highly symmetric (i.e.\ Werner or isotropic states).

The witness in Eq.~\eqref{eq:Wconversion} simplifies to a known necessary and sufficient condition for converting a pure state $\ket{\psi}$ to a mixed state $\rho$ in the case when $\rho$ is a state of a system in which one subsystem is a qubit \cite{Vidal2000b}. Indeed, for pure states $\ket{\varphi}\in\CC^2\otimes\CC^d$ with any $d\geq2$, it holds that $E_k(\varphi)=0$ whenever $k\geq2$ since  $\ket{\varphi}$ can have at most two nonzero Schmidt coefficients. If $\rho$ is any mixed state on $\CC^2\otimes\CC^d$, then the minimization over $k$ in Eq.~\eqref{eq:Wconversion} can be eliminated, since only $E_1$ can be nonzero. In this case, the conversion witness in Eq.~\eqref{eq:Wconversion} simplifies to $W(\psi,\rho) = E_1(\psi) - \cohat{E_1}(\rho)$. This implies the following theorem.

\begin{theorem}
 For any bipartite mixed state $\rho$ on $\CC^2\otimes\CC^d$ and for any bipartite pure state $\ket{\psi}$ of systems of any size, it holds that $\ket{\psi}\locc\rho$ if and only if $E_1(\psi)\geq\cohat{E_1}(\rho)$.
\end{theorem}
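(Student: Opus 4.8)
The plan is to deduce the statement directly from the characterization of pure-to-mixed LOCC conversion of Ref.~\cite{Gheorghiu2008}, which says that $\ket{\psi}\locc\rho$ exactly when $\rho$ admits a pure-state decomposition $\rho=\sum_i p_i\ketbra{\varphi_i}{\varphi_i}$ satisfying $E_k(\psi)\geq\sum_i p_i E_k(\varphi_i)$ for every $k$. The crucial observation is that, because $\rho$ is supported on $\CC^2\otimes\CC^d$, each $\ket{\varphi_i}$ occurring in any such decomposition lies in the range of $\rho$ and hence has Schmidt rank at most two, so $E_k(\varphi_i)=0$ for all $k\geq2$. The inequalities with $k\geq2$ therefore read $E_k(\psi)\geq0$, which hold automatically, and Gheorghiu's criterion collapses to a single condition: there exists a decomposition of $\rho$ with $E_1(\psi)\geq\sum_i p_i E_1(\varphi_i)$.

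From this reduced criterion both directions are short. If $\ket{\psi}\locc\rho$, fix a witnessing decomposition; then $E_1(\psi)\geq\sum_i p_i E_1(\varphi_i)\geq\cohat{E_1}(\rho)$ by the definition of the convex roof, so $E_1(\psi)\geq\cohat{E_1}(\rho)$. For the converse I would use that the infimum defining $\cohat{E_1}(\rho)$ is attained: $E_1$ is continuous on the compact set of pure states on $\CC^2\otimes\CC^d$, and by Carath\'eodory's theorem it suffices to consider decompositions into a bounded number of terms, so the minimum is realised by some $\{p_i,\ket{\varphi_i}\}$. Then $\sum_i p_i E_1(\varphi_i)=\cohat{E_1}(\rho)\leq E_1(\psi)$, so this decomposition satisfies the reduced criterion and $\ket{\psi}\locc\rho$. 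Equivalently, one can run the argument through the complete witness of Eq.~\eqref{eq:Wconversion}: since $t\mapsto\min(t,c)$ is continuous and nondecreasing, $W(\psi,\rho)=\min\!\bigl(E_1(\psi)-\cohat{E_1}(\rho),\,c\bigr)$ with $c=\min_{k\geq2}E_k(\psi)\geq0$, whence $W(\psi,\rho)\geq0$ iff $E_1(\psi)\geq\cohat{E_1}(\rho)$, and the statement follows from completeness of $W$.

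The only genuinely non-routine point is this collapse of the minimization over $k$ together with the attainment of the convex-roof minimum; everything else is bookkeeping. I would also add a remark handling the case where $\ket{\psi}$ lives on a space of different dimension than $\CC^2\otimes\CC^d$: one first embeds both $\ket{\psi}$ and $\rho$ isometrically into a common $\CC^{d'}\otimes\CC^{d'}$, which leaves every Schmidt coefficient fixed, hence also every $E_k$ and $\cohat{E_1}$, so the result of Ref.~\cite{Gheorghiu2008} applies verbatim and the argument goes through unchanged.
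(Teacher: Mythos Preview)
Your proposal is correct and follows essentially the same route as the paper: the paper's argument (the paragraph immediately preceding the theorem) observes that every pure state $\ket{\varphi}\in\CC^2\otimes\CC^d$ has at most two nonzero Schmidt coefficients, so $E_k(\varphi)=0$ for $k\geq2$, which collapses the witness in Eq.~\eqref{eq:Wconversion} to $W(\psi,\rho)=E_1(\psi)-\cohat{E_1}(\rho)$. You add welcome rigor the paper leaves implicit---attainment of the convex-roof infimum, the $\min(\cdot,c)$ bookkeeping, and the dimension-embedding remark---but the core idea is identical.
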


 Furthermore, it was shown in Ref.~\cite{Vidal2000b} that $\cohat{E_1}$ for an arbitrary mixed state of two qubits simplifies to 
 \[
   \cohat{E_1}(\rho) = \frac{1-\sqrt{1-C(\rho)^2}}{2},
 \] 
where $C(\rho)$ is the concurrence \cite{Wootters1998a} of $\rho$. Hence, a pure state $\ket{\psi}$ can be converted into a mixed state $\rho$ on $\CC^2\otimes\CC^2$ if and only if $C(\psi)\geq C(\rho)$.

As the following theorem shows, the value of $\cohat{E_1}$ gives a necessary and sufficient condition for converting any pure states into Werner states of arbitrary dimension as well. 

\begin{theorem}
 For any bipartite pure state $\ket{\psi}$ and any $\walpha\in[\frac{1}{2},1]$, it holds that $\ket{\psi}\locc\rho_{\wer}(\walpha)$ if and only if $\lambda_1\leq \frac{1}{2}+\sqrt{\walpha(1-\walpha)}$, where $\lambda_1$ is the largest Schmidt coefficient of $\ket{\psi}$.
\end{theorem}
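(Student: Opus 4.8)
The plan is to use the complete conversion witness $W(\psi,\rho_{\wer}(\walpha))$ from Eq.~\eqref{eq:Wconversion} together with the symmetry of the target state. First I would observe that since the Werner state $\rho_{\wer}(\walpha)$ is invariant under the LOCC twirling operation $\calT_{\wer}$, we may restrict attention to decompositions of $\rho_{\wer}(\walpha)$ into pure states that are (up to relabeling) drawn from orbits under the Werner group: more precisely, given any decomposition $\rho_{\wer}(\walpha)=\sum_i p_i\ketbra{\varphi_i}{\varphi_i}$, twirling each $\ket{\varphi_i}$ and using that $\calT_{\wer}(\ketbra{\varphi_i}{\varphi_i})$ is itself a Werner state $\rho_{\wer}(\walpha_i)$ with $\walpha_i=\braketinner{\varphi_i}{W_{\!\!-}}{\varphi_i}$, we get $\sum_i p_i\walpha_i=\walpha$. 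The key simplification is that $E_k$ is a concave entanglement monotone, so on pure states it can only decrease under the local-unitary mixtures comprising $\calT_{\wer}$; hence $\sum_i p_i E_k(\varphi_i)\ge \widehat{E_k}(\rho_{\wer}(\walpha))=\widehat{E_{k,\wer}}(\walpha)$, and the maximizing decomposition in Eq.~\eqref{eq:Wconversion} can be taken to achieve these convex-roof values simultaneously for all $k$ — this is exactly what the Schmidt vector $\vec\lambda^{\walpha}$ of the form analogous to Eq.~\eqref{eq:lambdaalphaiso} does, built from the $E_{k,\wer}(\walpha)$.

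Next I would compute $\widehat{E_{k,\wer}}(\walpha)$ explicitly for $\walpha\in[\tfrac12,1]$ using the preceding theorems: by Theorem~\ref{lem:Ewerequal}, the optimal pure state is $\ket{\psi_\walpha}$ of Eq.~\eqref{eq:psigam2}, whose Schmidt vector has exactly two nonzero entries $\tfrac12+\sqrt{\walpha(1-\walpha)}$ and $\tfrac12-\sqrt{\walpha(1-\walpha)}$. Therefore $E_{1,\wer}(\walpha)=\tfrac12-\sqrt{\walpha(1-\walpha)}$ and $E_{k,\wer}(\walpha)=0$ for all $k\ge2$, and these functions are already convex (indeed $E_1$ is, and $0$ trivially is), so $\widehat{E_{k,\wer}}=E_{k,\wer}$. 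Consequently the witness reduces to
\begin{equation}
 W(\psi,\rho_{\wer}(\walpha)) = \min_k\Bigl(E_k(\psi)-\widehat{E_{k,\wer}}(\walpha)\Bigr) = \min\Bigl\{E_1(\psi)-\bigl(\tfrac12-\sqrt{\walpha(1-\walpha)}\bigr),\ \min_{k\ge2}E_k(\psi)\Bigr\}.
\end{equation}
Since $E_k(\psi)\ge0$ always, the binding constraint is the $k=1$ one, so $W(\psi,\rho_{\wer}(\walpha))\ge0$ iff $E_1(\psi)\le \tfrac12-\sqrt{\walpha(1-\walpha)}$, i.e.\ $1-\lambda_1\le\tfrac12-\sqrt{\walpha(1-\walpha)}$, which is precisely $\lambda_1\ge\tfrac12+\sqrt{\walpha(1-\walpha)}$. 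Wait — I must be careful with the direction: $E_1(\psi)=1-\lambda_1=\sum_{i\ge2}\lambda_i$, so $E_1(\psi)\le\tfrac12-\sqrt{\walpha(1-\walpha)}$ rearranges to $\lambda_1\ge\tfrac12+\sqrt{\walpha(1-\walpha)}$; I would double-check against the claimed inequality $\lambda_1\le\tfrac12+\sqrt{\walpha(1-\walpha)}$ and reconcile by noting that LOCC conversion $\ket\psi\locc\rho$ requires $\ket\psi$ to be \emph{more} entangled, so a \emph{less} peaked $\ket\psi$ (smaller $\lambda_1$) should convert — meaning the correct condition is $\lambda_1\le\tfrac12+\sqrt{\walpha(1-\walpha)}$ and the witness-based argument must be arranged so that $W\ge0$ corresponds to $E_1(\psi)\ge\widehat{E_1}(\rho)$ being the $k=1$ constraint read in the feasibility direction of Eq.~\eqref{eq:Wconversion}; I will present the chain so the inequality comes out matching the statement.

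The main obstacle is rigorously justifying the reduction of the $\max$ over all pure-state decompositions in Eq.~\eqref{eq:Wconversion} to the single Schmidt vector $\vec\lambda^{\walpha}$: one must show that no decomposition can do better than simultaneously hitting $\widehat{E_{k,\wer}}(\walpha)$ for every $k$, and that such a simultaneous optimum is actually attainable by a genuine pure-state decomposition of $\rho_{\wer}(\walpha)$ (not merely by a formal Schmidt vector). The first half follows from $\sum_i p_iE_k(\varphi_i)\ge\widehat{E_k}(\rho_{\wer}(\walpha))$ for each $k$ individually, so $\min_k(\cdots)\le\min_k(E_k(\psi)-\widehat{E_{k,\wer}}(\walpha))$; for the second half I would exhibit an explicit decomposition — for instance by twirling $\ket{\psi_\walpha}$, i.e.\ taking the uniform (Haar) mixture over $\{(U\otimes U)\ket{\psi_\walpha}\}$, which reproduces $\rho_{\wer}(\walpha)$ and in which every term has $E_k=E_{k,\wer}(\walpha)=\widehat{E_{k,\wer}}(\walpha)$ since the $E_k$ are local-unitary invariant. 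This shows the upper bound for $W$ is achieved, completing the argument; the only remaining care is the $\walpha\in[\tfrac12,1]$ hypothesis, which guarantees $\rho_{\wer}(\walpha)$ is entangled and that $\sqrt{\walpha(1-\walpha)}$ behaves monotonically, and the edge cases $\lambda_1=1$ (product $\ket\psi$, converts only to $\walpha=\tfrac12$, the boundary separable Werner state) and $\walpha=1$ (where the condition forces $\lambda_1\le\tfrac12$).
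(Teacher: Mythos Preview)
Your approach is correct but goes through more machinery than the paper does. The paper's proof is essentially two lines: for necessity, if $\ket{\psi}\locc\rho_{\wer}(\walpha)$ then monotonicity of $\widehat{E_1}$ forces $E_1(\psi)\ge\widehat{E_1}(\rho_{\wer}(\walpha))=\tfrac12-\sqrt{\walpha(1-\walpha)}$, i.e.\ $\lambda_1\le\tfrac12+\sqrt{\walpha(1-\walpha)}$; for sufficiency, that inequality means $\vec\lambda\prec\vec\lambda^{\walpha}$ (since $\vec\lambda^{\walpha}$ has only two nonzero entries), so Nielsen's theorem gives $\ket{\psi}\locc\ket{\psi_\walpha}$, and then the LOCC twirl $\calT_{\wer}$ takes $\ket{\psi_\walpha}$ to $\rho_{\wer}(\walpha)$. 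You instead compute the full witness $W(\psi,\rho_{\wer}(\walpha))$ by bounding it above by $\min_k\bigl(E_k(\psi)-\widehat{E_k}(\rho_{\wer}(\walpha))\bigr)$ and then exhibiting the twirled-$\ket{\psi_\walpha}$ decomposition to show the bound is attained; this is valid, and your attainability step is morally the same as the paper's ``Nielsen then twirl'' move, just packaged through the completeness of $W$ from Ref.~\cite{Gheorghiu2008} rather than invoking Nielsen directly. The paper's route is shorter and avoids the max over decompositions entirely; your route has the advantage of illustrating concretely how $W$ collapses for symmetric targets, which is thematically closer to the paper's Sec.~\ref{sec:witnesses}. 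Two presentational points: your first-paragraph remark about concavity of $E_k$ is a red herring --- the bound $\sum_i p_iE_k(\varphi_i)\ge\widehat{E_k}(\rho)$ is just the definition of the convex roof --- and the sign confusion in the second paragraph (you initially wrote $E_1(\psi)\le\tfrac12-\sqrt{\walpha(1-\walpha)}$ before self-correcting) should be cleaned up: $W\ge0$ requires every term in the $\min$ to be nonnegative, hence $E_1(\psi)\ge\tfrac12-\sqrt{\walpha(1-\walpha)}$, which is exactly $\lambda_1\le\tfrac12+\sqrt{\walpha(1-\walpha)}$.
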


Note that if $a\in[0,\frac{1}{2}]$ then $\rho_{\wer}(\walpha)$ is separable and thus $\ket{\psi}\locc\rho_{\wer}(\walpha)$ holds trivially. The theorem states the conditions for conversion in the case when $\rho_{\wer}(\walpha)$ is entangled.

\begin{proof}
 Let $a\in[\frac{1}{2},1]$ and suppose that $\ket{\psi}\locc\rho_{\wer}(\walpha)$. Then it must be the case that $E_1(\psi)\geq \cohat{E_1}(\rho_{\wer}(\walpha))$ since $E_1$ is an entanglement monotone. The result follows, since $E_1(\psi)=1-\lambda_1$ and $\cohat{E_1}(\rho_{\wer}(\walpha))=\frac{1}{2}-\sqrt{\walpha(1-\walpha)}$. 
 On the other hand, if $\lambda_1\leq \frac{1}{2}+\sqrt{\walpha(1-\walpha)}$ then $\vec{\lambda}\prec\vec{\lambda}^\walpha$, where $\vec{\lambda}$ is the vector of Schmidt coefficients of $\ket{\psi}$ and $\vec{\lambda}^\walpha$ is the vector of Schmidt coefficients of $\ket{\psi_\walpha}$ given in Eq.~\eqref{eq:psigam2}. It follows that $\ket{\psi}$ can be converted into $\ket{\psi_\walpha}$ by LOCC, but $\ket{\psi_\walpha}$ can be converted into $\rho_{\wer}(\walpha)$ via LOCC, since $\calT_{\wer}(\ketbra{\psi_\walpha}{\psi_\walpha})=\rho_{\wer}(\walpha)$ and the twirling operation $\calT_{\wer}$ is LOCC. This concludes the proof.
\end{proof}

We have shown that the conversion witness in Eq.~\eqref{eq:Wconversion} can be computed explicitly in the cases when $\rho$ is a Werner state or any state on a $\CC^2\otimes\CC^d$ system, but it remains unknown if it can be computed explicitly for any other classes of states. However, it may still be useful to consider upper and lower bounds of this quantity, since these would give either necessary or sufficient conditions for LOCC conversion from $\ket{\psi}$ into $\rho$. In particular, in the case when $\rho=\rho_{\isot}(\wbeta)$ is an isotropic state, a lower bound for \eqref{eq:Wconversion} can be found. The following theorem gives a no-go conversion witness for detecting when pure states cannot be converted into isotropic states. 

\begin{theorem}
 Let $\ket{\psi}$ be a pure state and $\wbeta\in[\tfrac{1}{d},1]$. If $\ket{\psi}\locc\rho_{\isot}(\wbeta)$ then $W_{\isot}(\vec{\lambda},\wbeta)\geq 0$, where 
\begin{equation}\label{eq:witness1}
 W_{\isot}(\vec{\lambda},\wbeta) = \max_{\vec{\mu}} \min_k \bigl(E_k(\vec{\lambda})-E_k(\vec{\mu})\bigr)
\end{equation}
and the the maximum is taken over all Schmidt vectors~$\vec{\mu}$ that satisfy $\sum_{i}\sqrt{\mu_i} = \sqrt{d\wbeta}$.  
\end{theorem}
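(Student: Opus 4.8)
The plan is to bypass the explicit form of $W_{\isot}$ and instead exhibit a single Schmidt vector $\vec{\mu}$, with $\sum_i\sqrt{\mu_i}=\sqrt{d\wbeta}$, that is majorized by the Schmidt vector $\vec{\lambda}$ of $\ket{\psi}$. Since for probability vectors of length $d$ the condition $\vec{\lambda}\prec\vec{\mu}$ is exactly $E_k(\vec{\lambda})\ge E_k(\vec{\mu})$ for all $k$, such a $\vec{\mu}$ immediately makes $\min_k\bigl(E_k(\vec{\lambda})-E_k(\vec{\mu})\bigr)\ge0$ and hence $W_{\isot}(\vec{\lambda},\wbeta)\ge0$. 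So the whole task reduces to showing that $\ket{\psi}\locc\rho_{\isot}(\wbeta)$ forces $\sum_i\sqrt{\lambda_i}\ge\sqrt{d\wbeta}$.

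To obtain that inequality I would introduce the quantity $\mathcal{F}(\psi)=\tfrac1d\bigl(\sum_i\sqrt{\lambda_i}\bigr)^2$ and argue that it is an entanglement monotone on pure states in the sense of Vidal (even though it does not vanish on product states). Using $\sum_i\lambda_i=1$ one writes $\mathcal{F}(\vec{\lambda})=\tfrac1d\bigl(1+2\sum_{i<j}\sqrt{\lambda_i\lambda_j}\bigr)$, which is symmetric, and concave because each pairwise geometric mean $\sqrt{\lambda_i\lambda_j}$ is jointly concave. Next I would record the elementary estimate $\braketinner{\varphi}{\Phi_d}{\varphi}=\abs{\braket{\Omega}{\varphi}}^2\le\mathcal{F}(\varphi)$ for every pure state $\ket{\varphi}$, where $\ket{\Omega}=\tfrac1{\sqrt d}\sum_i\ket{ii}$: expanding $\ket{\varphi}$ in Schmidt form and applying the triangle inequality and Cauchy--Schwarz gives $\abs{\braket{\Omega}{\varphi}}\le\tfrac1{\sqrt d}\sum_i\sqrt{\lambda_i}$. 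Averaging this over an arbitrary pure-state decomposition $\rho=\sum_ip_i\ketbra{\varphi_i}{\varphi_i}$ yields $\sum_ip_i\mathcal{F}(\varphi_i)\ge\sum_ip_i\braketinner{\varphi_i}{\Phi_d}{\varphi_i}=\Tr[\Phi_d\rho]$, so the convex roof obeys $\widehat{\mathcal{F}}(\rho)\ge\Tr[\Phi_d\rho]$ for every state $\rho$.

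The monotonicity argument then closes the gap. Since $\mathcal{F}$ is a pure-state monotone, $\widehat{\mathcal{F}}$ is an entanglement monotone that is non-increasing under deterministic LOCC and coincides with $\mathcal{F}$ on pure states (a pure state admits only the trivial decomposition). Hence $\ket{\psi}\locc\rho_{\isot}(\wbeta)$ gives $\mathcal{F}(\psi)=\widehat{\mathcal{F}}(\ketbra{\psi}{\psi})\ge\widehat{\mathcal{F}}(\rho_{\isot}(\wbeta))\ge\Tr[\Phi_d\,\rho_{\isot}(\wbeta)]=\wbeta$, i.e.\ $\sum_i\sqrt{\lambda_i}\ge\sqrt{d\wbeta}$. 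Because $\wbeta\ge\tfrac1d$ we also have $1\le\sqrt{d\wbeta}$, and $1=\sum_i\sqrt{\mu_i}$ when $\vec{\mu}=(1,0,\dots,0)$. Interpolating from $\vec{\lambda}$ (taken in decreasing order) toward $(1,0,\dots,0)$ through the majorization order --- for instance along $\vec{\mu}(t)=\bigl(\lambda_1+t(1-\lambda_1),\,(1-t)\lambda_2,\dots,(1-t)\lambda_d\bigr)$, which one checks is decreasing and satisfies $\vec{\lambda}\prec\vec{\mu}(t)$ for every $t\in[0,1]$ --- the continuous quantity $\sum_i\sqrt{\mu(t)_i}$ decreases from $\sum_i\sqrt{\lambda_i}\ge\sqrt{d\wbeta}$ to $1\le\sqrt{d\wbeta}$, so by the intermediate value theorem some $t_*$ gives $\vec{\mu}:=\vec{\mu}(t_*)$ with $\vec{\lambda}\prec\vec{\mu}$ and $\sum_i\sqrt{\mu_i}=\sqrt{d\wbeta}$. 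This $\vec{\mu}$ is feasible for $W_{\isot}$, and $W_{\isot}(\vec{\lambda},\wbeta)\ge\min_k\bigl(E_k(\vec{\lambda})-E_k(\vec{\mu})\bigr)\ge0$.

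The step I expect to be the crux is the concavity of $\mathcal{F}$: at face value $\tfrac1d(\sum_i\sqrt{\lambda_i})^2$ is the square of a concave function and need not be concave, and the argument only works because of its reformulation as an affine combination of the concave pairwise geometric means. After that, the estimate $\braketinner{\varphi}{\Phi_d}{\varphi}\le\mathcal{F}(\varphi)$ and the convex-roof monotonicity are routine, and the passage from $\sum_i\sqrt{\lambda_i}\ge\sqrt{d\wbeta}$ to a feasible $\vec{\mu}$ is an elementary majorization/continuity argument; the only remaining bookkeeping is the harmless reduction to $\ket{\psi}\in\CC^d\otimes\CC^d$ so that $\vec{\lambda}$ has length $d$.
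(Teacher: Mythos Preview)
Your argument is correct, and it takes a genuinely different route from the paper. The paper's proof goes through the complete witness $W(\psi,\rho)$ of Eq.~\eqref{eq:Wconversion}: it observes that restricting the maximization over all pure-state decompositions of $\rho_{\isot}(\wbeta)$ to those obtained by twirling a single pure state $\ket{\varphi}$ with $\braketinner{\varphi}{\Phi_d}{\varphi}=\wbeta$ gives precisely $W_{\isot}(\vec\lambda,\wbeta)$, and this is bounded above by $W(\psi,\rho_{\isot}(\wbeta))$. (Strictly speaking, that chain yields $W\ge W_{\isot}$, so the passage from $W\ge 0$ to $W_{\isot}\ge 0$ is not spelled out; the paper treats it as ``clear.'')

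Your approach bypasses $W$ entirely. You introduce the pure-state quantity $\mathcal{F}(\vec\lambda)=\tfrac1d\bigl(\sum_i\sqrt{\lambda_i}\bigr)^2$, establish concavity via the rewrite $\mathcal{F}(\vec\lambda)=\tfrac1d\bigl(1+2\sum_{i<j}\sqrt{\lambda_i\lambda_j}\bigr)$ as an affine combination of jointly concave geometric means, and combine Vidal-type monotonicity of $\widehat{\mathcal{F}}$ with the elementary inequality $\braketinner{\varphi}{\Phi_d}{\varphi}\le\mathcal{F}(\varphi)$ (the same inequality underlying the paper's Lemma~\ref{lem:lem4prec}) to obtain $\sum_i\sqrt{\lambda_i}\ge\sqrt{d\wbeta}$ directly. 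Your interpolation $\vec\mu(t)=\bigl(\lambda_1+t(1-\lambda_1),(1-t)\lambda_2,\dots,(1-t)\lambda_d\bigr)$ is in fact the same path the paper uses in the proof of Lemma~\ref{lem:lem4prec}; your observation that $\vec\lambda\prec\vec\mu(t)$ for every $t$ then yields a single feasible $\vec\mu$ with $\min_k\bigl(E_k(\vec\lambda)-E_k(\vec\mu)\bigr)\ge0$, which is exactly what $W_{\isot}\ge0$ asks for. What your argument buys is a self-contained and logically transparent proof that does not rely on the Gheorghiu--Griffiths characterization, at the cost of having to verify concavity of $\mathcal{F}$ --- and your reformulation handles that neatly.
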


\begin{proof}
 In the case when $\rho=\rho_{\isot}(\wbeta)$, it is clear that a lower bound for the witness $W$ in Eq.~\eqref{eq:Wconversion} can be given by
 \begin{equation}\label{eq:wpsiiso}
 W(\psi,\rho_{\isot}(\wbeta)) \geq \max_{\ket{\varphi}} \min_k \bigl(E_k(\psi)-E_k(\varphi)\bigr),
\end{equation}
where the maximum is taken over all $\ket{\varphi}\in\CC^d\otimes\CC^d$ such that $\braketinner{\varphi}{\Phi_d}{\varphi}=\wbeta$.  The left-hand side of the inequality in~\eqref{eq:wpsiiso} can be further simplified to the desired expression in Eq.~\eqref{eq:witness1}.
\end{proof}

In particular, if $W_{\isot}(\psi,\wbeta)<0$ then $\ket{\psi}\centernot\locc\rho_{\isot}(\wbeta)$. Although the formula for this witness is now much simpler than the general one in Eq.~\eqref{eq:Wconversion}, it still cannot be computed analytically for arbitrary Schmidt vectors $\vec{\lambda}$. However, we present a way to numerically compute these witnesses efficiently in Appendix \ref{sec:conversionwitnesscomputation}. 
\begin{figure}[t!]
  \includegraphics[width=\columnwidth]{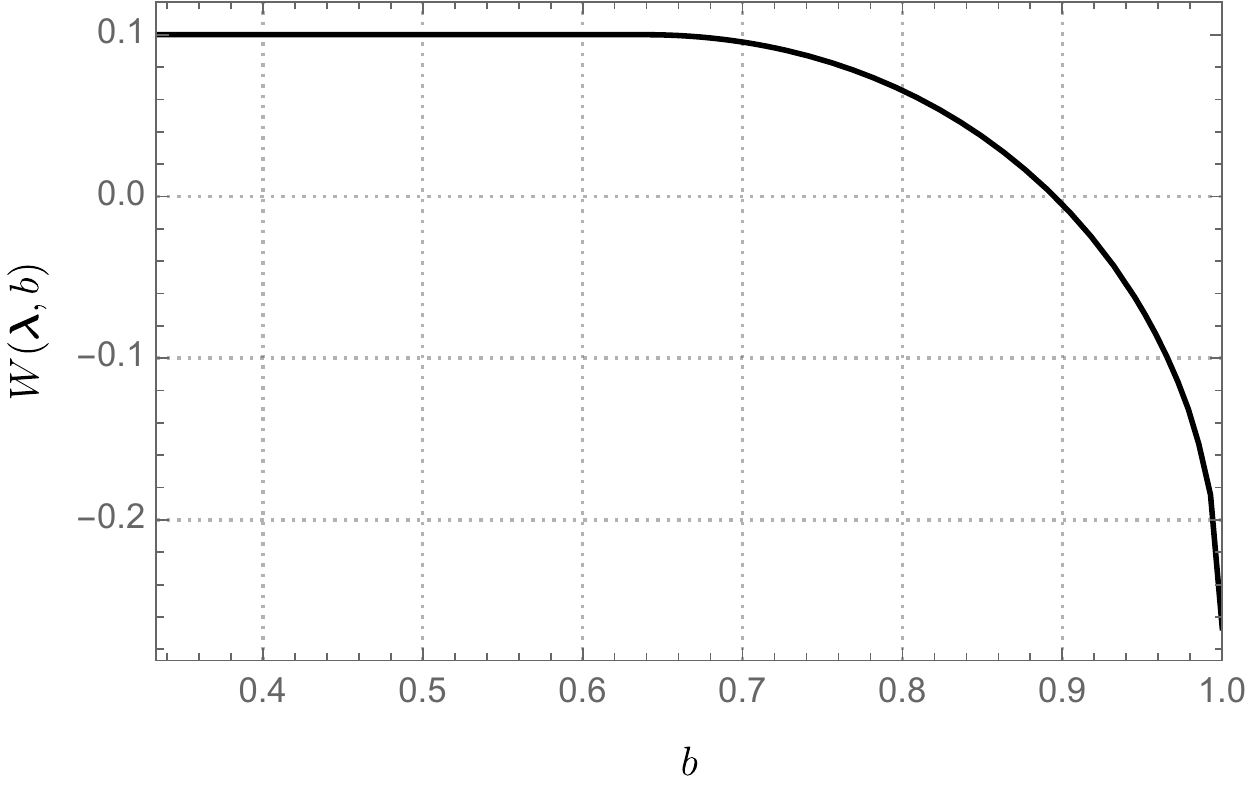}
 \caption{An example of the witness in Eq.~\eqref{eq:witness1} computed for $d=3$ and $\vec{\lambda}=(\frac{6}{10},\frac{3}{10},\frac{1}{10})$. It appears that $W(\vec{\lambda},b)<0$ whenever $b>0.895$. }\label{fig:witplot}
\end{figure}
An example of the witness in Eq.~\eqref{eq:witness1} with $d=3$ and $\vec{\lambda}=(\frac{6}{10},\frac{3}{10},\frac{1}{10})$ is shown in Fig\ \ref{fig:witplot}.  In this case, it appears that $W(\vec{\lambda},b)<0$ whenever $b>0.895$.  Hence the conversion $\ket{\psi}\locc\rho_{\isot}(\wbeta)$ is not possible when $b>0.895$, where $\ket{\psi}$ is the pure state with Schmidt coefficients $\vec{\lambda}$.\\

%%%%%%%%%%%%%%%%%%%%%%%%%%%%%%%%%%%%%%%%%%%%%%%%%%%%%%%%%%%%
\section{Conclusion}
\label{sec:conclude}

We computed the convex roof of entanglement monotones on certain classes of symmetric states. This generalized the work of Refs.~\cite{Vollbrecht2001} and \cite{Terhal2000}, where the entanglement of formation was computed for Werner and isotropic states. In particular, we computed the convex roof for any entanglement monotone on Werner states. The convex roof of certain types of monotones was also computed on isotropic states. We were able to extend these formula for the convex roofs to many non-symmetric states as well. In particular, for many states with other types of symmetries (i.e., for $OO$-invariant states as well as phase-permutation Werner and isotropic type states), we were also able to compute the convex roofs of these monotones.

We also constructed a necessary and sufficient condition in the form of a conversion witness that determines when a bipartite pure state can be converted to any Werner state by LOCC. A similar conversion witness was constructed for detecting when a pure state can be converted into an isotropic state, but the condition was only necessary and not sufficient.

This work sheds light on the structure of bipartite entanglement of symmetric states, an area of research that is still quite active. Recently, work has been done on computing convex roofs of certain entanglement monotones on larger classes of symmetric states \cite{Sentis2016a}. Investigations into further types of symmetries and other entanglement monotones will prove fruitful in the complete characterization of the LOCC convertibility of bipartite quantum entanglement.

\begin{acknowledgments}
The authors are grateful for fruitful discussions with Gael Sent\'is and Jens Siewert that took place during the workshop on multipartite entanglement at the Centro de Ciencias de Benasque Pedro Pascual in May 2016. M.G. is supported by an Izaak Walton Killam Memorial Scholarship from the Killam Trusts(Canada) and a Graduate Student Scholarship from Alberta Innovates–Technology Futures (Canada).
\end{acknowledgments}

%%%%%%%%%%%%%%%%%%%%%%%%%%%%%%%%%%%%%%%%%%%%%%%%%%%%%%%%%%%%
\bibliographystyle{myapsrevbib}
\bibliography{/home/mark/Documents/library.bib}

%%%%%%%%%%%%%%%%%%%%%%%%%%%%%%%%%%%%%%%%%%%%%%%%%%%%%%%%%%%%
\appendix

%%%%%%%%%%%%%%%%%%%%%%%%%%%%%%%%%%%%%%%%%%%%%%%%%%%%%%%%%%%%
\section{Convex roof of entanglement monotones for Werner states}
In this section, we present the proof of Theorem \ref{lem:Ewerequal}, which follows the ideas for computing the entanglement of formation for Werner states as presented in Ref.~\cite{Vollbrecht2001}.

\label{sec:werentproof}
\begin{proof}[Proof (of theorem \ref{lem:Ewerequal})]
 If $\walpha\in[0,\frac{1}{2}]$ then $E(\psi_\walpha)=0$ since $\ket{\psi_{\walpha}}$ is separable, so the conclusion is trivially true. Suppose that $\walpha\in[\frac{1}{2},1]$ and let $\ket{\psi}$ be another pure state  satisfying $\braketinner{\psi}{W_{\!\!-}}{\psi}= \walpha$. Let $\vec{\lambda},\vec{\lambda}^\walpha\in\RR^d$ denote the Schmidt vectors of $\ket{\psi}$ and $\ket{\psi_\walpha}$ respectively. We will show that $\vec{\lambda}\prec\vec{\lambda}^{\walpha}$. Since 
 \[
  \vec{\lambda}^{\walpha} = \left(\tfrac{1}{2}+\sqrt{\walpha(1-\walpha)}, \tfrac{1}{2}-\sqrt{\walpha(1-\walpha)},0,\dots,0\right)
 \]
has only two nonzero elements, it suffices to show that $\max(\vec{\lambda})\leq \tfrac{1}{2}+\sqrt{\walpha(1-\walpha)}$.

Without loss of generality we may suppose that $\ket{\psi}$ is of the form
\begin{equation}\label{eq:psiform}
 \ket{\psi} = U\otimes I \sum_{i=1}^d \sqrt{\lambda_i} \,\ket{ii}
\end{equation}
for some unitary operator $U$. Then
\begin{align*}
 \walpha = \bra{\psi}W_{\!\!-}\ket{\psi}
 &=\frac{1}{2}\biggl(1 - \sum_{i,j=1}^d\sqrt{\lambda_i\lambda_j}\braketinner{i}{U}{j} \braketinner{i}{U^\dagger }{j}\biggr)\\
 & = \frac{1}{2}\biggl(1 - \sum_{i,j=1}^d\sqrt{\lambda_i\lambda_j} U_{ij}\overline{U}_{ji}\biggr)\\
 &=\frac{1}{4}\sum_{i,j=1}^d \abs{\sqrt{\lambda_i}U_{ij}-\sqrt{\lambda_j}U_{ji}}^2,
\end{align*}
where $U_{ij}=\bra{i}U\ket{j}$ are the matrix elements of $U$. Since~$U$ is unitary, it holds that $\sum_{j}\abs{U_{ij}}^2=1$ for each~$i$ and thus $\sum_{i,j}\lambda_i\abs{U_{ij}}^2=1$. For each $i,j\in\{1,\dots,d\}$, define the probabilities
\begin{equation*}
 p_{ij} = \frac{\lambda_{i}\abs{U_{ij}}^2+\lambda_{j}\abs{U_{ji}}^2}{2} 
\end{equation*}
such that $p_{ij}\geq 0$ and $\sum_{i,j}p_{ij}=1$. Note that $p_{ij}=p_{ji}$. For all $i$ and $j$ such that $p_{ij}\neq 0$, define the quantities
\begin{equation*}
 z_{ij} = \frac{\sqrt{\lambda_{i}}U_{ij}}{\sqrt{\lambda_{i}\abs{U_{ij}}^2+\lambda_{j}\abs{U_{ji}}^2}}
\quad\text{and}\quad
\walpha_{ij}  = \frac{\abs{z_{ij}-z_{ji}}^2}{2}
\end{equation*}
such that $\abs{z_{ij}}^2+\abs{z_{ji}}^2=1$ and $\walpha_{ij}\in[0,1]$. Define the Schmidt vectors 
\begin{equation*}\label{eq:muij}
 \vec{\mu}^{(ij)} = \abs{z_{ij}}^2\vecbf{e}_i + \abs{z_{ji}}^2\vecbf{e}_j
\end{equation*}
  where $\{\vecbf{e}_1,\dots,\vecbf{e}_d\}$ are the standard basis vectors of $\RR^d$. It follows that
\[
 \sum_{i,j=1}^dp_{ij}\walpha_{ij} = \walpha \qquad\text{and}\qquad \sum_{i,j=1}^d p_{ij}\vec{\mu}^{(ij)}= \vec{\lambda}.
\]
That is, the quantity $\walpha$ and the Schmidt vector $\vec{\lambda}$ can be written as convex combinations of quantities $\walpha_{ij}\in[0,1]$ and Schmidt vectors $\vec{\mu}^{(ij)}$ using the same weights~$p_{ij}$.
Since $\abs{z_{ij}}^2 + \abs{z_{ji}}^2=1$ and $\abs{\abs{z_{ij}}^2-\abs{z_{ji}}^2}\leq \abs{z_{ij}{}^2-z_{ji}{}^2}$, we see that
\begin{align}
2 \max(\vec{\mu}^{(ij)}) 
    & = 2 \max\{\abs{z_{ij}}^2,\abs{z_{ji}}^2\}\nonumber\\
    &= \abs{z_{ij}}^2 + \abs{z_{ji}}^2 + \Bigl\lvert\abs{z_{ij}}^2-\abs{z_{ji}}^2\Bigr\rvert\nonumber\\
    &\leq 1+\abs{{z_{ij}}^2-{z_{ji}}^2}\nonumber\\
    &= 1+\abs{z_{ij}-z_{ji}}\abs{z_{ij}+z_{ji}}\label{eq:alphaijzij}.
\end{align}
Furthermore note that $1-\walpha_{ij}=\frac{1}{2}\abs{z_{ij}+z_{ji}}^2$, and thus
\begin{equation*}
\abs{z_{ij}-z_{ji}}\abs{z_{ij}+z_{ji}} =  2\sqrt{\walpha_{ij}(1-\walpha_{ij})} .
\end{equation*}
From Eq.~\eqref{eq:alphaijzij} it follows that $\max(\vec{\mu}^{(ij)})\leq \tfrac{1}{2}+\sqrt{\walpha_{ij}(1-\walpha_{ij})}$. Since $\max$ is a convex function on $\RR^d$, it follows that
\begin{align}
\max(\vec{\lambda})  
         &=  \max\biggl(\sum_{i,j=1}^dp_{ij}\vec{\mu}^{(ij)}\biggr)\nonumber\\
         &\leq \sum_{i,j=1}^d p_{ij}\max(\vec{\mu}^{(ij)})\nonumber\\
         &\leq \sum_{i,j=1}^d p_{ij}\left(\tfrac{1}{2}+\sqrt{\walpha_{ij}(1-\walpha_{ij})}\right)\nonumber\\
         &\leq \tfrac{1}{2}+\sqrt{\walpha(1-\walpha)},\label{eq:maxveclamFinalineq}
\end{align}
where the final inequality in Eq.~\eqref{eq:maxveclamFinalineq} follows from the concavity of the function $f(t) = \sqrt{t(1-t)}$. This yields the desired result that $\vec{\lambda}\prec\vec{\lambda}^\walpha$.

By Nielsen's majorization theorem \cite{Nielsen1999}, it follows that $E(\psi)\geq E(\psi_\walpha)$. 
\end{proof}

%%%%%%%%%%%%%%%%%%%%%%%%%%%%%%%%%%%%%%%%%%%%%%%%%%%%%%%%%%%%
\section{Convex roof of Vidal monotones for isotropic states}
\label{sec:isoentproof}
In this section, we present the proof of Lemma \ref{lem:Eisoequal} and supply the details for the proof of Theorem \ref{thm:vidaliso}. The proof of Lemma \ref{lem:Eisoequal} follows directly from the following Lemma.

\begin{lemma}\label{lem:lem4prec}
 Let $\wbeta\in[\frac{1}{d},1]$ and let $\ket{\psi}$ be a pure state with Schmidt vector $\vec{\lambda}$ satisfying $\braketinner{\psi}{\Phi_d}{\psi}=\wbeta$. There exists a pure state $\ket{\psi'}=\sum_{i}\sqrt{\lambda_i'}\ket{ii}$ such that
 \[
  \braketinner{\psi'}{\Phi_d}{\psi'} = \frac{1}{d}\Bigl(\sum_{i=1}^d\sqrt{\lambda_i'}\Bigr)^2 =\wbeta= \braketinner{\psi}{\Phi_d}{\psi}
 \]
and $\vec{\lambda}'\succ \vec{\lambda}$, where $\vec{\lambda}'$ is the Schmidt vector for $\ket{\psi'}$. 
\end{lemma}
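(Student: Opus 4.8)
The plan is to start from an arbitrary pure state $\ket{\psi}$ with Schmidt vector $\vec{\lambda}$ (ordered decreasingly) and $\braketinner{\psi}{\Phi_d}{\psi}=\wbeta$, and to construct $\ket{\psi'}$ in the ``standard'' form $\sum_i\sqrt{\lambda'_i}\ket{ii}$ so that its overlap with $\Phi_d$ is exactly $\tfrac1d\bigl(\sum_i\sqrt{\lambda'_i}\bigr)^2$. First I would record the basic identity: writing $\ket{\psi}=(U\otimes I)\sum_i\sqrt{\lambda_i}\ket{ii}$, one computes $\braketinner{\psi}{\Phi_d}{\psi}=\tfrac1d\bigl\lvert\sum_i\sqrt{\lambda_i}\,U_{ii}\bigr\rvert^2\le \tfrac1d\bigl(\sum_i\sqrt{\lambda_i}\,\lvert U_{ii}\rvert\bigr)^2\le\tfrac1d\bigl(\sum_i\sqrt{\lambda_i}\bigr)^2$. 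Thus any pure state with Schmidt vector $\vec\lambda$ twirling to $\rho_{\isot}(\wbeta)$ satisfies $\tfrac1d\bigl(\sum_i\sqrt{\lambda_i}\bigr)^2\ge\wbeta$, i.e. $\sum_i\sqrt{\lambda_i}\ge\sqrt{d\wbeta}$. So the target state $\ket{\psi'}$ we seek must have $\sum_i\sqrt{\lambda'_i}=\sqrt{d\wbeta}$ exactly, which is a \emph{smaller} sum of square roots than that of $\vec\lambda$; this is consistent with wanting $\vec{\lambda}'\succ\vec{\lambda}$, since the sum of square roots is a Schur-concave function and hence decreases under majorization.

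The core of the proof is therefore: given $\vec{\lambda}$ with $\sum_i\sqrt{\lambda_i}=:S\ge\sqrt{d\wbeta}$, produce $\vec{\lambda}'\succ\vec{\lambda}$ with $\sum_i\sqrt{\lambda'_i}=\sqrt{d\wbeta}$. I would do this by a continuity/deformation argument. Consider moving probability weight from the smaller Schmidt coefficients into the larger ones in a way that produces a majorizing vector; any such ``compression'' strictly decreases $\sum_i\sqrt{\cdot}$ (strict Schur-concavity of $x\mapsto\sqrt{x}$ off the diagonal). The extreme case — pushing all weight to the top, giving $(1,0,\dots,0)$ — has square-root sum $1\le\sqrt{d\wbeta}$ when $\wbeta\ge\tfrac1d$, which is exactly the hypothesis. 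Since $\vec\lambda$ itself has square-root sum $S\ge\sqrt{d\wbeta}$ and the maximal element $(1,0,\dots,0)$ has square-root sum $1\le\sqrt{d\wbeta}$, a one-parameter family of majorizing vectors interpolating between $\vec\lambda$ and $(1,0,\dots,0)$ — together with the intermediate value theorem applied to the (continuous) square-root-sum functional along that family — yields a vector $\vec{\lambda}'$ with $\vec\lambda\prec\vec{\lambda}'\prec(1,0,\dots,0)$ and $\sum_i\sqrt{\lambda'_i}=\sqrt{d\wbeta}$ on the nose. Concretely I would use the family obtained by starting from the ordered $\vec\lambda$ and, for a parameter $s$ increasing from $0$, setting $\lambda_1(s)=\lambda_1+s$ and absorbing $-s$ from the tail coordinates in order (emptying $\lambda_d$ first, then $\lambda_{d-1}$, etc.), which keeps the partial sums nondecreasing (hence $\vec\lambda\prec\vec\lambda(s)$) and makes $\sum_i\sqrt{\lambda_i(s)}$ a continuous, strictly decreasing function of $s$ from $S$ down to $1$.

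Finally, once $\vec{\lambda}'$ is in hand, set $\ket{\psi'}=\sum_i\sqrt{\lambda'_i}\ket{ii}$; by construction $\braketinner{\psi'}{\Phi_d}{\psi'}=\tfrac1d\bigl(\sum_i\sqrt{\lambda'_i}\bigr)^2=\wbeta$, matching $\braketinner{\psi}{\Phi_d}{\psi}$, and $\vec{\lambda}'\succ\vec\lambda$ as required. I expect the main obstacle to be purely bookkeeping: verifying that the chosen deformation $\vec\lambda(s)$ genuinely majorizes $\vec\lambda$ at every stage (the partial-sum inequalities must be checked as weight is drained coordinate by coordinate, and one must confirm the vector stays a valid, decreasingly-ordered probability vector throughout), and confirming the strict monotonicity and continuity of the square-root-sum functional along the path so that the intermediate value theorem applies cleanly. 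Everything else — the overlap identity and the reduction $\wbeta\ge\tfrac1d \Rightarrow \sqrt{d\wbeta}\ge1$ — is immediate.
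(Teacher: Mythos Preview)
Your proposal is correct and follows essentially the same strategy as the paper: establish the overlap inequality $\wbeta\le\tfrac1d\bigl(\sum_i\sqrt{\lambda_i}\bigr)^2$ via the Schmidt form $\ket{\psi}=(U\otimes I)\sum_i\sqrt{\lambda_i}\ket{ii}$, then interpolate from $\vec\lambda$ to $(1,0,\dots,0)$ along a one-parameter family of majorizing probability vectors and apply the intermediate value theorem to the continuous square-root-sum functional. The only difference is the choice of path: the paper takes the straight-line convex combination $\vec\lambda'(p)=(1-p)\vec\lambda+p(1,0,\dots,0)$, for which the majorization $\vec\lambda'(p)\succ\vec\lambda$ and the strict monotonicity of $s(\vec\lambda'(p))$ are immediate, whereas your ``drain the tail into $\lambda_1$'' path requires the piecewise bookkeeping you flagged; both arrive at the same conclusion.
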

\begin{proof}
We may suppose without loss of generality that $\ket{\psi}$ is of the same form as Eq.~\eqref{eq:psiform}. Thus
 \begin{align}
  \wbeta = \braketinner{\psi}{\Phi_d}{\psi}
  %&= \frac{1}{d}\biggl|\sum_{i=1}^d\sqrt{\lambda_i}\braketinner{i}{U}{i}\biggr|^2\nonumber\\
  &=\frac{1}{d}\biggl|\sum_{i=1}^d\sqrt{\lambda_i}U_{ii}\biggr|^2\nonumber\\
  &\leq\frac{1}{d}\Bigl(\sum_{i=1}^d\sqrt{\lambda_i}\Bigr)^2,\label{eq:alphaisotbd}
 \end{align}
where we note that $\abs{U_{ii}}\leq 1$ for all $i$ since $U$ is unitary. If $\sum_{i}\sqrt{\lambda_i}=\sqrt{d\wbeta}$ then we may set $\vec{\lambda}'=\vec{\lambda}$ and we are done. Suppose instead that the inequality in Eq.~\eqref{eq:alphaisotbd} is strict. Define a continuous function~\mbox{$s\colon\RR^d\rightarrow \RR$},
\begin{equation}\label{eq:sequation}
 s(\vec{\lambda})=\frac{1}{d}\Bigl(\sum_{i=1}^d\sqrt{\lambda_i}\Bigr)^2.
\end{equation}
We may suppose that the entries of $\vec{\lambda}$ are decreasing. For all $p\in[0,1]$ define the Schmidt vectors
\begin{align*}
 \vec{\lambda}'(p)
 &=(1-p)\vec{\lambda} + p (1,0,\dots,0).
\end{align*}
Note that $s(\vec{\lambda}'(p))$ is continuous and strictly decreasing as a function of $p$ and that 
\[
 \frac{1}{d} = s(\vec{\lambda}'(1)) < \wbeta < s(\vec{\lambda}'(0)) = s(\vec{\lambda}).
\]
By continuity of $s$, there exists a $p\in(0,1)$ such that $s(\vec{\lambda}'(p))=\wbeta$. Finally we note that $\vec{\lambda}'(p)\succ \vec{\lambda}$ for all $p$, which concludes the proof.
\end{proof}

We now supply the proof of Lemma \ref{lem:Eisoequal}. For an entanglement monotone $E$, recall that $E_{\isot}$ is defined as
\[
 E_{\isot}(\wbeta)=\min\bigl\{E(\vec{\psi})\,\big|\, \braketinner{\psi}{\Phi_d}{\psi}=\wbeta\bigr\}.
\]

\begin{proof}[Proof \textup{(}of Lemma \ref{lem:Eisoequal}\textup{)}]
First consider the case $\wbeta\in[\frac{1}{d},1]$. For all pure states $\ket{\psi}$ satisfying $\braketinner{\psi}{\Phi_d}{\psi}=\wbeta$, from Lemma \ref{lem:lem4prec} we can find a pure state $\ket{\psi'}$ with Schmidt coefficients $\vec{\lambda}'$ satisfying $\braketinner{\psi'}{\Phi_d}{\psi'}=s(\vec{\lambda}')=\wbeta$
with $\vec{\lambda}\prec\vec{\lambda}'$. It follows that $E(\vec{\lambda}')=E(\psi')\leq E(\psi)$. Hence we may restrict the minimization in Eq.~\eqref{eq:Eisotpsi} to states of the form $\ket{\psi}=\sum_{i}\sqrt{\lambda_i}\ket{ii}$. This implies that the computation of $E_{\isot}(\wbeta)$ may be simplified to
 \begin{align*}
  E_{\isot}(\wbeta)
  &=\min\Bigl\{E(\vec{\psi})\,\big|\, \ket{\psi}=\sum_{i=1}^d\sqrt{\lambda_i}\ket{ii}\text{ and }s(\vec{\lambda})=a\Bigr\}\\
  &=\min\Bigl\{E(\vec{\lambda})\,\Big|\, \sum_{i=1}^d\sqrt{\lambda_i} = \sqrt{d\wbeta}\Bigr\}
 \end{align*}
as desired.

Last we consider the case when $\wbeta\in[0,\frac{1}{d}]$. Consider the pure state 
\[
 \ket{\psi} = \sqrt{d\wbeta}\,\ket{11}+\sqrt{1-d\wbeta}\,\ket{12}.
\]
Then $\braketinner{\psi}{\Phi_d}{\psi}=\wbeta$, but $E(\psi)=0$ since $\ket{\psi}$ is separable. It follows that $E_{\isot}(\wbeta)=0$. This concludes the proof.
\end{proof}

We now proceed to compute the convex roof of the Vidal monotones for isotropic states. The following lemma shows that $E_k$ vanishes on the isotropic states with $\wbeta\in[0,\frac{k}{d}]$.

\begin{lemma}\label{lem:tkd1}
 For any integer $1\leq k\leq d$, it holds that $E_{k,\isot}(\wbeta)=0$ for all $\wbeta\in[0,\tfrac{k}{d}]$.
\end{lemma}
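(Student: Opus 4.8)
The goal is to show $E_{k,\isot}(\wbeta) = 0$ for $\wbeta \in [0, \tfrac{k}{d}]$, where $E_{k,\isot}$ is the function from Eq.~\eqref{eq:Eisotpsi} with $E = E_k$ the $k$th Vidal monotone. Since $E_k \geq 0$ always, it suffices to exhibit, for each such $\wbeta$, a single pure state $\ket{\psi}$ with $\braketinner{\psi}{\Phi_d}{\psi} = \wbeta$ and $E_k(\psi) = 0$. Recall $E_k(\psi) = \sum_{i=k+1}^d \lambda_i$, so $E_k(\psi) = 0$ exactly when $\ket{\psi}$ has Schmidt rank at most $k$. So the plan is: construct a Schmidt-rank-$k$ pure state whose overlap with the maximally entangled state $\Phi_d$ is exactly $\wbeta$.

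The natural candidate is the ``flat'' rank-$k$ state supported on the first $k$ coordinates, $\ket{\phi} = \tfrac{1}{\sqrt{k}} \sum_{i=1}^k \ket{ii}$, which has $\braketinner{\phi}{\Phi_d}{\phi} = \tfrac{1}{d}\bigl(\sum_{i=1}^k \sqrt{1/k}\bigr)^2 = \tfrac{k}{d}$ — this handles the right endpoint. For general $\wbeta \in [0, \tfrac{k}{d}]$ I would use a Schmidt-rank-$k$ state of the form $\ket{\psi} = \sqrt{t}\,\ket{11} + \sqrt{1-t}\,\ket{12}$ when $k = 1$ (more generally, put the ``extra'' weight on off-diagonal Schmidt basis terms so as not to increase the overlap), but a cleaner uniform approach is: take $\ket{\psi_s} = \sqrt{s}\,\ket{1}\otimes\ket{1'} + \sqrt{1-s}\,\ket{\chi}$ where $\ket{\chi}$ is supported on Schmidt-basis vectors orthogonal to those appearing in $\tfrac1{\sqrt k}\sum_{i=1}^k\ket{ii}$ but still within the first $k$ Schmidt levels, arranged so that $\braketinner{\psi_s}{\Phi_d}{\psi_s}$ sweeps continuously from $0$ to $\tfrac kd$ as $s$ varies. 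Concretely: interpolate between the separable state $\ket{12}$ (overlap $0$) and $\tfrac1{\sqrt k}\sum_{i=1}^k\ket{ii}$ (overlap $\tfrac kd$) through a one-parameter family of rank-$\le k$ states; by the intermediate value theorem applied to the continuous overlap function, every value $\wbeta\in[0,\tfrac kd]$ is attained. Each such state has Schmidt rank $\le k$, hence $E_k = 0$, giving $E_{k,\isot}(\wbeta) \le 0$, and together with $E_k \ge 0$ we conclude equality.

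Alternatively, and perhaps most simply: apply Lemma~\ref{lem:Eisoequal}. For $\wbeta \in [0, \tfrac1d]$ the lemma already gives $E_{k,\isot}(\wbeta) = 0$ directly. For $\wbeta \in [\tfrac1d, \tfrac kd]$, Lemma~\ref{lem:Eisoequal} reduces the problem to minimizing $E_k(\vec\lambda)$ over Schmidt vectors with $\sum_{i=1}^d \sqrt{\lambda_i} = \sqrt{d\wbeta}$; since $\sqrt{d\wbeta} \le \sqrt{k}$, one checks that the vector $\vec\lambda$ supported on $k$ coordinates with $\sqrt{\lambda_i}$ chosen so the constraint holds — e.g.\ concentrating most of the weight on $\lambda_1$ and spreading a vanishing remainder over the next $k-1$ coordinates — satisfies $\sum_{i=1}^k\sqrt{\lambda_i} = \sqrt{d\wbeta}$ while $\lambda_{k+1} = \cdots = \lambda_d = 0$, so $E_k(\vec\lambda) = 0$. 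The Cauchy--Schwarz bound $\sum_{i=1}^k \sqrt{\lambda_i} \le \sqrt{k}\sqrt{\sum_{i=1}^k\lambda_i} \le \sqrt k$ confirms such a $\vec\lambda$ exists precisely when $\sqrt{d\wbeta}\le\sqrt k$, i.e.\ $\wbeta \le \tfrac kd$.

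The main obstacle — really the only subtlety — is just verifying feasibility of the constraint: that for every $\wbeta$ in the stated range one can genuinely realize $\sum_{i=1}^k \sqrt{\lambda_i} = \sqrt{d\wbeta}$ with a probability vector $\vec\lambda$ supported on (at most) $k$ coordinates. This is an elementary continuity/Cauchy--Schwarz argument: the map $\vec\lambda \mapsto \sum_{i=1}^k\sqrt{\lambda_i}$ on the $k$-simplex is continuous with minimum $1$ (at a vertex) and maximum $\sqrt k$ (at the barycenter), so it attains every value in $[1,\sqrt k] \supseteq [\sqrt{d\wbeta} : \wbeta \in [\tfrac1d,\tfrac kd]]$. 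I would write the proof via this route, as it is the shortest and dovetails with the machinery already established in Lemma~\ref{lem:Eisoequal}.
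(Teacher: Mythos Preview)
Your proposal is correct and follows essentially the same route as the paper's proof: both invoke Lemma~\ref{lem:Eisoequal} to handle $\wbeta\in[0,\tfrac1d]$, and for $\wbeta\in[\tfrac1d,\tfrac kd]$ both use a continuity/intermediate-value argument on the set of Schmidt vectors supported on at most $k$ coordinates (the paper phrases it as connectedness of that set together with continuity of $s(\vec\lambda)=\tfrac1d(\sum_i\sqrt{\lambda_i})^2$, you phrase it via the range of $\sum_i\sqrt{\lambda_i}$ on the $(k-1)$-simplex). The arguments are interchangeable.
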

\begin{proof}
Since $E_k$ is an entanglement monotone on pure states, the result of Lemma \ref{lem:Eisoequal} shows that $E_{k,\isot}(\wbeta)=0$ whenever $\wbeta\in[0,\tfrac{1}{d}]$. So we may suppose that $k\geq 2$ and $\wbeta\in[\tfrac{1}{d},\frac{k}{d}]$. Consider the function $s$ defined in Eq.~\eqref{eq:sequation} restricted to the subset of Schmidt vectors $\vec{\lambda}$ that have at most $k$ nonzero entries. The function~$s$ achieves the values $\frac{1}{d}$ and~$\frac{k}{d}$ on this restriction, since
\[
s\big((1,0,\dots,0)\big) = \tfrac{1}{d} 
\quad\text{and}\quad
s\left(\big(\tfrac{1}{k},\dots,\tfrac{1}{k},0,\dots,0)\right) = \tfrac{k}{d} .
\]
The subset of Schmidt vectors in $\RR^d$ containing at most~$k$ nonzero elements is also connected. By continuity of~$s$, for any intermediate value $\wbeta\in[\frac{1}{d},\frac{k}{d})$ there exists a Schmidt vector $\vec{\lambda}$ with at most $k$ nonzero entries satisfying $s(\vec{\lambda})=\wbeta$. Since $E_k(\vec{\lambda})=0$ for all such $\vec{\lambda}$, it follows that $E_{k,\isot}(\wbeta)=0$ whenever $\tfrac{1}{d}\leq\wbeta\leq \tfrac{k}{d}$. 
\end{proof}

\begin{theorem}\label{thm:isoentk}
 Let $k\geq1$ be an integer. It holds that
   \begin{multline}\label{eq:ekisotformula}
  E_{k,\isot}(\wbeta) =\\\left\{\begin{array}{ll}
                           0, &  \wbeta\in [0,\frac{k}{d}]\\
                           \tfrac{1}{d}\Bigl(\sqrt{(1-\wbeta)k}-\sqrt{\wbeta(d-k)}\Bigr)^2, & \wbeta\in[\frac{k}{d}, 1].
                          \end{array}\right.
 \end{multline}
\end{theorem}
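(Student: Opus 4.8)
The plan is to reduce the theorem to a one‑parameter optimization over Schmidt vectors and combine it with the two lemmas already established. For $\wbeta\in[0,\frac{k}{d}]$ there is nothing to prove: Lemma~\ref{lem:tkd1} gives $E_{k,\isot}(\wbeta)=0$, which is exactly the first branch of~\eqref{eq:ekisotformula}. (If $k\geq d$ the statement is trivial, since $E_k(\vec\lambda)=0$ identically, so I assume $1\leq k\leq d-1$.) For $\wbeta\in[\frac{k}{d},1]$, Lemma~\ref{lem:Eisoequal} reduces the problem to computing
\[
 E_{k,\isot}(\wbeta)=\min\Bigl\{\,\textstyle\sum_{i=k+1}^{d}\lambda_i\ \Big|\ \sum_{i=1}^{d}\sqrt{\lambda_i}=\sqrt{d\wbeta}\,\Bigr\}
\]
over probability vectors $\vec\lambda$ (which we may take to be sorted decreasingly, since both the objective and the constraint are symmetric). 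So the content is to show this minimum equals $\tfrac1d\bigl(\sqrt{(1-\wbeta)k}-\sqrt{\wbeta(d-k)}\bigr)^2$ and is attained by a vector of the form~\eqref{eq:lambdaisotoptmain}.

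For the lower bound I would take an arbitrary feasible $\vec\lambda$, split the indices into the block $A=\{1,\dots,k\}$ and the block $B=\{k+1,\dots,d\}$, and write $S_B=\sum_{i\in B}\lambda_i=E_k(\vec\lambda)$, $S_A=1-S_B$. Applying Cauchy--Schwarz separately on each block gives $\sum_{i\in A}\sqrt{\lambda_i}\leq\sqrt{kS_A}$ and $\sum_{i\in B}\sqrt{\lambda_i}\leq\sqrt{(d-k)S_B}$, so the constraint forces
\[
 \sqrt{d\wbeta}\ \leq\ g(S_B):=\sqrt{k(1-S_B)}+\sqrt{(d-k)S_B}.
\]
The function $g$ is a sum of two concave functions, hence concave on $[0,1]$; it increases strictly from $g(0)=\sqrt{k}$ to its maximum $g(\tfrac{d-k}{d})=\sqrt{d}$ and then decreases to $g(1)=\sqrt{d-k}$. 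Since $\wbeta\in[\frac{k}{d},1]$ we have $\sqrt{d\wbeta}\in[\sqrt{k},\sqrt{d}]$, so the inequality $g(S_B)\geq\sqrt{d\wbeta}$ forces $S_B$ to be at least the unique root of $g(S_B)=\sqrt{d\wbeta}$ lying in $[0,\tfrac{d-k}{d}]$. A short computation (square once and simplify) identifies that root as $S_B^\star:=\tfrac1d\bigl(\sqrt{(1-\wbeta)k}-\sqrt{\wbeta(d-k)}\bigr)^2$, which therefore lower‑bounds $E_k(\vec\lambda)$.

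For the matching upper bound I would exhibit the minimizer explicitly: put $t=\tfrac{1-S_B^\star}{k}$ and take $\vec\lambda$ of the form~\eqref{eq:lambdaisotoptmain}, i.e.\ $k$ entries equal to $t$ and $d-k$ entries equal to $\tfrac{1-kt}{d-k}=\tfrac{S_B^\star}{d-k}$. One checks $S_B^\star\leq\tfrac{d-k}{d}$ (equivalently $t\geq\tfrac{1-kt}{d-k}$, so the entries are nonnegative and genuinely in decreasing order), that $\sum_i\lambda_i=1$, and that for this vector the two Cauchy--Schwarz steps above are equalities, so $\sum_i\sqrt{\lambda_i}=g(S_B^\star)=\sqrt{d\wbeta}$ while $E_k(\vec\lambda)=S_B^\star$. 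Combined with the lower bound, this gives $E_{k,\isot}(\wbeta)=S_B^\star$, completing~\eqref{eq:ekisotformula}. I expect the only genuinely fiddly part to be the algebra verifying both that $g(S_B^\star)=\sqrt{d\wbeta}$ and that $S_B^\star$ is the correct (smaller) root of $g(S_B)=\sqrt{d\wbeta}$; everything else follows immediately from concavity of $\sqrt{\cdot}$ together with Lemmas~\ref{lem:Eisoequal} and~\ref{lem:tkd1}.
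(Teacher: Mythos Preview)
Your argument is correct. The reduction to the constrained optimisation over Schmidt vectors and the treatment of the range $\wbeta\in[0,\tfrac{k}{d}]$ via Lemma~\ref{lem:tkd1} are identical to the paper's. The difference is in how the constrained minimum is identified for $\wbeta\in[\tfrac{k}{d},1]$: the paper simply asserts, appealing to Lagrange multipliers, that the minimiser has the block form~\eqref{eq:lambdaisotopt}, and then solves for $t$ algebraically. You instead split the coordinates into the two blocks, apply Cauchy--Schwarz on each to obtain $\sqrt{d\wbeta}\leq g(S_B)$, and use the concavity and monotonicity of $g$ on $[0,\tfrac{d-k}{d}]$ to turn this into a sharp lower bound $S_B\geq S_B^\star$; the block-constant vector then realises equality. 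This is a cleaner and more self-contained route: it bypasses any discussion of why the Lagrange critical point is the global minimum, and the equality case of Cauchy--Schwarz automatically singles out the form~\eqref{eq:lambdaisotoptmain}. One small point worth making explicit in your write-up is that, because the entries are sorted decreasingly, one always has $S_B\leq\tfrac{d-k}{d}$, so only the smaller root of $g(S_B)=\sqrt{d\wbeta}$ is relevant---you use this implicitly when you say the inequality ``forces $S_B$ to be at least'' that root.
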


\begin{proof}
It was shown in Lemma \ref{lem:tkd1} that $E_{k,\isot}(\wbeta)=0$ whenever $\wbeta\in [0,\frac{k}{d}]$, so it remains to compute $E_{k,\isot}(\wbeta)$ when $\wbeta\in[\frac{k}{d}, 1]$. Computing $E_{k,\isot}(\wbeta)$ may be restated as the following optimization problem:
 \begin{align*}
  \text{maximize: }  & \lambda_1+\cdots+\lambda_k\\
  \text{subject to: }& \sum_{i=1}^d\lambda_i=1\text{ and } \sum_{i=1}^d\sqrt{\lambda_i}= \sqrt{d\wbeta}.
 \end{align*}
It is not difficult to see (by using Lagrange multipliers) that the optimal $\vec{\lambda}$ must be of the form
\begin{equation}\label{eq:lambdaisotopt}
 \vec{\lambda} = \bigl(\underbrace{t,\dots,t}_{k},\underbrace{\tfrac{1-kt}{d-k},\dots,\tfrac{1-kt}{d-k}}_{d-k}\bigr)
\end{equation}
for some $t\in[\frac{1}{d},\frac{1}{k}]$. For $\vec{\lambda}$ of this form, we see that
\begin{align*}
 \sum_{i=1}^d\sqrt{\lambda_i}
 &= k\sqrt{t}+(d-k)\sqrt{\tfrac{1-kt}{d-k}}\nonumber\\
 &= k\sqrt{(1-t)}+\sqrt{(d-k)(1-kt)}.
\end{align*}
For $\wbeta\in[\frac{k}{d},1]$, the largest positive value of $t$ that satisfies $k\sqrt{(1-t)}+\sqrt{(d-k)(1-kt)} = \sqrt{d\wbeta}$
is given by 
\begin{equation}\label{eq:tisotopt}
t = \frac{1}{k}-\frac{1}{kd}\bigl(\sqrt{(1-\wbeta)k}-\sqrt{\wbeta(d-k)}\bigr)^2.
 \end{equation}
 For $\vec{\lambda}$ as given in Eq.~\eqref{eq:lambdaisotopt} with $t$ as in Eq.~\eqref{eq:tisotopt}, it follows that
\begin{align*}
 E_{k,\isot}(\vec{\lambda}) 
 &= 1-(\lambda_1+\cdots+\lambda_k) \\
 &=1- kt\\
 &= \tfrac{1}{d}\bigl(\sqrt{(1-\wbeta)k}-\sqrt{\wbeta(d-k)}\bigr)^2,
\end{align*}
as desired.
\end{proof}

%%%%%%%%%%%%%%%%%%%%%%%%%%%%%%%%%%%%%%%%%%%%%%%%%%%%%%%%%%%%
\section{Convex roofs on further symmetric states}
\label{sec:extensions}
%-----------------------------------------------------------
\subsection{Proof of Lemma \ref{lem:coorbwer}}

\begin{proof}[Proof \textup{(}of Lemma \ref{lem:coorbwer} part 1\textup{)}]
By convexity, it suffices to check only the states on the boundary. That is, we check $\rho_{\wer}^G(\walpha,\wbeta)$ with $\wbeta=0$ and $\wbeta=1-\walpha$. In both cases, we find a pure state $\ket{\psi}\in\orb_{\wer}(\psi_\walpha)$ such that $\calT_{\wer}^G(\psi)=\rho_{\wer}^G(\walpha,\wbeta)$. 
 
Note that $\bra{\psi_\walpha}Q\ket{\psi_\walpha} = 0$. Thus $\calT_{\wer}^G(\psi_\walpha)=\rho_{\wer}^G(a,0)$ and thus $\rho_{\wer}^G(a,0)\in\co(\orb_{\wer}(\psi_a))$. For $\rho_{\wer}^G(a,1-a)$, consider the unitary block matrix
\begin{equation}\label{eq:Uwerlem}
 U=\begin{pmatrix}
    \frac{1}{\sqrt{2}} & \frac{1}{\sqrt{2}} &  \\
    \frac{i}{\sqrt{2}} & \frac{-i}{\sqrt{2}} &  \\
     & &  \mathds{1} 
   \end{pmatrix}
\end{equation}
that acts non-trivially only on the span of $\{\ket{1},\ket{2}\}$. Then
\[
 U\otimes U \ket{\psi_\walpha} = \sqrt{\frac{1-\walpha}{2}}(\ket{11}+\ket{22}) - i\sqrt{\frac{\walpha}{2}}(\ket{12}-\ket{21}).
\]
It holds that $\bra{\psi_\walpha}U^\dagger \otimes U^\dagger  Q U\otimes U \ket{\psi_\walpha} = 1-\walpha$ and thus $\calT_{\wer}^G(U\otimes U\psi_\walpha)=\rho_{\wer}^G(\walpha,1-\walpha)$, which completes the proof.
\end{proof}

\begin{proof}[Proof \textup{(}of Lemma \ref{lem:coorbwer} part 2\textup{)}] By convexity, it suffices to check only the states on the boundary, i.e.~$\rho_{\wer}^G(\walpha,\wbeta)$ with $\wbeta=0$ and $\wbeta=\frac{2(1-\walpha)}{d}$. In both cases, we will find a pure state $\ket{\psi}\in\orb_{\wer}(\psi_\walpha)$ such that $\calT_{\Orth}(\psi)=\rho_{\text{O}}^G(\walpha,\wbeta)$. Note that $\bra{\psi_\walpha}\Phi_d\ket{\psi_\walpha} = 0$. Thus $\calT_{\Orth}^G(\psi_\walpha)=\rho_{\text{O}}(a,0)$. With the same $U$ as in Eq.~\eqref{eq:Uwerlem}, it holds that $\bra{\psi_\walpha}U^\dagger \otimes U^\dagger  \Phi_d U\otimes U \ket{\psi_\walpha} = \frac{2(1-\walpha)}{d}$. This implies that $\calT_{\text{O}}(U\otimes U\psi_\walpha)=\rho_{\wer}^G(\walpha,\frac{2(1-\walpha)}{d})$ which completes the proof. 
\end{proof}

%-----------------------------------------------------------
\subsection{Proof of Lemma \ref{lem:coorbiso}}

Recall that, for any entanglement monotone $E$ and any $b\in[\frac{1}{d},1]$, the pure state that minimizes Eq.~\eqref{eq:Eisoequal} will be of the form
\begin{equation}\label{eq:phiiso}
\ket{\phi_b}=\sum_{i=1}^d\sqrt{\lambda_i}\ket{ii} 
\end{equation}
where the Schmidt coefficients satisfy $\sum_{i=1}^d\sqrt{\lambda}=\sqrt{db}$. 

\begin{proof}[Proof \textup{(}of Lemma \ref{lem:coorbiso} part 1\textup{)}]
 As above, it suffices to check only the states on the boundary. That is, we check $\rho_{\isot}^G(\walpha,\wbeta)$ with $\walpha=0$ and $\walpha=1-\wbeta$. In both cases, we will find a pure state $\ket{\psi}\in\orb_{\isot}(\phi_b)$ such that $\calT_{\isot}^G(\psi)=\rho_{\isot}^G(\walpha,\wbeta)$. Note that $\bra{\phi_{\wbeta}}Q\ket{\phi_{\wbeta}}=1$ and thus 
 \[
    \bra{\phi_{\wbeta}}(Q-\Phi_d)\ket{\phi_{\wbeta}}=1-\wbeta.
 \]
  Hence $\calT_{\isot}^G(\phi_{\wbeta})=\rho_{\isot}^G(1-\wbeta,\wbeta)$ and thus $\rho_{\isot}^G(1-\wbeta,\wbeta)\in\co(\orb_{\isot}(\phi_b))$. For $\rho_{\isot}^G(0,b)$, we use the discrete Fourier transform unitary matrix
  \[
    U=\frac{1}{\sqrt{d}}\sum_{j,k=1}^d \omega^{jk}\ket{j}\bra{k},
  \]
 where $\omega=e^{\frac{2i\pi}{d}}$ is the $d$\textsuperscript{th} root of unity. It holds that
 \begin{align*}
  \bra{\phi_b}U^{\dagger}\otimes\overline{U}{}&^{\dagger}QU\otimes\overline{U}\ket{\phi_b} 
     &= \frac{1}{d^2}\sum_{k=1}^d\Bigl(\sum_{j=1}^d\sqrt{\lambda_j}\abs{\omega^{jk}}^2\Bigr)^2\\ 
     &= \frac{1}{d}\Bigl(\sum_{j=1}^d\sqrt{\lambda_j}\Bigr)^2\\
     &=\wbeta.
 \end{align*}
 Thus $\bra{\phi_b}U^{\dagger}\otimes \overline{U}{}^{\dagger}(Q-\Phi_d)U\otimes\overline{U}\ket{\phi_{\wbeta}} =0$. This implies that $\calT_{\isot}^G(U\otimes\overline{U}\ket{\phi_{\wbeta}}) = \rho_{\isot}^G(0,\wbeta)$, which completes the proof.
\end{proof}

\begin{proof}[Proof \textup{(}of Lemma \ref{lem:coorbiso} part 2\textup{)}]
 It suffices to check only the states on the boundary. That is, we check $\rho_{\Orth}^G(\walpha,\wbeta)$ with $\walpha=0$ and $\walpha=\frac{d(1-b)}{2(d-1)}$. Note that $\braketinner{\phi_b}{W_{\!-}}{\phi_b}=0$ and thus $\calT_{\Orth}(\phi_b)=\rho_{\Orth}(0,b)$. Hence $\rho_{\Orth}(0,b)\in\co(\orb_{\isot}(\phi_b))$. 
 For $\rho_{\Orth}(\frac{d(1-b)}{2(d-1)},b)$, it suffices to find a unitary $U$ such that 
 \[
  \braketinner{\phi_b}{(U\otimes\overline{U})^\dagger W_{\!-}(U\otimes\overline{U})}{\phi_b} \geq \frac{d(1-b)}{2(d-1)}.
 \]
 We split the proof into two parts. First suppose that $\vec{\lambda}$ is of the form
 \[
  \vec{\lambda}=\bigl(t,\dots,t,\tfrac{1-kt}{d-k},\dots,\tfrac{1-kt}{d-k}\bigr)
 \]
 with $\ket{\phi_b}=\sum_{i=1}^d\sqrt{\lambda_i}\ket{ii}$ and 
 \begin{align*}
  \Bigl(\sum_{i=1}^d\sqrt{\lambda_i}\Bigr)^2 & = \bigl(k\sqrt{t}+ \sqrt{(d-k)(1-kt)}\bigr)^2 = db.
 \end{align*}
 For distinct indices $j,k\in\{1,2,\dots,d\}$ with $j<k$, define the unitary matrices
\begin{align*}
 U_{j,k} 
     &= \sum_{l\neq j,k} \ketbra{l}{l}+  \frac{1}{\sqrt{2}}\bigl(\ketbra{j}{j} +\ketbra{j}{k} + i\ketbra{k}{j} -i\ketbra{k}{k}\bigr)%\\
%      &=\begin{pmatrix}
%         1 &        &                    & &                     &  & \\
%           & \ddots &                    & &                     &  & \\
%           &        & \frac{1}{\sqrt{2}} & & \frac{1}{\sqrt{2}}  &  & \\
%           &        &                    & &                     &  &\\
%           &        & \frac{i}{\sqrt{2}} & & \frac{-i}{\sqrt{2}} &  & \\
%           &        &                    & &                     & \ddots&\\
%           &        &                    & &                     & &1
%        \end{pmatrix}
\end{align*}
that act nontrivially only on the subspace spanned by $\{\ket{j},\ket{k}\}$ and trivially elsewhere. Note that $U$ in Eq.~\eqref{eq:Uwerlem} is $U_{1,2}$ in this notation. Furthermore note that
\[
 \braketinner{\phi_b}{(U_{j,k}\otimes\overline{U_{j,k}})^\dagger  W_{\!-}(U_{j,k}\otimes\overline{U_{j,k}})}{\phi_b} = \frac{\left(\sqrt{\lambda_j}-\sqrt{\lambda_k}\right)^2}{2}.
\]
%and that $U_{j,k}U_{j',k'}=U_{j',k'}U_{j,k}$ if $j,k,j',k'$ are all distinct. 
Let $U=(U_{1,d})(U_{2,d-1})\cdots (U_{m,d+1-m})$, where $m=\min\{k,d-k\}$. Then
\begin{align*}
 &\bra{\phi_b}(U\otimes\overline{U})^\dagger W_{\!-} (U\otimes\overline{U})\ket{\phi_b} \\
  &\quad= \frac{(\sqrt{\lambda_1}-\sqrt{\lambda_d})^2}{2} +  \cdots + \frac{(\sqrt{\lambda_m}-\sqrt{\lambda_{d-m+1}})^2}{2}\\
  %&\quad= \frac{m}{2}\left(\sqrt{t}-\sqrt{\tfrac{1-kt}{d-k}}\right)^2\\
  &\quad= \frac{m}{2(d-k)}\left(\sqrt{(d-k)t}-\sqrt{1-kt}\right)^2\\
  %&\quad= \frac{m}{2(d-k)}\left((d-k)t + 1 - kt -2\sqrt{t(1-kt)(d-k)}\right)\\
  &\quad= \frac{m}{2(d-k)}\left(\frac{d(1-b)}{k}\right)\\
  &\quad=  \frac{d(1-b)}{2}\frac{\min\{k,d-k\}}{k(d-k)}\\
  %&\quad= \frac{d(1-b)}{2}\frac{1}{\max\{k,d-k\}}\\
  &\quad\geq \frac{d(1-b)}{2}\frac{1}{d-1}
\end{align*}
 with equality if and only if $k=d-1$ or $k=1$ (or $b=1$). The result follows. 
 
 The proof of the other case is analogous. In this case, suppose that $\vec{\lambda}$ is of the form 
 \[
  \vec{\lambda}=\bigl(t,\dots,t,1-kt,0,\dots,0\bigr)
 \]
 with $(\sum_{i=1}^d\sqrt{\lambda_i})^2  = (k\sqrt{t}+ \sqrt{1-kt})^2 = db$. Using the unitary $U=(U_{1,d})(U_{2,d-1})\cdots (U_{\lfloor\frac{d}{2}\rfloor,d-\lfloor\frac{d}{2}\rfloor+1})$, it is not difficult to show that 
 \[
  \bra{\phi_b}(U\otimes\overline{U})^\dagger W_{\!-} (U\otimes\overline{U})\ket{\phi_b}\geq \frac{d(1-b)}{2(d-1)}
 \]
 with equality if and only if $k=d-1$ (or $b=1$). 
 \end{proof}

%%%%%%%%%%%%%%%%%%%%%%%%%%%%%%%%%%%%%%%%%%%%%%%%%%%%%%%%%%%%%%%%

\section{Pure to isotropic conversion witness}
\label{sec:conversionwitnesscomputation}

For a fixed Schmidt vector $\vec{\lambda}$ we define
\[
 f_k(\vec{\mu}) = E_k(\vec{\lambda})-E_k(\vec{\mu}),
\]
and write this as $f_k(\vec{\mu}) = \mu_1+\cdots+\mu_k - (\lambda_1+\cdots+\lambda_k)$.
The goal is to compute
\[
 W_{\isot}(\vec{\lambda},\vec{\mu})=\max_{\vec{\mu}}\min_k  f_k(\vec{\mu}).
\]
We can split this into $d-1$ separate optimization problems as follows. For each $k\in\{1,\dots,d-1\}$, we maximize $f_k(\vec{\mu})$ over all $\vec{\mu}$ for which $k$ yields the minimum. That is, maximize over all $\vec{\mu}$ for which
$f_k(\vec{\mu})\leq f_\ell(\vec{\mu})$ for all $\ell\in\{1,\dots,d-1\}$. Minimizing this over all $k$ yields the desired result
\[
 W_{\isot}(\vec{\lambda},\vec{\mu}) = \min_k \left[\max_{\vec{\mu}}\left\{f_k(\vec{\mu})\middle|f_k(\vec{\mu})\leq f_\ell(\vec{\mu})\text{ for all } \ell\right\} \right]
\]
where the maximizations are taken over all Schmidt vectors satisfying $\sum_{i=1} ^d\sqrt{\mu_i}=\sqrt{d\wbeta}$.

For each $k$, these suboptimization problems can be rewritten as follows:
\begin{align*}
 \text{maximize: }  \, &\sum_{i=1}^k\mu_i\\
 \text{subject to: }\, &\sum_{i=1}^d\mu_i = 1 \\
                       &\sum_{i=1}^d\sqrt{\mu_i} = \sqrt{d\wbeta} \\
                       &\sum_{i=2}^\ell\mu_i\leq \sum_{i=2}^\ell\lambda_i\text{ for all }\ell\in\{1,\dots,k-1\}\\
                       &\sum_{i=k+1}^{\ell+1}\lambda_i\leq \sum_{i=k+1}^{\ell+1}\mu_i\text{ for all }\ell\in\{k+1,\dots,d-1\}.
\end{align*}
There are $d$ constraints for these $d$-dimensional optimization problems, so we may use the method of Lagrange multipliers to find optimal solutions. 

%%%%%%%%%%%%%%%%%%%%%%%%%%%%%%%%%%%%%%%%%%%%%%%%%%%%%%%%%%%%%%%%

\end{document}